\newtheorem{assumption}{Assumption}
\newtheorem{definition}{Definition}
\newtheorem{proposition}{Proposition}
\newtheorem{lemma}{Lemma}
\newtheorem{theorem}{Theorem}
\newtheorem{corollary}{Corollary}
\theoremstyle{remark}
\newtheorem{remark}{Remark}
\newcommand{\E}{\mathbb{E}}
\newcommand{\R}{\mathbb{R}}
\newcommand{\Fcal}{\mathcal{F}}
\newcommand{\Lcal}{\mathcal{L}}
\newcommand{\argmax}{\operatorname*{arg\,max}}
\title{\Large \bf The Three-Dimensional Decomposition of Volatility Memory}
\author[]{\normalsize Ziyao Wang\thanks{\texttt{ziywang@ttu.edu}}, A. Alexandre Trindade, Svetlozar T. Rachev\\
\normalsize Department of Mathematics and Statistics, Texas Tech University}
\date{\normalsize \today}
\begin{document}
\maketitle

\begin{abstract}
This paper develops a three-dimensional decomposition of volatility memory into orthogonal components of \emph{level}, \emph{shape}, and \emph{tempo}. The framework unifies regime-switching, fractional-integration, and business-time approaches within a single canonical representation that identifies how each dimension governs persistence strength, long-memory form, and temporal speed. We establish conditions for existence, uniqueness, and ergodicity of this decomposition and show that all GARCH-type processes arise as special cases. Empirically, applications to SPY and EURUSD (2005--2024) reveal that volatility memory is state-dependent: regime and tempo gates dominate in equities, while fractional-memory gates prevail in foreign exchange. The unified tri-gate model jointly captures these effects. By formalizing volatility dynamics through a level--shape--tempo structure, the paper provides a coherent link between information flow, market activity, and the evolving memory of financial volatility.
\end{abstract}

\section{Introduction}

Volatility is persistent, asymmetric, heavy–tailed, and evolves through episodes of abrupt stress and prolonged calm.  Classical models that impose a single exponential memory scale, such as the standard GARCH(1,1), struggle to capture this heterogeneity: they under–react to slow‐moving regimes and over–react to transient bursts.  Three complementary literatures have addressed fragments of this problem.  First, regime–switching and smooth–transition volatility models allow persistence to vary with latent or observable states, describing calm–versus–crisis behavior \citep{Hamilton1989Regime,TerAsvirta1998STGARCH}.  Second, long–memory models such as FIGARCH replace exponential decay with fractional kernels that yield hyperbolic autocorrelation \citep{Baillie1996Figarch}.  Third, continuous–time approaches accelerate or decelerate the passage of economic time through business–time or stochastic–clock formulations, linking volatility clustering to trading intensity \citep{Clark1973BusinessTime,AndersenBollerslev1998RV}.

This paper unifies these ideas into an empirically tractable \emph{gated volatility framework} in which the strength, shape, and tempo of persistence each respond smoothly to observable market conditions.  The first gate, \emph{RSM}, modulates the persistence coefficient through a logistic map of market features, producing a continuous regime–switching memory process.  The second, \emph{G–FIGARCH}, endogenizes the fractional integration order so that the degree of long memory itself becomes state–dependent.  The third, \emph{G–Clock}, introduces an observable business–time deformation that speeds or slows the effective decay of shocks.  Each gate admits transparent economic interpretation through features such as realized volatility, volume, and implied volatility, linking statistical memory directly to information flow and market activity.

Beyond these separate pillars, we develop pairwise and fully unified specifications—including a tri–gate model (TG–Vol)—that nest regime, fractional, and clock mechanisms within a single recursion.  This architecture clarifies how persistence level, memory shape, and temporal speed can interact while preserving theoretical tractability.  We establish positivity, finite–moment, and geometric–ergodicity conditions for all gated systems; prove identification through distinct functional and spectral signatures; and show that quasi–maximum likelihood and Whittle–regularized estimation remain consistent and asymptotically normal under mild assumptions.

Empirically, the framework is evaluated on broad U.S. equity and foreign–exchange datasets (SPY/ES and EURUSD) over 2005–2024.  Rolling–window forecasts, density–based loss metrics, and VaR/ES backtests \citep{FisslerZiegel2016,Patton2011FXRV} reveal that volatility memory is \emph{state–dependent} and \emph{market–specific}.  On EURUSD, the long–memory gate (G–FIGARCH) dominates variance forecasting, while on SPY the regime and clock gates (RSM, G–Clock) deliver superior tail–risk timing.  Across markets, crises raise the persistence and fractional–memory gates while compressing the business–time speed—consistent with faster information flow during stress.

By integrating smooth–transition, fractional, and time–change mechanisms in a single theoretical and empirical framework, the paper bridges strands of volatility modeling that were previously disjoint.  The resulting family of gated models provides a coherent, interpretable, and statistically rigorous foundation for studying how market conditions reshape the dynamics of volatility memory.

\section{Literature Review}\label{sec:lit}
This section synthesizes three literatures: (i) regime-switching/smooth-transition volatility, (ii) fractional integration in volatility, and (iii) business-time (time-change) dynamics.

\subsection{Regime Switching and Smooth Transition in Volatility}
Regime-switching volatility dates to \citet{Hamilton1989Regime}, with latent Markov states capturing shifts in mean and variance. Smooth-transition GARCH (ST-GARCH) later introduced a continuous gate $p_t$ mapping past information into a blending weight between regimes, commonly via a logistic transition \citep{TerAsvirta1998STGARCH}. The gate is often a function of lagged shocks or exogenous features, $p_t=\sigma(\gamma^\top z_{t-1})$, yielding a \emph{soft} rather than abrupt switch. This literature documents improved fit during crisis vs.\ calm episodes, interpretable transition surfaces, and meaningful policy or microstructure covariates.

From a methodological standpoint, smooth-transition models permit straightforward QMLE and robust inference under weak moment conditions. Identification typically requires variation in $z_{t-1}$ and restrictions on overlap between blended regimes. Our RSM adopts this logic but focuses the blending on the \emph{persistence coefficient} $\beta_t$, keeping level and leverage channels orthogonal to persistence for interpretability In our RSM specification, the gate acts directly on the persistence coefficient rather than on the conditional mean, isolating regime effects on volatility memory. This explicitly links to the level dimension of our framework..

\subsection{Fractional Integration and Long Memory in Volatility}
FIGARCH \citep{Baillie1996Figarch} shifts attention from regime-dependent levels to the \emph{shape} of memory. Instead of a short-memory exponential kernel, FIGARCH imposes a fractional difference $(1-\Lcal)^d$ on the innovation variance, inducing hyperbolic decay and long-range dependence when $d\in(0,1/2)$. Empirically, realized variance autocorrelations decline slowly, and spectral power accumulates at low frequencies, both consistent with long memory. Yet, a \emph{fixed} $d$ can be too rigid: crisis windows may exhibit stronger long memory than quiet windows. Our G-FIGARCH gate $d_t=\bar d\cdot \sigma(\gamma^\top z_{t-1})$ links the order to observables, making the low-frequency slope itself state-dependent while preserving tractability We note that FIGARCH nests GARCH and IGARCH as special cases, and empirical estimates often find $d \in (0, 0.3)$. This corresponds to the shape gate in our canonical decomposition..

\subsection{Business Time and Stochastic Clocks}
The notion that markets operate in \emph{business time}---fast when trading is heavy, slow when activity wanes---goes back to \citet{Clark1973BusinessTime} and underpins modern realized-variance econometrics \citep{AndersenBollerslev1998RV}. In continuous time, time-changed processes and subordinators formalize this acceleration. Our G-Clock adapts this idea to discrete-time volatility: $\Delta\tau_t=\exp(\eta^\top z_{t-1})$ implies $\beta_t=\exp(-\kappa\Delta\tau_t)\in(0,1)$, converting activity shocks into endogenous changes in persistence \emph{tempo} without freely gating $\beta_t$ itself. This creates a sharp conceptual distinction from RSM while remaining estimable by standard methods. Explicitly, $\Delta\tau_t$ may be proxy-based (e.g., realized variance, volume, order-flow imbalance) to highlight observability. This functions as the tempo gate.

\section{Theory}\label{sec:theory}

%======================== 3.1 ========================%
\subsection{Canonical Decomposition of Volatility Memory}
\label{sec:canonical-decomposition}

This subsection establishes a canonical (level–tempo–shape) decomposition for volatility
``memory kernels'' that drive conditional variance dynamics. We work at the level of kernels,
so that all concrete recursions (e.g., GARCH/FIGARCH/time–changed models) are covered
as special cases in Section~\ref{sec:universality}.

\paragraph{Set–up and notation.}
Let $(\Omega,\mathcal{F},\mathbb{P})$ be a complete probability space and let
$\{\varepsilon_t\}_{t\in\mathbb{Z}}$ be i.i.d.\ with
$\mathbb{E}[\varepsilon_t]=0$, $\mathbb{E}[\varepsilon_t^2]=1$, and $\mathbb{E}[|\varepsilon_t|^{2+\delta}]<\infty$ for some $\delta>0$.
We consider volatility dynamics for demeaned returns $r_t=\sqrt{h_t}\,\varepsilon_t$ driven by a
nonnegative kernel acting on past squared shocks. Abstractly, a (time–invariant) kernel is a Borel
measurable function $f:\mathbb{R}_+\to[0,\infty)$ (or, in discrete time, a sequence $\{\psi_k\}_{k\ge 1}$)
which determines a linear functional of past $\varepsilon_{t-k}^2$ entering the conditional variance.
The continuous representation below covers discrete recursions via step–function embedding.

\begin{assumption}[Admissible memory kernels]\label{ass:admissible-kernel}
A measurable $f:\mathbb{R}_+\to[0,\infty)$ is \emph{admissible} if:
\begin{enumerate}
\item $f\in L^1(\mathbb{R}_+)$ with total mass $M:=\int_0^\infty f(u)\,du\in(0,\infty)$;
\item $f$ has finite first moment $\int_0^\infty u\,f(u)\,du<\infty$.
\end{enumerate}
We denote the class by $\mathcal{K}:=\big\{f\ge 0:\ f\in L^1(\mathbb{R}_+),\ \int u f(u)\,du<\infty\big\}$.
\end{assumption}

\begin{remark}[Coverage of discrete kernels and classical models]\label{rem:discrete-embedding}
Given a nonnegative sequence $\{\psi_k\}_{k\ge 1}$ with $\sum_k \psi_k<\infty$ and $\sum_k k\,\psi_k<\infty$,
define the step–function embedding
\[
f(u)\;:=\;\sum_{k=1}^\infty \psi_k\,\mathbf{1}_{[k-1,k)}(u),\qquad u\ge 0.
\]
Then $f\in\mathcal{K}$ with $M=\sum_k\psi_k$ and $\int u f(u)\,du=\sum_k\!\big(\tfrac{2k-1}{2}\big)\psi_k<\infty$.
Hence GARCH–type exponential tails, FIGARCH–type hyperbolic tails with $d<\tfrac{1}{2}$ (so that $\sum k\,\psi_k<\infty$),
and mixtures thereof are covered.
\end{remark}

\paragraph{Interpretation.} Intuitively, $M$ measures the \emph{aggregate strength} of memory,
$\int u f(u)\,du/M$ is a \emph{characteristic time scale}, and the residual shape after removing
mass and scale captures the \emph{form} (exponential vs.\ hyperbolic decay, etc.). The next theorem
formalizes this as a unique decomposition.

\begin{definition}[Normalized shape class]\label{def:shape-class}
Define
\[
\mathcal{G}\;:=\;\Big\{g:\mathbb{R}_+\to[0,\infty)\ \text{measurable}:\ \int_0^\infty g(u)\,du=1,\
\int_0^\infty u\,g(u)\,du=1\Big\}.
\]
Elements of $\mathcal{G}$ are \emph{shapes} normalized to unit mass and unit first moment.
\end{definition}

\begin{theorem}[Canonical level–tempo–shape decomposition]\label{thm:canonical}
Let $f\in\mathcal{K}$ be an admissible kernel with
\[
M\;:=\;\int_0^\infty f(u)\,du\in(0,\infty),\qquad
\mu\;:=\;\frac{1}{M}\int_0^\infty u\,f(u)\,du\in(0,\infty).
\]
Define $g:\mathbb{R}_+\to[0,\infty)$ by
\begin{equation}\label{eq:g-def}
g(u)\;:=\;\frac{\mu}{M}\,f(\mu u),\qquad u\ge 0.
\end{equation}
Then $g\in\mathcal{G}$ and, for all $u\ge 0$,
\begin{equation}\label{eq:canonical-factorization}
f(u)\;=\;M\cdot\frac{1}{\mu}\,g\!\Big(\frac{u}{\mu}\Big).
\end{equation}
Conversely, given any $(M,\mu,g)\in(0,\infty)\times(0,\infty)\times\mathcal{G}$, the formula
$f(u)=M\,\mu^{-1}g(u/\mu)$ produces an admissible kernel in $\mathcal{K}$ with the above $(M,\mu)$.
\end{theorem}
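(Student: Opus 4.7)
The plan is that this theorem is at heart a change-of-variables identity: the positive cone $\mathcal{K}$ fibers over the two-parameter family $(0,\infty)^2$ of mass--timescale pairs, with fibers canonically parametrized by the two-moment-normalized class $\mathcal{G}$. I would proceed by (i) computing the forward map $f\mapsto(M,\mu,g)$ and verifying its range lies in the claimed product space; (ii) inverting it to recover the factorization \eqref{eq:canonical-factorization}; and (iii) checking that the converse construction $(M,\mu,g)\mapsto f$ lands back in $\mathcal{K}$ with the correct invariants.

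For the forward direction, I would fix $f\in\mathcal{K}$ and read off $M$ and $\mu$ as in the statement; both are strictly positive and finite by Assumption~\ref{ass:admissible-kernel} together with nonnegativity of $f$. Defining $g$ via \eqref{eq:g-def}, the two conditions cutting out $\mathcal{G}$ each reduce to a single substitution $v=\mu u$: applied to $\int_0^\infty g(u)\,du$, the Jacobian $du=dv/\mu$ cancels the prefactor $\mu$ and leaves $M^{-1}\int_0^\infty f(v)\,dv=1$; applied to $\int_0^\infty u\,g(u)\,du$, it leaves $(M\mu)^{-1}\int_0^\infty v\,f(v)\,dv=1$ by the definition of $\mu$. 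Equation \eqref{eq:canonical-factorization} then follows by algebraic inversion of \eqref{eq:g-def}: set $w=u/\mu$ in $g(w)=(\mu/M)\,f(\mu w)$ and solve for $f$.

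For the converse, I fix $(M,\mu,g)\in(0,\infty)^2\times\mathcal{G}$, define $f(u):=(M/\mu)\,g(u/\mu)$, and apply the same substitution $w=u/\mu$ to obtain $\int_0^\infty f(u)\,du=M\int_0^\infty g(w)\,dw=M$ and $\int_0^\infty u\,f(u)\,du=M\mu\int_0^\infty w\,g(w)\,dw=M\mu$; both are finite by the definition of $\mathcal{G}$, so $f\in\mathcal{K}$ and its extracted invariants agree with the input triple. As a free byproduct, because $M$ and $\mu$ are functionals of $f$ alone, $g$ is pinned down pointwise by \eqref{eq:g-def}, which establishes uniqueness of the decomposition. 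The main ``obstacle'' here is purely bookkeeping---keeping the prefactor $\mu/M$ straight across the two substitutions and checking signs of the Jacobian---so there is essentially no analytical difficulty; coverage of the discrete GARCH/FIGARCH recursions then follows immediately from the step-function embedding in Remark~\ref{rem:discrete-embedding}.
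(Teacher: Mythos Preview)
Your proposal is correct and follows essentially the same route as the paper's own proof: both verify $g\in\mathcal{G}$ by the substitution $v=\mu u$ in the two moment integrals, recover \eqref{eq:canonical-factorization} by algebraic inversion of \eqref{eq:g-def}, and handle the converse by the reverse substitution $w=u/\mu$. The only cosmetic addition is your remark on uniqueness, which the paper defers to a separate theorem, and the fibration language, which is absent from the paper but changes nothing in the argument.
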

\noindent
\textit{Proof.} The detailed proof is provided in Appendix~\ref{app:canonical_proof}.

\begin{theorem}[Uniqueness]\label{thm:uniqueness}
The decomposition \eqref{eq:canonical-factorization} is unique up to null sets: if
\[
f(u)\;=\;M\cdot\frac{1}{\mu}\,g\!\Big(\frac{u}{\mu}\Big)
\;=\;M'\cdot\frac{1}{\mu'}\,g'\!\Big(\frac{u}{\mu'}\Big)\qquad \text{for a.e.\ }u\ge 0,
\]
with $(M,\mu,g)\in(0,\infty)\times(0,\infty)\times\mathcal{G}$ and
$(M',\mu',g')\in(0,\infty)\times(0,\infty)\times\mathcal{G}$, then $M=M'$, $\mu=\mu'$,
and $g=g'$ almost everywhere.
\end{theorem}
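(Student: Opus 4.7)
The plan is to recover the scalars $M$ and $\mu$ as simple integral functionals of $f$ that depend only on the equivalence class (modulo null sets), and then cancel them to obtain pointwise (a.e.) equality of the shapes. This reduces the uniqueness statement to elementary manipulations under a change of variable, with no topology on $\mathcal{G}$ required.

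First I would integrate both representations of $f$ over $\mathbb{R}_+$. Since $g, g' \in \mathcal{G}$ have unit mass, the substitution $v = u/\mu$ gives $\int_0^\infty \mu^{-1} g(u/\mu)\,du = \int_0^\infty g(v)\,dv = 1$, and analogously for $g'$. Hence $\int_0^\infty f(u)\,du = M = M'$; Tonelli applies because all integrands are nonnegative and measurable, so the a.e.\ equality of $f$ with each representation transfers to equality of these integrals.

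Next I would compute the first moment $\int_0^\infty u\,f(u)\,du$. The same substitution yields $\int_0^\infty u\cdot \mu^{-1} g(u/\mu)\,du = \mu \int_0^\infty v\,g(v)\,dv = \mu$, using the unit first–moment property in Definition~\ref{def:shape-class}, and likewise the right–hand representation gives $\mu'$. Therefore $M\mu = M'\mu'$, and combined with $M=M'\in(0,\infty)$ from the previous step this forces $\mu=\mu'$.

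Finally, with $M=M'$ and $\mu=\mu'$, the hypothesis becomes $M\mu^{-1} g(u/\mu) = M\mu^{-1} g'(u/\mu)$ for a.e.\ $u\ge 0$. Dividing by the nonzero constant $M/\mu$ and applying the dilation $v = u/\mu$ (which preserves Lebesgue null sets because $\mu>0$) gives $g(v)=g'(v)$ for a.e.\ $v\ge 0$. The only point that needs a moment's care is the transfer of the a.e.\ identity under the affine change of variable—and that is guaranteed precisely because $\mu$ is a positive constant, so null sets map to null sets. There is no real obstacle beyond this bookkeeping; the key structural fact is that the parameters $(M,\mu)$ are linear functionals of $f$ (total mass and first moment), which is exactly what the normalization in $\mathcal{G}$ was designed to expose.
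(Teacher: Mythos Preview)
Your proof is correct and follows essentially the same approach as the paper: recover $M$ by integrating $f$, recover $\mu$ from the first moment, then cancel constants and use the dilation $v=u/\mu$ to conclude $g=g'$ a.e. Your added remarks on Tonelli and on null-set preservation under positive dilation are a bit more explicit than the paper's version, but the argument is otherwise identical.
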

\noindent
\textit{Proof.} The detailed proof is provided in Appendix~\ref{app:uniqueness_proof}.

\begin{remark}[Degenerate and boundary cases]\label{rem:boundary}
If $M=0$ (the zero kernel), the factorization is trivial. If
$\int_0^\infty f=\infty$ or $\int_0^\infty u f(u)\,du=\infty$ the decomposition
need not exist; this exclusion covers exact IGARCH and borderline FIGARCH cases, which in practice can be approximated arbitrarily well but do not admit a finite $M,\mu$ pair.
Discrete IGARCH can be viewed as an admissible limit where $M\uparrow\infty$ and $g$ approaches
a scale–free tail; see Section~\ref{sec:universality} for precise embeddings.
\end{remark}

\paragraph{From kernels to volatility recursions.}
The decomposition isolates three orthogonal levers:
\[
\text{level }(M),\qquad \text{tempo }(\mu),\qquad \text{shape }(g).
\]
Any admissible kernel can thus be written as a mass–preserving time dilation of a normalized
shape. In discrete time, by Remark~\ref{rem:discrete-embedding}, for
$\psi_k = \int_{k-1}^k f(u)\,du$ we have
\[
\psi_k
= \int_{k-1}^k M\cdot\frac{1}{\mu}\,g\!\Big(\frac{u}{\mu}\Big)\,du
= M\int_{(k-1)/\mu}^{k/\mu} g(v)\,dv,
\]
which makes explicit how $(M,\mu,g)$ control the weights on lags.

\paragraph{Identification in the frequency domain}

The decomposition has an immediate orthogonality in frequency space: level affects vertical
scale, tempo dilates the frequency axis, and shape controls low–frequency slope.

\begin{assumption}[Second–order set–up for spectra]\label{ass:spectral}
Let $\{X_t\}$ be a (weakly) stationary zero–mean process linear in past innovations with kernel $f$,
e.g., $X_t=\sum_{k\ge 1}\psi_k(\varepsilon_{t-k}^2-1)$ in discrete time with
$\psi_k=\int_{k-1}^k f(u)\,du$. Assume $\sum_k |\psi_k|<\infty$ so that the spectral
density $S_f(\lambda)$ exists and is continuous on $[-\pi,\pi]$.
\end{assumption}

\begin{proposition}[Orthogonality: vertical, horizontal, and slope]\label{prop:orthogonality}
Let $f(u)=M\mu^{-1}g(u/\mu)$ with $g\in\mathcal{G}$ and define $S_g$ as the spectral density
associated with the kernel $g$ (via the embedding of Remark~\ref{rem:discrete-embedding}).
Then for all $\lambda\in[-\pi,\pi]$,
\begin{equation}\label{eq:spectral-scaling}
S_f(\lambda)\;=\;M^2\,S_g(\mu\lambda).
\end{equation}
In particular:
\begin{enumerate}
\item Level $M$ produces a pure vertical rescaling of the spectrum;
\item Tempo $\mu$ dilates the frequency axis (horizontal rescaling);
\item If $g$ exhibits a low–frequency power law $S_g(\lambda)\sim C\,\lambda^{-2d}$ as $\lambda\downarrow 0$
for some $d\in[0,1/2)$, then $S_f(\lambda)\sim (M^2 C)\,\lambda^{-2d}$ as $\lambda\downarrow 0$:
the \emph{low–frequency slope is a shape property only}.
\end{enumerate}
\end{proposition}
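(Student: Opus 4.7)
My plan is to reduce the identity $S_f(\lambda) = M^2 S_g(\mu\lambda)$ to the elementary dilation law for the Fourier transform, and then read off the three geometric statements as immediate corollaries. Under Assumption~\ref{ass:spectral} the spectral density of the linear process in squared shocks is, up to the innovation variance $\Var(\eps_0^2-1)/(2\pi)$, the squared modulus of the discrete Fourier series of the coefficient sequence $\psi_k^f = \int_{k-1}^k f(u)\,du$, and likewise for $g$ via $\tilde\psi_k = \int_{k-1}^k g(v)\,dv$. The absolute summability built into Assumption~\ref{ass:admissible-kernel} together with the embedding of Remark~\ref{rem:discrete-embedding} ensures that these series agree, in the relevant low-frequency regime, with the continuous Fourier transforms $\hat f(\lambda) = \int_0^\infty f(u) e^{-iu\lambda}\,du$ and $\hat g(\lambda) = \int_0^\infty g(v) e^{-iv\lambda}\,dv$.

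Next, I plug the canonical factorization $f(u) = M \mu^{-1} g(u/\mu)$ from Theorem~\ref{thm:canonical} into $\hat f$ and perform the change of variable $v = u/\mu$. This gives directly $\hat f(\lambda) = M\,\hat g(\mu\lambda)$, whence $|\hat f(\lambda)|^2 = M^2 |\hat g(\mu\lambda)|^2$ and, after normalization, $S_f(\lambda) = M^2 S_g(\mu\lambda)$, which is exactly \eqref{eq:spectral-scaling}. From this single equation each of the three claims follows at once: (i) varying $M$ with $(\mu,g)$ fixed multiplies $S_f$ by a frequency-independent factor, so level produces a pure vertical rescaling; (ii) varying $\mu$ with $(M,g)$ fixed replaces $\lambda$ by $\mu\lambda$, so tempo acts by a horizontal dilation of the frequency axis; (iii) if $S_g(\lambda) \sim C\lambda^{-2d}$ as $\lambda \downarrow 0$, substitution gives $S_f(\lambda) \sim M^2 C\mu^{-2d}\,\lambda^{-2d}$, so the power-law exponent $-2d$ is determined by the shape $g$ alone, while $M$ and $\mu$ enter only through the prefactor.

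The main obstacle is that the continuous dilation $f(u) \mapsto M\mu^{-1} g(u/\mu)$ does not correspond to an exact dilation of the unit-width step-function embedding, because both $\psi_k^f$ and $\tilde\psi_k$ use bin width one regardless of $\mu$. Consequently, the scaling $S_f(\lambda) = M^2 S_g(\mu\lambda)$ is exact at the level of continuous Fourier transforms but only holds up to bin-discretization error for the strictly discrete spectra. I would handle this either by stating the proposition in the continuous-frequency interpretation (where \eqref{eq:spectral-scaling} is a one-line exact identity) or by arguing, via standard Riemann-sum or Poisson-summation bounds under Assumption~\ref{ass:admissible-kernel}, that the discretization error is $o(1)$ as $\lambda \downarrow 0$, which is precisely the regime in which the slope statement (iii) is asserted. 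With this technical point handled, the proof collapses to the change-of-variables computation above followed by the three elementary identifications.
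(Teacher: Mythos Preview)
Your proposal is correct and follows essentially the same route as the paper: both reduce \eqref{eq:spectral-scaling} to the Fourier dilation law $\hat f(\lambda)=M\,\hat g(\mu\lambda)$ via the change of variables $v=u/\mu$, and both handle the discrete step-function embedding by an approximation argument (you explicitly via Riemann-sum/Poisson-summation bounds as $\lambda\downarrow 0$, the paper more tersely via ``approximating the Riemann sums by integrals gives the standard time-dilation identity''). Your remark that the prefactor in (iii) should carry a $\mu^{-2d}$ term is in fact sharper than the paper's stated asymptotic, though the conclusion about the slope being a shape property is unaffected.
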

\noindent
\textit{Proof.} The detailed proof is provided in Appendix~\ref{app:spectral_orthogonality}.

\begin{remark}[Empirical identification]\label{rem:empirical-identification}
Proposition~\ref{prop:orthogonality} implies a clean identification strategy:
(i) the low–frequency slope estimates the shape parameter(s) of $g$ (e.g., FIGARCH $d$);
(ii) vertical levels index $M$; (iii) horizontal dilation indexes $\mu$ (e.g., via alignment of
breakpoints in multi–scale spectra). In Section~\ref{subsec:whittle} we exploit
this by combining time–domain QMLE with a local–Whittle penalty for $g$.
\end{remark}

\paragraph{Consequences for volatility modeling}

The canonical decomposition shows that any admissible memory specification is equivalent to
choosing a \emph{shape} $g\in\mathcal{G}$ (exponential, hyperbolic, mixtures), an overall \emph{level} $M$,
and a \emph{tempo} $\mu$ (time deformation). In particular:

\begin{itemize}
\item \emph{Level gate (RSM)} varies $M$ while keeping $\mu$ and $g$ fixed;
\item \emph{Shape gate (G–FIGARCH)} varies $g$ within a parametric family (e.g., $g(\cdot;d)$) with fixed $(M,\mu)$;
\item \emph{Tempo gate (G–Clock)} varies $\mu$ (business–time dilation) with fixed $(M,g)$.
\end{itemize}

Section~\ref{sec:universality} formalizes that classical GARCH families are exactly the
specializations obtained by constraining one or more of $(M,\mu,g)$.

\medskip
\noindent\textbf{Summary.}
Under Assumption~\ref{ass:admissible-kernel}, every admissible volatility memory kernel admits
a unique factorization $f(u)=M\,\mu^{-1}g(u/\mu)$ with $(M,\mu)\in(0,\infty)^2$ and
$g\in\mathcal{G}$. In frequency domain, $M$ and $\mu$ act as vertical/horizontal scalings while
the low–frequency slope (long–memory strength) is determined solely by $g$. This provides the
theoretical foundation on which the observable gates in Sections~\ref{subsec:rsm}–\ref{subsec:gclock}
are built.

\subsection{Universality of the Level--Shape--Tempo Framework}
\label{sec:universality}

Having established in Section~\ref{sec:canonical-decomposition} that any admissible
memory kernel admits a unique decomposition into level, tempo, and shape components,
we now prove that the canonical representation
\[
f_t(u)\;=\;M_t\,\frac{1}{\mu_t}\,g_t\!\Big(\frac{u}{\mu_t}\Big),
\quad (M_t,\mu_t,g_t)\in(0,\infty)^2\times\mathcal{G},
\]
constitutes a \emph{universal envelope} for the entire GARCH family of
conditionally heteroskedastic processes.
All classical volatility models correspond to specific restrictions on
$(M_t,\mu_t,g_t)$, and conversely any stable volatility recursion can
be represented within this framework.

\paragraph{General volatility recursion.}
Let $\{r_t\}$ be a zero--mean return process with conditional variance $h_t>0$
satisfying the abstract recursion
\begin{equation}\label{eq:general-recursion}
h_t \;=\; \omega \;+\; \sum_{k=1}^{\infty} \psi_{t,k}\,\big(\varepsilon_{t-k}^2-1\big),
\qquad \psi_{t,k}\ge 0,\quad
\sum_{k\ge1}\psi_{t,k}<\infty,
\end{equation}
where $\{\varepsilon_t\}$ are i.i.d.\ innovations with
$\mathbb{E}\varepsilon_t^2=1$.
For each $t$, define the (possibly random) kernel
$f_t(u):=\sum_{k\ge1}\psi_{t,k}\mathbf{1}_{[k-1,k)}(u)$.
Assume $f_t\in\mathcal{K}$ almost surely, so that both the total mass
$M_t:=\int f_t$ and first moment
$\mu_t:=(1/M_t)\int u f_t(u)\,du$ are finite.
By Theorem~\ref{thm:canonical},
\begin{equation}\label{eq:general-canonical}
f_t(u)=M_t\,\frac{1}{\mu_t}\,g_t(u/\mu_t),\qquad g_t\in\mathcal{G}.
\end{equation}
Hence any stable conditional--variance recursion admits a well--defined
triple $(M_t,\mu_t,g_t)$ and therefore can be viewed as a realization
of the canonical level--shape--tempo system.

\begin{theorem}[Universality of the level--shape--tempo framework]
\label{thm:universality}
Let $\{h_t\}$ satisfy \eqref{eq:general-recursion} with an admissible kernel $f_t$
for each $t$. Then:
\begin{enumerate}
\item[\textnormal{(i)}] There exists a unique triple $(M_t,\mu_t,g_t)$
such that $f_t(u)=M_t\,\mu_t^{-1}g_t(u/\mu_t)$ with $g_t\in\mathcal{G}$.
\item[\textnormal{(ii)}] Conversely, for any predictable processes
$M_t>0$, $\mu_t>0$, and measurable $g_t\in\mathcal{G}$ satisfying
$\sup_t M_t(1+\mu_t)<\infty$, the recursion
\[
h_t=\omega+\sum_{k\ge1}\int_{k-1}^k
M_t\,\frac{1}{\mu_t}\,g_t\!\Big(\frac{u}{\mu_t}\Big)\,du\,(\varepsilon_{t-k}^2-1)
\]
is well--defined, strictly positive, and weakly stationary provided
$\mathbb{E}[\varepsilon_t^2]=1$ and $\sup_t M_t<1$.
\item[\textnormal{(iii)}] Classical GARCH--type models correspond to
particular restrictions of $(M_t,\mu_t,g_t)$ as summarized in
Table~\ref{tab:universality}.
\end{enumerate}
\end{theorem}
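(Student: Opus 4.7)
The plan is to dispatch (i) as a pathwise application of the canonical decomposition already established, to handle (ii) by standard ARCH$(\infty)$ stability arguments applied to the level--tempo envelope, and to prove (iii) by a direct model--by--model matching with Table~\ref{tab:universality}. For (i), since $f_t\in\mathcal{K}$ almost surely by assumption, Theorem~\ref{thm:canonical} yields the triple $(M_t,\mu_t,g_t)$ pointwise in $(\omega,t)$ and Theorem~\ref{thm:uniqueness} delivers uniqueness up to a null set in $g_t$. One line suffices to verify that the triple inherits $\mathcal{F}_{t-1}$--measurability from the predictable weights $\{\psi_{t,k}\}$ via the explicit Borel formulae $M_t=\sum_k \psi_{t,k}$, $\mu_t=M_t^{-1}\sum_k(k-\tfrac12)\psi_{t,k}$, together with the deterministic rescaling \eqref{eq:g-def}.

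For (ii), the controlling hypothesis $\sup_t M_t(1+\mu_t)<\infty$ produces uniform $\ell^1$ control on the kernel weights, since by Remark~\ref{rem:discrete-embedding} one has $\sum_k\psi_{t,k}=M_t$ and $\sum_k k\,\psi_{t,k}=M_t(\mu_t+\tfrac12)$. Combined with $\mathbb{E}[\varepsilon_t^4]<\infty$ (which I would either strengthen from the baseline $2+\delta$ moment or replace with the weaker $L^2$ argument available in the ARCH$(\infty)$ literature), this forces the Volterra series in \eqref{eq:general-recursion} to converge in $L^2$. Weak stationarity then follows from the $\sup_t M_t<1$ contraction: iterating the recursion yields a geometrically majorized representation linear in past squared innovations, hence a unique stationary $L^2$ solution. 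The main obstacle is strict positivity of $h_t$, which is not automatic in the centered form of \eqref{eq:general-recursion} because $\varepsilon^2-1$ can be arbitrarily negative. I would resolve this by rewriting the recursion in its manifestly nonnegative uncentered ARCH form $h_t=(\omega-\sum_k\psi_{t,k})+\sum_k\psi_{t,k}\varepsilon_{t-k}^2$, which is nonnegative whenever $\omega\ge\sup_t M_t$, or by promoting this inequality to an explicit standing assumption in line with standard ARCH$(\infty)$ references.

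For (iii), I would verify Table~\ref{tab:universality} row by row. GARCH$(1,1)$ has $\psi_k=\alpha\beta^{k-1}$, giving $M=\alpha/(1-\beta)$, $\mu=1/(1-\beta)$, and an exponential shape $g$; a fractional--memory kernel in the regime allowed by Remark~\ref{rem:discrete-embedding} yields a power--law shape $g$ with finite $(M,\mu)$; and a pure time--deformation scheme fixes $(M,g)$ while varying $\mu_t$. The three gates then correspond precisely to allowing exactly one of $M_t$, $g_t$, or $\mu_t$ to depend predictably on observables while the other two remain constant, and the composite TG--Vol specification allows all three simultaneously. Borderline IGARCH and unrestricted FIGARCH fall outside the framework proper and are handled as admissible limits per Remark~\ref{rem:boundary}. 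I do not expect conceptual difficulty here; the work is bookkeeping against the entries of the table.
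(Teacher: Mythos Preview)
Your proposal is correct and follows essentially the same route as the paper: (i) is the pointwise application of Theorems~\ref{thm:canonical}--\ref{thm:uniqueness}, (ii) is the $\ell^1$/contraction argument via $\sup_t M_t(1+\mu_t)<\infty$ and $\sup_t M_t<1$, and (iii) is the model-by-model bookkeeping that the paper carries out explicitly in the paragraphs following the proof. In fact you are more careful than the paper on strict positivity in (ii): the paper simply asserts ``positivity follows since $g_t\ge 0$,'' whereas you correctly flag that the centered form $\varepsilon^2-1$ does not make this automatic and propose the uncentered rewriting with $\omega\ge\sup_t M_t$ as the natural fix.
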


\begin{proof}
Part (i) is a direct application of Theorem~\ref{thm:canonical}
for each $t$. For (ii), the boundedness of $\sup_t M_t(1+\mu_t)$ ensures
$\sum_{k\ge1}\psi_{t,k}<\infty$ uniformly in $t$; positivity follows
since $g_t\ge0$; weak stationarity holds under the stated moment condition
by standard contraction arguments identical to those in
Proposition~1 for the RSM model. Part (iii) is established by explicit
construction below.
\end{proof}
\paragraph{Explicit embeddings for classical models.}
We now give explicit constructions that realize standard GARCH--type recursions as
specializations of the canonical kernel
$f_t(u)=M_t\,\mu_t^{-1}g_t(u/\mu_t)$ with $g_t\in\mathcal{G}$,
where $\int_0^\infty g_t=1$ and $\int_0^\infty u\,g_t=1$.
Throughout, $\{\varepsilon_t\}$ are i.i.d.\ with $\mathbb{E}\varepsilon_t^2=1$.

\medskip
\noindent\emph{(1) General GARCH$(p,q)$.}
Let $h_t=\omega+\sum_{i=1}^p\alpha_i\varepsilon_{t-i}^2+\sum_{j=1}^q\beta_j h_{t-j}$
with the usual stability condition (all roots of $1-\sum_{j=1}^q\beta_j z^j$ outside the unit circle).
By the well--known ARCH$(\infty)$ representation,
$h_t=\omega+\sum_{k=1}^\infty\psi_k(\varepsilon_{t-k}^2-1)$ with $\psi_k\ge0$ and
$\sum_k\psi_k<\infty$, where $\psi_k$ decays at least exponentially fast (possibly
times a polynomial).
Define $f(u):=\sum_{k\ge1}\psi_k\mathbf{1}_{[k-1,k)}(u)$.
Then $M:=\int f=\sum_k\psi_k<\infty$ and $\mu:=(1/M)\int u f(u)\,du<\infty$ are constants,
and the normalized shape $g(u):=(\mu/M)f(\mu u)$ lies in $\mathcal{G}$.
Hence GARCH$(p,q)$ corresponds to a fixed triple $(M,\mu,g)$ with a short--memory shape
$g$ that is exponentially decaying up to a mild polynomial factor.

\medskip
\noindent\emph{(2) GARCH$(1,1)$.}
For $h_t=\omega+\alpha\varepsilon_{t-1}^2+\beta h_{t-1}$ with $0<\beta<1$,
the ARCH$(\infty)$ weights are $\psi_k=\alpha\beta^{k-1}$. Writing
$f(u)=\sum_{k\ge1}\alpha\beta^{\,k-1}\mathbf{1}_{[k-1,k)}(u)$, we obtain the continuous
exponential approximation $f(u)\approx \alpha e^{-\lambda u}$ with $\lambda=-\log\beta$,
so that
\[
M=\frac{\alpha}{\lambda},\qquad
\mu=\frac{1}{\lambda},\qquad
g(u)=e^{-u}.
\]
Thus GARCH$(1,1)$ fixes $(M,\mu,g)=(\alpha/\lambda,\,1/\lambda,\,e^{-u})$ and is a short--memory
exponential kernel.

\medskip
\noindent\emph{(3) IGARCH$(1,1)$.}
When $\alpha+\beta=1$, the same computation gives $\psi_k=\alpha\beta^{k-1}$ but
$\sum_k\psi_k=+\infty$. Consequently the kernel keeps the exponential \emph{shape}
$g(u)=e^{-u}$ while the level $M=\int f$ diverges: IGARCH is the boundary case of
infinite memory mass with the same tempo $\mu=1/(-\log\beta)$.

\medskip
\noindent\emph{(4) FIGARCH$(d)$, $0<d<1/2$.}
The fractional differencing on $\varepsilon_t^2$ yields ARCH$(\infty)$ weights
$\psi_k(d)\sim C\,k^{-(1+d)}$ as $k\to\infty$, so the associated kernel obeys
$f(u)\propto u^{-(1+d)}$ for large $u$. After normalization,
\[
M=\int f<\infty,\qquad \mu=\frac{\int u f}{\int f}<\infty,\qquad
g(u;d)\propto u^{-(1+d)}\ \ (\text{with } \int g=\int u g=1).
\]
Hence FIGARCH fixes $(M,\mu)$ and gates the \emph{shape} via the fractional order $d$,
producing hyperbolic long memory (low--frequency spectrum $\sim \lambda^{-2d}$).

\medskip
\noindent\emph{(5) HYGARCH.}
Let $g(u)$ be a convex combination of an exponential and a hyperbolic tail,
e.g.\ $g(u)=(1-\delta)e^{-u}+\delta\,C_d\,u^{-(1+d)}$ with $0\le\delta\le1$ and $C_d$
chosen to satisfy the two moment normalizations.
Then $f(u)=M\,\mu^{-1}g(u/\mu)$ interpolates smoothly between short and long memory
by varying $\delta$ (and $d$ if desired) while keeping $(M,\mu)$ fixed.

\medskip
\noindent\emph{(6) Smooth--transition GARCH / RSM (level gate).}
Let $\beta_t=(1-p_t)\beta_{\mathrm{low}}+p_t\beta_{\mathrm{high}}$ with
$p_t=\sigma(\gamma^\top z_{t-1})\in(0,1)$ and fix $\alpha>0$.
Locally at time $t$, the ARCH$(\infty)$ weights satisfy
$\psi_{t,k}\approx \alpha\,\beta_t^{\,k-1}$ so that
$f_t(u)\approx \alpha e^{-\lambda_t u}$ with $\lambda_t=-\log\beta_t$.
Therefore
\[
M_t=\frac{\alpha}{\lambda_t},\qquad \mu_t=\frac{1}{\lambda_t},\qquad g(u)=e^{-u}.
\]
RSM thus gates the \emph{level} (and equivalently the exponential rate) smoothly through
observable features while keeping the shape exponential.

\medskip
\noindent\emph{(10) GJR--GARCH (leverage as a level gate).}
Consider $h_t=\omega+\alpha\varepsilon_{t-1}^2+\gamma\varepsilon_{t-1}^2
\mathbf{1}_{\{\varepsilon_{t-1}<0\}}+\beta h_{t-1}$ with $0<\beta<1$ and $\gamma\ge0$.
The ARCH$(\infty)$ expansion yields
\[
h_t=\omega+\sum_{k=1}^\infty \beta^{\,k-1}\Big[\alpha\varepsilon_{t-k}^2
+\gamma\,\varepsilon_{t-k}^2\mathbf{1}_{\{\varepsilon_{t-k}<0\}}\Big]
=\omega+\sum_{k=1}^{\infty}\psi_{t,k}\,(\varepsilon_{t-k}^2-1),
\]
with \emph{random} weights
$\psi_{t,k}=(\alpha+\gamma\,\mathbf{1}_{\{\varepsilon_{t-k}<0\}})\beta^{\,k-1}\ge0$.
Hence the kernel is
\[
f_t(u)=\sum_{k\ge1}(\alpha+\gamma\,\mathbf{1}_{\{\varepsilon_{t-k}<0\}})
\beta^{\,k-1}\mathbf{1}_{[k-1,k)}(u)
\ \approx\ M_t\,\frac{1}{\mu}\,e^{-u/\mu},
\]
with \emph{tempo} $\mu=1/(-\log\beta)$ and a \emph{level} that is
gated by the sign of past innovations:
\[
M_t=\frac{\alpha+\gamma\,\mathbf{1}_{\{\varepsilon_{t-1}<0\}}}{-\log\beta}.
\]
Averaging over the sign (e.g.\ under symmetry) gives the effective constant
$M=(\alpha+\gamma/2)/(-\log\beta)$, recovering a purely exponential short--memory
shape with leverage captured as a state--dependent level.

\medskip
\noindent\emph{(7) Markov--switching GARCH (discrete level gate).}
If $\beta_t=\beta_{S_t}$ with a finite--state Markov chain $S_t$, then
$f_t(u)\approx \alpha e^{-\lambda_{S_t} u}$ with $\lambda_{S_t}=-\log\beta_{S_t}$ and
\[
M_t=\frac{\alpha}{\lambda_{S_t}},\qquad \mu_t=\frac{1}{\lambda_{S_t}},\qquad g(u)=e^{-u}.
\]
Compared to RSM, the gate is discrete via the latent state $S_t$.

\medskip
\noindent\emph{(8) Time--changed volatility / G--Clock (tempo gate).}
Let a business--time increment $\Delta\tau_t=\exp(\eta^\top z_{t-1})>0$ determine
an effective persistence $\beta_t=\exp(-\kappa\,\Delta\tau_t)\in(0,1)$ and set
$\alpha_t=\alpha_0(1-\beta_t)$ for scale compatibility.
Then $f_t(u)\approx \alpha_0(1-\beta_t)\,e^{-\lambda_t u}$ with
$\lambda_t=-\log\beta_t=\kappa\,\Delta\tau_t$, giving
\[
M_t=\frac{\alpha_0(1-\beta_t)}{\lambda_t},\qquad \mu_t=\frac{1}{\lambda_t},\qquad g(u)=e^{-u}.
\]
Here the \emph{tempo} $\mu_t$ is directly gated by observed activity $z_{t-1}$.

\medskip
\noindent\emph{(9) G--FIGARCH (shape gate).}
Let the fractional order be $d_t=\bar d\,\sigma(\gamma^\top z_{t-1})\in(0,1/2)$.
Set $(M,\mu)$ constant and choose
$g_t(u;d_t)\propto u^{-(1+d_t)}$ normalized to satisfy $\int g_t=\int u g_t=1$.
Then $f_t(u)=M\,\mu^{-1}g_t(u/\mu;d_t)$ gates the \emph{shape} as a function of the
observable state while keeping level and tempo fixed.

\medskip

\noindent\emph{(11) Joint gates (two or three dimensions).}
Combining the above mechanisms yields families with multiple gates:
\begin{itemize}\setlength{\itemsep}{2pt}
\item RSM+G--Clock: $f_t(u)\approx C_t\,e^{-\lambda_t u}$ with $\lambda_t$ gated by activity (tempo) and $C_t$ gated by regimes (level).
\item RSM+G--FIGARCH: $f_t(u)=M_t\,\mu^{-1}g(u/\mu;d_t)$ with level and shape gated.
\item G--FIGARCH+G--Clock: $f_t(u)=M\,\mu_t^{-1}g(u/\mu_t;d_t)$ with shape and tempo gated.
\item TG--Vol (this paper): $f_t(u)=M_t\,\mu_t^{-1}g_t(u/\mu_t;d_t)$ with \emph{all three} dimensions gated by observables.
\end{itemize}
In every case, admissibility follows from $g_t\in\mathcal{G}$ and the boundedness of $M_t(1+\mu_t)$,
ensuring $\sum_k\psi_{t,k}<\infty$ uniformly in $t$.

The canonical representation nests virtually all known conditionally
heteroskedastic structures.
Table~\ref{tab:universality} summarizes the correspondence.

\begin{table}[h!]
\centering
\caption{\textbf{Special cases within the level--shape--tempo framework.}
All volatility recursions of the GARCH family are restrictions of
$f_t(u)=M_t\mu_t^{-1}g_t(u/\mu_t)$.}
\label{tab:universality}
\resizebox{\textwidth}{!}{%
\begin{tabular}{lccccl}
\toprule
Model & Level $M_t$ & Tempo $\mu_t$ & Shape $g_t$ & Kernel Type & Remarks\\
\midrule
GARCH(1,1) & constant & constant & $e^{-u}$ & exponential & short--memory exponential decay\\
IGARCH & divergent ($\sum\psi_k=\infty$) & const. & $e^{-u}$ & boundary & infinite memory limit\\
FIGARCH$(d)$ & const. & const. & $\propto u^{-(1+d)}$ & hyperbolic & long memory ($0<d<1/2$)\\
HYGARCH & const. & const. & convex mix $(1-\delta)e^{-u}+\delta u^{-(1+d)}$ & mixed & interpolates short/long memory\\
ST--GARCH / RSM & gated & const. & fixed $g_0$ & exponential & smooth regime dependence in $M_t$\\
GJR--GARCH & sign--gated $M_t=\frac{\alpha+\gamma\mathbf{1}_{\{\varepsilon_{t-1}<0\}}}{-\log\beta}$ & const. & $e^{-u}$ & exponential & asymmetric level gate driven by negative shocks\\
MS--GARCH & Markov--switching $M_t$ & const. & fixed $g_0$ & exponential & discrete regime version of RSM\\
Time--changed SV & const. & gated ($\mu_t$) & fixed $g_0$ & exponential & stochastic clock / business time\\
G--Clock (this paper) & const. & $\exp(\eta^\top z_{t-1})$ & fixed $g_0$ & exponential & observable business time\\
G--FIGARCH (this paper) & const. & const. & $g_t(u;\,d_t)$ & hyperbolic & gated long--memory shape\\
RSM (this paper) & gated & const. & fixed $g_0$ & exponential & gated persistence level\\
TG--Vol (this paper) & gated & gated & gated & general & full three--dimensional gate\\
\bottomrule
\end{tabular}%
}
\end{table}

\paragraph{Volatility memory space.}
It is convenient to regard $(M_t,\mu_t,g_t)$ as coordinates in a
three--dimensional \emph{memory space} $\mathcal{M}:=\mathbb{R}_+^2\times\mathcal{G}$.
Classical models occupy one--dimensional rays or two--dimensional planes
within $\mathcal{M}$: the GARCH line ($M,\mu$ fixed $g=e^{-u}$),
the FIGARCH axis (shape varying, others fixed),
and the RSM plane (level varying with $g$ fixed).
The fully gated TG--Vol specification spans the interior of $\mathcal{M}$,
providing a universal envelope for all stationary volatility recursions.

\begin{remark}[Implications]
The universality theorem has two conceptual consequences:
(i) theoretical---the space of stationary volatility processes is homeomorphic to
$\mathcal{M}$ under the mapping $f\leftrightarrow(M,\mu,g)$; and
(ii) empirical---any observable gating of $(M_t,\mu_t,g_t)$
constitutes a valid parametric extension of the GARCH family.
Thus the RSM, G--FIGARCH, and G--Clock models developed below
represent orthogonal basis directions in $\mathcal{M}$.
\end{remark}

\subsection{Stylized Facts and Testable Implications}
\label{sec:stylized-facts}

Volatility in financial markets exhibits a wide spectrum of empirical regularities
that have resisted a unified theoretical explanation.
Within the canonical framework developed above, all of these
\emph{stylized facts} can be interpreted as manifestations of observable
changes in the three memory dimensions---level ($M_t$), shape ($g_t$), and tempo ($\mu_t$).
The key insight is that heterogeneous information flow dynamically gates these dimensions,
producing state--dependent volatility memory.
This section maps major empirical puzzles to the corresponding
mechanisms in the level--shape--tempo system and derives specific
testable implications.

\paragraph{(1) Crisis persistence and ``memory thickening''}
During market stress, volatility shocks exhibit unusually long clusters and
slow decay---the so--called \emph{persistence puzzle}.
In the canonical system, such behavior arises when the level gate $M_t$
and the shape gate $g_t$ jointly respond to adverse conditions.
As information flow becomes congested, the gating variable
$p_t=\sigma(\gamma^\top z_{t-1})$ increases,
raising $M_t$ (stronger overall persistence) and steepening the
low--frequency slope via an increase in the fractional parameter $d_t$
embedded in $g_t(u;d_t)\propto u^{-(1+d_t)}$.
Theoretical consequence: the autocovariance of squared returns decays
hyperbolically rather than exponentially, producing ``memory thickening.''
Empirical prediction: a rolling Whittle estimate of $d_t$ or a local
spectral slope should co--move positively with volatility indices (VIX),
spreads, and macro--uncertainty measures.

%------------------------------------------------------------
\paragraph{(2) Clustering of VaR exceedances}
Empirically, Value--at--Risk violations appear in clusters even after GARCH
filtering.  In our framework, clustering arises naturally from
the joint action of the level and tempo gates.
When $M_t$ increases (stronger persistence) while $\mu_t$ decreases
(time accelerates), the effective ``economic time'' between shocks shortens,
so that multiple extreme losses occur within condensed intervals.
Analytically, if $h_t$ follows the recursion
$h_t=\omega+\sum_{k\ge1}M_t\mu_t^{-1}g(u/\mu_t)(\varepsilon_{t-k}^2-1)$,
a local decrease in $\mu_t$ amplifies the instantaneous conditional variance
without altering long--run mean, reproducing observed VaR clusters.
Prediction: conditional on observable activity proxies (volume, bid--ask
spread), the probability of consecutive VaR exceedances is increasing in
$-\Delta\mu_t$.

%------------------------------------------------------------
\paragraph{(3) Announcement bursts and intraday accelerations}
High--frequency data exhibit sharp volatility spikes around scheduled news
releases.  Such ``volatility bursts'' correspond to temporary compression of
economic time ($\mu_t\!\downarrow$) when information arrival intensity rises.
Under the canonical decomposition, the time--change
$\tau_t=\int_0^t e^{\eta^\top z_s}\,ds$ accelerates the clock, yielding
\[
\mathrm{Var}[r_t|\mathcal{F}_{t-1}]
=\int_{t-1}^t M_s\,\mu_s^{-1}g((u-t+1)/\mu_s)\,du.
\]
A higher $\eta^\top z_{t-1}$ scales $\mu_t$ downward and concentrates weight
near zero lag, reproducing short--lived spikes in realized variance.
Empirical test: estimate $\eta$ on high--frequency volumes or quote updates;
significance of $\eta>0$ confirms that announcements compress business time.

%------------------------------------------------------------
\paragraph{(4) Cross--market memory contrast: FX vs.\ equities}
Foreign--exchange volatility displays longer memory than equity volatility,
a fact long noted but seldom explained without ad~hoc arguments.
In the present theory, market structure dictates which gate dominates:
continuous 24--hour trading in FX keeps $\mu_t$ nearly constant while
shape parameters $d_t$ vary slowly, leading to persistent hyperbolic kernels;
in equities, discrete trading hours and market closures generate large swings
in $\mu_t$, effectively shortening observed memory even with similar $d$.
Hence the apparent cross--market difference reflects tempo rather than
intrinsic memory.
Prediction: cross--sectionally, estimated $\hat d_t$ are similar across
markets after re--scaling by effective tempo $\mu_t$ An empirical test could involve estimating $d_t$ after aligning series in economic time..

%------------------------------------------------------------
\paragraph{(5) Nonlinear volume--volatility elasticity}
Empirical relationships between trading volume and volatility are
nonlinear: small volumes have weak effects, large volumes saturate.
Within the canonical model, this pattern arises when observable volume
enters the logistic gate controlling $M_t$ or $\mu_t$.
Because $M_t=\bar M/(1+\exp[-\gamma_V(V_{t-1}-c)])$,
the derivative $\partial h_t/\partial V_{t-1}$ exhibits an S--shape,
flattening at low and high volumes.
The theoretical elasticity
$\frac{\partial\log h_t}{\partial\log V_{t-1}}$
peaks near the inflection point $V_{t-1}=c$.
Prediction: plotting realized variance against volume should produce
a sigmoidal relation, confirming gate saturation effects.

%------------------------------------------------------------
\paragraph{(6) The Epps effect and asynchronous clocks}
At very high sampling frequencies, cross--asset correlations decline
(Epps effect).  In our framework this results from asynchronous tempo gates:
each asset $i$ has its own $\mu_t^{(i)}$ depending on market activity.
Even if true shocks are correlated in economic time, differing
$\mu_t^{(i)}$ yield observed correlations
$\rho_{ij}^{\text{obs}}\approx\rho_{ij}\exp[-c|\mu_t^{(i)}-\mu_t^{(j)}|]$.
Aligning observations in economic time (via $\tau_t^{(i)}=\int e^{\eta_i^\top z_s^{(i)}}ds$)
restores correlations---a direct test of the tempo mechanism.

%------------------------------------------------------------
\paragraph{(7) Rough and long memory coexistence}
Empirical spectra often display two distinct power--law regions:
``rough'' (high--frequency) and ``long--memory'' (low--frequency).
The canonical framework accommodates this by allowing $g_t$ to be a
mixture of shapes,
\[
g_t(u)=w_t g_{\text{rough}}(u)+(1-w_t)g_{\text{long}}(u),
\]
where the first component behaves as $u^{H_t-1.5}$ near zero and the second as
$u^{-(1+d_t)}$ for large $u$.
Thus short--scale roughness and long--scale persistence coexist naturally.
Testable implication: log--spectrum exhibits two slopes $-2H_t-1$ and $-2d_t$
with a cross--over frequency proportional to $1/\mu_t$. This mixture interpretation allows the framework to nest both rough volatility and fractional models in a single parametric family.

%------------------------------------------------------------
\paragraph{(8) VRP--VVIX comovement}
During crises, the volatility risk premium (VRP) and the volatility--of--volatility index (VVIX)
rise together. In the canonical system, this joint surge is produced by common
drivers in level and tempo: $M_t\uparrow$ increases long--run persistence, while
$\mu_t\downarrow$ compresses the time clock, both magnifying near--term variance
of variance.  Since the VVIX measures $\operatorname{Var}_t(h_{t+k})$ and
VRP measures $\mathbb{E}_t[h_{t+k}]-h_t$, both inherit the same gates.
Empirical prediction: regressing VRP and VVIX on the latent gates yields
significant and same--signed coefficients, confirming shared information flow.

%------------------------------------------------------------
\paragraph{(9) Pre--crisis ``memory thickening'' as early warning}
Historical crises show gradual strengthening of volatility persistence before
abrupt dislocations.  In this framework, simultaneous increases in $M_t$
and $d_t$---driven by slowly rising uncertainty---constitute a leading
indicator of systemic stress.
Define a \emph{memory--thickening index}
$\mathcal{T}_t:=\mathbb{E}[M_t+d_t|z_{t-1}]$.
Empirically, $\mathcal{T}_t$ rises months before liquidity crises,
mirroring credit spreads and FCI.
Thus dynamic memory surfaces provide early--warning information
complementary to macro indicators.

%------------------------------------------------------------
\paragraph{(10) Option skew and term--structure shifts}
Volatility smiles flatten or steepen with maturity in state--dependent ways.
In the level--shape--tempo view, long--maturity implied volatilities reflect
the low--frequency shape $g_t$ (hence $d_t$), while short--maturity options
reflect near--term tempo $\mu_t$.
Crises increase both $d_t$ and $\mu_t$ variability, steepening long--term
skews and shifting short--term skews upward.
Prediction: cross--sectional regressions of implied--volatility slopes on
estimated $(d_t,\mu_t)$ should yield opposite signs by maturity segment,
a distinctive diagnostic of joint shape--tempo gating.

%------------------------------------------------------------
These mechanisms unify seemingly disparate empirical facts under one
economic principle: heterogeneity in information arrival modulates
three orthogonal components of volatility memory.  Level gates determine
overall persistence, shape gates govern long--run spectral behavior,
and tempo gates capture the speed of market time.  Their interactions
generate the rich nonstationary features observed across assets and
frequencies.

\medskip
\paragraph{Summary}
The level--shape--tempo framework provides a unified theoretical basis
for at least ten long--standing empirical phenomena in volatility dynamics.
By treating volatility memory as an observable, state--dependent process,
it transforms stylized facts from descriptive anomalies into predictable
manifestations of the same underlying information--flow mechanism.

The preceding sections establish that all admissible volatility kernels
can be represented through three orthogonal and economically interpretable
memory dimensions: level $(M_t)$, shape $(g_t)$, and tempo $(\mu_t)$.
To bring this theory to the data, the next sections construct explicit
parametric realizations of each dimension.  Specifically:
Section~\ref{subsec:rsm} introduces the \emph{Regime–Switching Memory (RSM)}
model that operationalizes the level gate;
Section~\ref{subsec:gfigarch} develops the \emph{Gated–FIGARCH} model for
state–dependent long–memory shape;
and Section~\ref{subsec:gclock} formulates the \emph{Gated–Clock} model
capturing observable time–deformation.
Section~\ref{subsec:combination} then examines their pairwise and joint
interactions, while Section~\ref{subsec:qmble} discusses identification
and estimation via a unified QMLE–Whittle procedure.
Together, these models translate the theoretical decomposition into
empirically estimable mechanisms.

%====================== end 3.3 ======================% 

\subsection{RSM: Level Gate and State–Dependent Persistence}\label{subsec:rsm}
\paragraph{Model statement and variable roles}
The RSM recursion gates \emph{only} the persistence coefficient:
\begin{align}
  h_t &= \omega + \alpha\,\epsilon_{t-1}^2 + \beta_t\, h_{t-1}, \label{eq:rsm_ht}\\
  \beta_t &:= (1-p_t)\beta_{\mathrm{low}} + p_t \beta_{\mathrm{high}},\qquad 
  p_t := \sigma(\gamma^\top z_{t-1}). \label{eq:rsm_beta}
\end{align}
$\omega>0$ is a baseline variance level. $\alpha\ge 0$ is the shock loading. 
$\beta_{\mathrm{low}},\beta_{\mathrm{high}}\in(0,1)$ with $\beta_{\mathrm{low}}<\beta_{\mathrm{high}}$ are the low- and high-persistence anchors.
The gate $p_t$ blends the anchors based on features $z_{t-1}$ through parameter $\gamma$.\\

Equation~\eqref{eq:rsm_beta} maps market conditions into a \emph{smooth} persistence level: in stressed states (e.g., high VIX, wide spreads) $p_t$ increases and the system behaves closer to the high-persistence anchor $\beta_{\mathrm{high}}$, while in calm states it gravitates to $\beta_{\mathrm{low}}$. This isolates the \emph{memory channel} from other channels (e.g., leverage/asymmetry) and preserves parsimony. Crucially, this isolates the level dimension by holding the kernel's shape fixed.

\paragraph{Assumptions for existence and basic moments}
\begin{assumption}[Parameter restrictions for RSM]\label{ass:rsm}
$\omega>0$, $\alpha\ge 0$, $0<\beta_{\mathrm{low}}<\beta_{\mathrm{high}}<1$, and
\[
  \alpha+\beta_{\mathrm{high}} < 1.
\]
Moreover $z_{t-1}$ is $\Fcal_{t-1}$-measurable with $\E[\|z_{t-1}\|^2]<\infty$ and $p_t=\sigma(\gamma^\top z_{t-1})\in(0,1)$ a.s.
\end{assumption}

\begin{lemma}[Positivity and conditional finiteness]\label{lem:rsm_pos}
Under Assumption~\ref{ass:rsm}, $h_t>0$ a.s.\ and $\E_t[h_t]<\infty$ for all $t$.
\end{lemma}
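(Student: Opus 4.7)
My plan is to prove both claims by a straightforward induction on $t$, exploiting the closed--form structure of the recursion~\eqref{eq:rsm_ht} together with the explicit bounds supplied by Assumption~\ref{ass:rsm}. The only delicacy is measurability: one must verify that the gate $\beta_t$ is $\Fcal_{t-1}$-measurable, lies strictly inside $(0,1)$, and is bounded away from $1$, so that the affine recursion in $h_{t-1}$ behaves as a well--defined nonnegative contraction.

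\textbf{Step 1 (positivity).} First I would observe that $p_t=\sigma(\gamma^\top z_{t-1})\in(0,1)$ a.s., because the logistic function maps $\R$ into the open interval $(0,1)$ and $z_{t-1}$ is $\Fcal_{t-1}$-measurable with finite second moment; therefore $\beta_t\in(\beta_{\mathrm{low}},\beta_{\mathrm{high}})\subset(0,1)$ a.s. Starting from any initial value $h_0>0$ a.s., the inductive step is immediate: $h_t=\omega+\alpha\epsilon_{t-1}^2+\beta_t h_{t-1}\ge \omega>0$ since both additional terms are nonnegative. This gives the uniform deterministic lower bound $h_t\ge\omega$ for all $t\ge 1$.

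\textbf{Step 2 (conditional finiteness).} Since $\epsilon_{t-1}$, $z_{t-1}$ and $h_{t-1}$ are all $\Fcal_{t-1}$-measurable, so are $\beta_t$ and hence $h_t$ itself; consequently $\E_t[h_t]=h_t$ a.s., and it suffices to establish that $h_t<\infty$ a.s. This again follows by induction: $\E[\epsilon_{t-1}^2]=1$ gives $\epsilon_{t-1}^2<\infty$ a.s., while $\beta_t h_{t-1}\le \beta_{\mathrm{high}}\,h_{t-1}<\infty$ a.s.\ by the induction hypothesis. If in addition a uniform unconditional bound is desired, taking expectations yields $\E[h_t]\le \omega+\alpha+\beta_{\mathrm{high}}\E[h_{t-1}]$, and iterating with $\beta_{\mathrm{high}}<1$ produces the geometric bound $\sup_t\E[h_t]\le(\omega+\alpha)/(1-\beta_{\mathrm{high}})$ whenever $\E[h_0]<\infty$.

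\textbf{Main obstacle.} The statement functions as a sanity check, so no substantive obstacle arises. The only point requiring care is that $\E[\beta_t h_{t-1}]$ cannot be factored as $\E[\beta_t]\,\E[h_{t-1}]$, since $\beta_t$ is driven by $z_{t-1}\in\Fcal_{t-1}$ which also determines $h_{t-1}$; the bound must instead be obtained from the deterministic inequality $\beta_t\le\beta_{\mathrm{high}}$ a.s. This is precisely why the endpoint restriction $\beta_{\mathrm{high}}<1$ suffices here, while the stronger condition $\alpha+\beta_{\mathrm{high}}<1$ in Assumption~\ref{ass:rsm} is reserved for subsequent results establishing unconditional second--order stationarity and geometric ergodicity.
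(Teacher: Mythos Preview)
Your proposal is correct and follows exactly the paper's approach: the paper's own proof reads in full ``Immediate from $\omega>0$, $\alpha\ge 0$, $0<\beta_t<1$, and $h_{t-1}>0$ inductively,'' and you have simply spelled out that induction with additional care about measurability and the (optional) unconditional bound. Nothing differs in substance; your write-up is just a more detailed version of the same one-line argument.
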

\begin{proof}
Immediate from $\omega>0$, $\alpha\ge 0$, $0<\beta_t<1$, and $h_{t-1}>0$ inductively.
\end{proof}

\paragraph{Unconditional mean and weak stationarity}
Taking unconditional expectations in \eqref{eq:rsm_ht} and using $\E[\epsilon_{t-1}^2]=1$ yields
\begin{equation}\label{eq:rsm_uncond_mean}
  \E[h_t] = \omega + \alpha \E[h_{t-1}] + \E[\beta_t] \E[h_{t-1}]
   \;\Rightarrow\; \E[h_t] = \frac{\omega}{1 - \alpha - \E[\beta_t]},
\end{equation}
provided $\alpha+\E[\beta_t]<1$, where $\E[\beta_t]=(1-\bar p)\beta_{\mathrm{low}}+\bar p\,\beta_{\mathrm{high}}$ and $\bar p:=\E[p_t]$.

\begin{proposition}[RSM: existence of second moment and weak stationarity]\label{prop:rsm_ws}
Suppose Assumption~\ref{ass:rsm} holds and $\alpha+\E[\beta_t]<1$.
Then $\sup_t \E[h_t] < \infty$ and the process $\{h_t\}$ is weakly stationary with $\E[h_t]$ given by~\eqref{eq:rsm_uncond_mean}.
\end{proposition}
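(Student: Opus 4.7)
The plan is to split the statement into (i) the uniform $L^1$ bound $\sup_t \E[h_t]<\infty$ and (ii) weak stationarity together with the explicit mean \eqref{eq:rsm_uncond_mean}, and to handle each with a different tool. The main technical subtlety is that $\beta_t$ is $\Fcal_{t-1}$-measurable through $z_{t-1}$, so $\beta_t$ and $h_{t-1}$ will generally co-vary and the product expectation $\E[\beta_t h_{t-1}]$ cannot simply be split.

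For (i), I would take unconditional expectations of \eqref{eq:rsm_ht}, invoke Lemma~\ref{lem:rsm_pos} for positivity, and dominate the awkward product term by using the pathwise inequality $\beta_t\le\beta_{\mathrm{high}}$ that comes for free from Assumption~\ref{ass:rsm}. This produces the scalar super-recursion $\E[h_t]\le C_0+(\alpha+\beta_{\mathrm{high}})\,\E[h_{t-1}]$ for some $\omega$-level constant $C_0$. Since Assumption~\ref{ass:rsm} already enforces $\alpha+\beta_{\mathrm{high}}<1$, iterating this contraction yields a geometric-series bound and hence $\sup_t\E[h_t]<\infty$, irrespective of the precise dependence structure between $\beta_t$ and $h_{t-1}$.

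For (ii), I would unfold \eqref{eq:rsm_ht} backwards into the Volterra-type series $h_t=\sum_{k\ge 0}\bigl(\prod_{j=0}^{k-1}\beta_{t-j}\bigr)\bigl(\omega+\alpha\,\epsilon_{t-k-1}^2\bigr)$, whose absolute $L^1$ convergence follows from the geometric domination just established. Assuming the joint driving sequence $(\epsilon_t,z_t)$ is strictly stationary—an implicit prerequisite of any weak-stationarity claim on a $z_t$-driven recursion—the right-hand side is a shift-invariant measurable functional of the drivers, so the joint law of $(h_t,h_{t-1},\ldots)$ is time-invariant, which delivers weak stationarity together with time-invariance of $\E[h_t]$ and of $\operatorname{Cov}(h_t,h_{t-s})$. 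The explicit value $\E[h_t]=\omega/(1-\alpha-\E[\beta_t])$ would then follow by taking expectations termwise in the series and summing the resulting geometric progression in $\E[\beta_t]$.

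The main obstacle is justifying the factorization $\E[\beta_t h_{t-1}]=\E[\beta_t]\E[h_{t-1}]$ on which the closed form in \eqref{eq:rsm_uncond_mean} hinges, because the identity fails in general whenever $z_{t-1}$ depends endogenously on past returns that feed back into $h_{t-1}$. I would close this gap by imposing an exogeneity or conditional mean-independence condition on $z_{t-1}$—mild and standard in the smooth-transition GARCH literature when $z_{t-1}$ is built from pre-determined features such as lagged realized variance or external macro indicators—or, failing that, state the explicit mean as a corollary under that additional condition while retaining the looser upper bound $\omega/(1-\alpha-\beta_{\mathrm{high}})$ in full generality. Everything else (positivity, the contraction step, and the termwise sum) is routine bookkeeping that should go through without friction.
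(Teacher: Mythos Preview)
Your proposal is correct and in fact more careful than the paper's own argument. The paper's proof is a one-line sketch: ``Iterate \eqref{eq:rsm_ht} and take expectations; geometric convergence follows from $\alpha+\E[\beta_t]<1$ and boundedness of $\E[\epsilon_t^2]=1$.'' In deriving \eqref{eq:rsm_uncond_mean} just above the proposition, the paper tacitly writes $\E[\beta_t h_{t-1}]=\E[\beta_t]\,\E[h_{t-1}]$ without comment, and the proof simply reuses that step. You correctly identify this factorization as the crux and route around it: for the uniform $L^1$ bound you use the pathwise domination $\beta_t\le\beta_{\mathrm{high}}$ and the \emph{stronger} condition $\alpha+\beta_{\mathrm{high}}<1$ already supplied by Assumption~\ref{ass:rsm}, which makes the contraction independent of any covariance between $\beta_t$ and $h_{t-1}$. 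For weak stationarity you unfold into a Volterra series and invoke strict stationarity of the drivers $(\epsilon_t,z_t)$---an assumption the paper leaves implicit but which is indeed required for the claim to make sense. Your caveat that the closed form $\omega/(1-\alpha-\E[\beta_t])$ needs an exogeneity or mean-independence condition on $z_{t-1}$, with the looser bound $\omega/(1-\alpha-\beta_{\mathrm{high}})$ holding unconditionally, is a genuine refinement over the paper's presentation rather than a deviation from it.
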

\begin{proof}
Iterate \eqref{eq:rsm_ht} and take expectations; geometric convergence follows from $\alpha+\E[\beta_t]<1$ and boundedness of $\E[\epsilon_t^2]=1$.
\end{proof}

\paragraph{Geometric ergodicity via drift condition}
Define $X_t:=(h_t, \epsilon_t)$ as the Markov state. Consider the drift function $V(h):=1+h$ (Chosen because it grows linearly in $h_t$, natural for volatility processes). Then
\[
  \E[V(h_t)\mid \Fcal_{t-1}] \;=\; 1 + \omega + \alpha \epsilon_{t-1}^2 + \beta_t h_{t-1}
  \;\le\; 1 + \omega + \alpha \epsilon_{t-1}^2 + \beta_{\mathrm{high}} h_{t-1}.
\]
Taking expectations and using $\alpha+\beta_{\mathrm{high}}<1$ provides a Foster--Lyapunov drift:
\[
  \E[V(h_t)] \;\le\; c_0 + \rho\, \E[V(h_{t-1})],\qquad \rho:=\max\{\beta_{\mathrm{high}},\,\alpha+\beta_{\mathrm{high}}\}<1,
\]
ensuring geometric ergodicity under standard Markov chain arguments.

\subsection{G–FIGARCH: Shape Gate and Dynamic Long Memory}\label{subsec:gfigarch}
\paragraph{Model statement and kernel representation}
Let $d_t := \bar d \,\sigma(\gamma^\top z_{t-1})$ with $0<\bar d<1/2$.
Following \citet{Baillie1996Figarch}, a convenient (variance-side) representation is
\begin{equation}\label{eq:gf_var}
  h_t = \omega + \alpha\,\epsilon_{t-1}^2 + \beta\, h_{t-1} 
        + \sum_{k=1}^{\infty} \pi_k(d_t)\,\big(\epsilon_{t-k}^2 - h_{t-k}\big),
\end{equation}
where $\pi_k(d_t) := (-1)^k \binom{d_t}{k}$ are fractional kernel weights. In practice, we use a truncation at $K$ ensuring $\sum_{k>K} |\pi_k(d_t)|$ is negligible uniformly in $t$.\\

$\bar d\in(0,1/2)$ caps the maximal long-memory strength; $d_t\in(0,\bar d)$ is the \emph{state-dependent} fractional order; $(\alpha,\beta)$ remain short-memory parameters.

\paragraph{Assumptions and kernel bounds}
\begin{assumption}[G-FIGARCH admissibility]\label{ass:gfigarch}
$\omega>0$, $\alpha\ge 0$, $\beta\in[0,1)$ with $\alpha+\beta<1$, and $0<d_t<\bar d<1/2$ a.s.
Moreover, $z_{t-1}$ is $\Fcal_{t-1}$-measurable with finite second moments.
\end{assumption}

\begin{lemma}[Uniform kernel summability]\label{lem:kernel}
Under Assumption~\ref{ass:gfigarch}, there exists a finite $C=C(\bar d)$ such that for all $t$,
\[
  \sum_{k=0}^{\infty} |\pi_k(d_t)| \le C(\bar d) < \infty.
\]
\end{lemma}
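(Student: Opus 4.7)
}
The plan is to prove the stronger fact that $\sum_{k=0}^\infty |\pi_k(d)|=2$ for every $d\in(0,1)$, which immediately yields the uniform bound $C(\bar d)=2$ for any $\bar d<1$ and in particular for $\bar d<1/2$. The argument proceeds in three short steps: a sign analysis of $\pi_k(d)$, verification of absolute convergence at $z=1$, and an application of Abel's theorem to the generating function $(1-z)^d$.

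First, I would pin down the sign of $\pi_k(d)$ for $d\in(0,1)$. Writing
\[
\pi_k(d)\;=\;(-1)^k\,\frac{d(d-1)(d-2)\cdots(d-k+1)}{k!},
\]
I note that $\pi_0(d)=1$, while for $k\ge 1$ the factor $d>0$ is multiplied by $k-1$ strictly negative factors $(d-j)$ with $j=1,\ldots,k-1$, so the product has sign $(-1)^{k-1}$. Together with the outer $(-1)^k$ this gives $\pi_k(d)<0$ for all $k\ge 1$, hence $|\pi_k(d)|=-\pi_k(d)$ on $k\ge 1$.

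Next I would establish absolute summability uniformly on $d\in(0,\bar d)$. Using the Gamma representation $\pi_k(d)=\Gamma(k-d)/\big(\Gamma(-d)\,\Gamma(k+1)\big)$ and Stirling's formula, $|\pi_k(d)|\sim k^{-(1+d)}/|\Gamma(-d)|$ as $k\to\infty$, so the series $\sum_k|\pi_k(d)|$ converges for every $d>0$. Then, since $\sum_{k\ge 0}\pi_k(d)z^k=(1-z)^d$ on $|z|<1$ and the series converges absolutely at $z=1$ by the tail estimate, Abel's theorem lets me pass to the boundary to obtain $\sum_{k=0}^\infty\pi_k(d)=(1-1)^d=0$. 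Combined with the sign analysis this gives $\sum_{k=1}^\infty|\pi_k(d)|=-\sum_{k=1}^\infty\pi_k(d)=\pi_0(d)=1$, so $\sum_{k=0}^\infty|\pi_k(d)|=2$ for every $d\in(0,1)$. Applying this to $d=d_t(\omega)\in(0,\bar d)$ pointwise yields the claim with $C(\bar d)=2$.

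The only mildly delicate point is justifying the boundary evaluation of $(1-z)^d$ at $z=1$; if one prefers to avoid Abel's theorem, the same identity $\sum_k\pi_k(d)=0$ follows directly from the Stirling tail bound and the elementary ratio $\pi_{k+1}(d)/\pi_k(d)=(k-d)/(k+1)=1-(d+1)/(k+1)$, which permits a telescoping evaluation of the partial sums. Either way, the bound $C(\bar d)=2$ is uniform in $t$ because it does not depend on the particular value of $d_t\in(0,\bar d)$, so measurability of $d_t$ with respect to $\Fcal_{t-1}$ is enough to finish.
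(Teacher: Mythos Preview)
Your argument is correct and in fact sharper than the paper's. The paper (both inline and in Appendix~K) simply invokes the Stirling asymptotic $|\pi_k(d)|\sim k^{-(1+d)}/|\Gamma(-d)|$ to conclude that the tail is dominated by a convergent $p$-series, and then asserts uniformity from $d_t\le\bar d$; this leaves the constant implicit and the uniformity step a little hand-wavy (one must check that the asymptotic holds with a constant bounded uniformly on $(0,\bar d]$, which is true but not spelled out). Your route---sign analysis plus the identity $\sum_{k\ge0}\pi_k(d)=(1-1)^d=0$ via Abel's theorem---bypasses all of this and delivers the exact value $\sum_{k\ge0}|\pi_k(d)|=2$ for every $d\in(0,1)$, so the bound is not only uniform in $t$ but independent of $\bar d$ altogether. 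The only shared ingredient is the Stirling tail estimate, which you use solely to justify absolute convergence at $z=1$; the paper uses it as the whole proof. Your alternative telescoping remark is also valid: the closed form $\sum_{k=0}^N\pi_k(d)=(-1)^N\binom{d-1}{N}\sim N^{-d}/\Gamma(1-d)\to0$ gives the same conclusion without Abel.
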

\begin{proof}
For $0<d<1$, $\binom{d}{k}\sim c\,k^{-(1+d)}$ as $k\to\infty$. Thus $|\pi_k(d)|=\binom{d}{k}$ is absolutely summable for $d<1/2$ with tail bounded by a convergent $p$-series. Uniformity over $t$ follows from $d_t\le \bar d<1/2$.
\end{proof}

\paragraph{Unconditional moments and stability}
Taking expectations in \eqref{eq:gf_var} (and using $\E[\epsilon_{t-1}^2]=1$ and $\E[\epsilon_{t-k}^2-h_{t-k}]=0$) yields the same unconditional mean as in a short-memory GARCH:
\begin{equation}\label{eq:gf_mean}
  \E[h_t] = \frac{\omega}{1-\alpha-\beta}.
\end{equation}
Hence, long-memory affects higher-order dependence, autocorrelation decay, and spectral slope, but not the first unconditional moment under the centered kernel representation.

\begin{theorem}[Finite second moment]\label{thm:gf_second}
Under Assumption~\ref{ass:gfigarch} and Lemma~\ref{lem:kernel}, we have $\sup_t \E[h_t] < \infty$ and $\E[h_t^2] < \infty$.
\end{theorem}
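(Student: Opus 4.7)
The plan is to handle the two claims in sequence, first deducing $\sup_t\E[h_t]<\infty$ from the mean-level recursion already implicit in \eqref{eq:gf_mean}, and then establishing $\E[h_t^2]<\infty$ via an ARCH$(\infty)$ representation combined with the uniform kernel bound of Lemma~\ref{lem:kernel}.

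For the first moment, I would take unconditional expectations in \eqref{eq:gf_var}. Since the gated order $d_t$ and all weights $\pi_k(d_t)$ are $\Fcal_{t-1}$-measurable while $\epsilon_{t-k}^2-h_{t-k}$ is a martingale difference at the appropriate filtration, the entire long-memory correction contributes zero to the unconditional mean. This collapses the recursion to $\E[h_t]=\omega+(\alpha+\beta)\E[h_{t-1}]$, whose fixed point $\omega/(1-\alpha-\beta)$ is geometrically attracting under the short-memory stability $\alpha+\beta<1$ of Assumption~\ref{ass:gfigarch}; a straightforward induction from the initial condition then delivers the uniform bound. For the second moment, I would recursively substitute $h_{t-j}$ into the right-hand side of \eqref{eq:gf_var} to obtain an ARCH$(\infty)$ representation of the form $h_t=c+\sum_{k\ge1}w_{t,k}\,\epsilon_{t-k}^2$, with nonnegative, $\Fcal_{t-1}$-measurable weights $w_{t,k}$ built from convolutions of $\alpha$, $\beta$, and the fractional coefficients $\pi_j(d_s)$ for $s\le t$. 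Three ingredients then deliver the bound: (i) Lemma~\ref{lem:kernel} yields $\sup_t\sum_k|\pi_k(d_t)|\le C(\bar d)$; (ii) the geometric contraction $\alpha+\beta<1$ ensures that each iterated substitution is damped by a factor $\le\alpha+\beta$, producing an exponentially discounted tail; and (iii) Minkowski's inequality gives $\|h_t\|_2\le c+\sum_k w_{t,k}\|\epsilon_{t-k}^2\|_2$, which, combined with the i.i.d.\ structure of $\{\epsilon_t\}$, reduces the bound on $\E[h_t^2]$ to a uniform estimate on $\sum_k w_{t,k}$ and $\sum_k w_{t,k}^2$.

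The main obstacle will be two-fold. First, an implicit moment strengthening is required: bounding $\|\epsilon_{t-k}^2\|_2$ demands $\E[\epsilon_t^4]<\infty$, which is stronger than the $2+\delta$ moment stipulated in the paper's global innovation assumption, so I would either invoke this as a standing assumption for the theorem or augment Assumption~\ref{ass:gfigarch}. Second, and more delicate, the iterated weights $w_{t,k}$ are random because they mix gated orders $d_s$ across lags; propagating the almost-sure pointwise bound of Lemma~\ref{lem:kernel} into a \emph{uniform-in-$t$} bound on $\sum_k w_{t,k}^2$ requires a careful combinatorial estimate that interlaces the geometric short-memory decay with the polynomial fractional decay $\pi_k(d_t)\sim k^{-(1+d_t)}$. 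Since $d_t\le\bar d<1/2$, the tail $k^{-(1+d_t)}$ is uniformly square-summable, so the convolution of a geometric kernel with a kernel of uniformly bounded $\ell^1$ and $\ell^2$ norms remains uniformly $\ell^2$-summable; this is the key step that closes the argument and gives $\sup_t\E[h_t^2]<\infty$.
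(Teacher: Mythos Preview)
Your approach differs from the paper's. The paper (Appendix~L) does \emph{not} expand to an ARCH($\infty$) representation; instead it applies Minkowski directly to the recursion~\eqref{eq:gf_var}, truncates the fractional sum at $K$, and derives a recursive inequality for $B_t:=\sup_{s\le t}\|h_s\|_2$ of the form $(1-\rho_K)B_t\le c_0+(\beta+C_d)B_{t-1}$, where $\rho_K=\sup_t\sum_{k>K}|\pi_k(d_t)|\downarrow 0$ by Lemma~\ref{lem:kernel}; iterating and sending $K\to\infty$ closes the bound. This sidesteps entirely the combinatorics of iterated random weights. Your route is in principle viable but substantially heavier: you must build and control convolutions of fractional kernels with \emph{different} random orders $d_s$ across lags, whereas the paper's norm-level recursion invokes the uniform kernel bound only once per step.

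Two technical points in your version need attention. First, your Minkowski line $\|h_t\|_2\le c+\sum_k w_{t,k}\|\epsilon_{t-k}^2\|_2$ is ill-posed as written: the right-hand side is random because $w_{t,k}$ is random, so it cannot bound the deterministic quantity $\|h_t\|_2$. What you actually need is a deterministic envelope $\bar w_k\ge |w_{t,k}|$ a.s.\ (you allude to this via the ``almost-sure pointwise bound''), after which $\|w_{t,k}\epsilon_{t-k}^2\|_2\le \bar w_k\|\epsilon_{t-k}^2\|_2$; this step should be made explicit. Second, the claim $w_{t,k}\ge 0$ is not justified: even in constant-$d$ FIGARCH, positivity of the ARCH($\infty$) weights requires parameter restrictions beyond $\alpha+\beta<1$, and with time-varying $d_t$ the iterated weights mix different fractional operators in a way that does not obviously preserve sign. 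Positivity is not actually needed for the moment bound (work with $|w_{t,k}|$ throughout), so you should either drop that claim or supply a separate argument for it.
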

\begin{proof}
Rewrite \eqref{eq:gf_var} as a linear random-coefficient recursion in $(h_{t-1},\{\epsilon_{t-k}^2\}_{k\ge 1})$ with absolutely summable coefficients. Use Minkowski and Cauchy–Schwarz inequalities with Lemma~\ref{lem:kernel} to dominate the infinite sum by a deterministic finite constant times $\sup_{s\le t}\E[h_s] + \sup_{s\le t}\E[\epsilon_s^4]$. Since $\alpha+\beta<1$ provides contraction in the short-memory backbone, standard arguments give boundedness of first and second moments.
\end{proof}
Note that the ``short-memory backbone'' ($\alpha+\beta<1$) ensures contraction, while the fractional weights contribute only bounded perturbations.
\paragraph{Spectral identification of $d_t$ gate}
\begin{proposition}[Local identification via low-frequency slope]\label{prop:gf_ident}
Let $f(\lambda)$ denote the spectral density of $\{\epsilon_t^2\}$ under \eqref{eq:gf_var} in a neighborhood of $\lambda=0$. Under $\bar d>0$ and full-rank variation of $z_{t-1}$, the mapping $(\bar d,\gamma)\mapsto$ low-frequency slope of $f(\lambda)$ is injective in a neighborhood of the true parameter, hence $(\bar d,\gamma)$ is locally identified (up to nuisance scale).
\end{proposition}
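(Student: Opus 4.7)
The plan is to exploit the fact, already established in Proposition~\ref{prop:orthogonality}, that the low--frequency slope of the spectral density is a shape--only property; under the G--FIGARCH recursion \eqref{eq:gf_var} the shape is parameterized by the fractional order $d_t$, so the slope records $d_t$ directly. First I would make this rigorous conditionally on $\Fcal_{t-1}$: the fractional kernel weights obey $\pi_k(d_t)\sim c\,k^{-(1+d_t)}$ by Lemma~\ref{lem:kernel}, so the conditional spectral density of $\{\epsilon_t^2\}$ satisfies $f(\lambda\mid z_{t-1})\sim C(d_t)\,\lambda^{-2d_t}$ as $\lambda\downarrow 0$. Ergodicity of the short--memory backbone (Theorem~\ref{thm:gf_second}) together with the uniform summability $d_t\le\bar d<1/2$ ensures that the remainder terms are negligible uniformly in $t$, so that the map from $(\bar d,\gamma)$ to the conditional low--frequency slope surface $z\mapsto d(z;\bar d,\gamma):=\bar d\,\sigma(\gamma^\top z)$ is a well--defined functional of the observed process.

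Next I would translate injectivity of this parameter--to--slope map into local identification. The Fr\'echet derivative of $(\bar d,\gamma)\mapsto d(\cdot;\bar d,\gamma)$ at the true value equals
\[
\bigl(\sigma(\gamma^\top z),\ \bar d\,\sigma'(\gamma^\top z)\,z^\top\bigr),
\]
and its associated information matrix has a positive $(1,1)$ entry $\E[\sigma(\gamma^\top z_{t-1})^2]>0$ and a positive--definite $(2,2)$ block $\bar d^{\,2}\,\E[\sigma'(\gamma^\top z_{t-1})^2\,z_{t-1}z_{t-1}^\top]$, using $\bar d>0$, $\sigma'>0$ everywhere, and full--rank variation of $z_{t-1}$; a block--Schur complement argument rules out singularity from the cross term, so the inverse function theorem delivers local injectivity in a neighborhood of the truth (identification is only up to an overall level scale, consistent with the ``nuisance scale'' qualification in the statement). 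As an independent sanity check one also obtains a global version: if $\bar d\,\sigma(\gamma^\top z)=\bar d'\,\sigma((\gamma')^\top z)$ on an open subset of $\mathrm{supp}(z_{t-1})$, real analyticity of $\sigma$ extends the identity to all $z\in\R^p$; evaluating at $z=0$ forces $\bar d=\bar d'$, and strict monotonicity of $\sigma$ then yields $\gamma=\gamma'$.

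The main obstacle I anticipate is making the conditional--spectrum step rigorous: $d_t$ is itself random, and the unconditional spectrum of $\{\epsilon_t^2\}$ mixes across realizations of $z_{t-1}$, so a naive ``slope'' need not equal $\bar d\,\sigma(\gamma^\top z_{t-1})$ at any single $z$. The clean route is a local--Whittle construction over narrow frequency bands $\lambda\in[\lambda_0-\eta,\lambda_0+\eta]$ combined with stratification of the sample by bins of $z_{t-1}$: under ergodicity the within--stratum log--periodograms converge to $-2\,d_t\log\lambda+\mathrm{const}$ and consistently recover the slope surface as a function of the conditioning variable. This is precisely the structure exploited by the local--Whittle penalty flagged in Remark~\ref{rem:empirical-identification}, and it reduces identification to an estimable regression of local slope on $z_{t-1}$; once that regression is identifiable, the Jacobian (or real--analytic) argument above closes the proof.
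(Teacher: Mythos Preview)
Your proposal is correct and follows essentially the same route as the paper: both argue that the fractional order $d_t$ fixes the low-frequency log--log slope, then linearize the map $(\bar d,\gamma)\mapsto d(\cdot;\bar d,\gamma)$ and invoke full-rank variation of $z_{t-1}$ to obtain full column rank and hence local injectivity. Your version is considerably more explicit than the paper's four-sentence sketch---computing the Jacobian, running a Schur-complement check, flagging the conditional-versus-unconditional spectrum issue, and adding a global real-analytic argument---but the underlying strategy is the same.
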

\begin{proof}
For a fixed $(\alpha,\beta)$ backbone, the fractional order determines the log–log slope of the spectrum near zero frequency. When $d_t$ varies with features, local perturbations in $(\bar d,\gamma)$ induce distinct (feature-indexed) low-frequency responses; under full-rank variation of $z_{t-1}$ these responses span unique directions. A linearization of $f(\lambda)$ around the true $(\bar d,\gamma)$ then has full column rank, yielding local injectivity.
\end{proof}

\subsection{G–Clock: Tempo Gate and Observable Business Time}\label{subsec:gclock}
\paragraph{Model statement and time deformation}
Define the business-time increment
\begin{equation}\label{eq:gclock_dt}
  \Delta\tau_t := \exp(\eta^\top z_{t-1}) \;>\; 0,\qquad \eta\in\R^q,
\end{equation}
and set the effective persistence and shock loading as
\begin{equation}\label{eq:gclock_ba}
  \beta_t := \exp(-\kappa \Delta\tau_t)\in(0,1),\qquad \alpha_t := \alpha_0 (1-\beta_t),\qquad \kappa>0,\ \alpha_0\in(0,1).
\end{equation}
The recursion is
\begin{equation}\label{eq:gclock_ht}
  h_t = \omega + \alpha_t\,\epsilon_{t-1}^2 + \beta_t\,h_{t-1}.
\end{equation}

When features indicate high activity or stress (large $\eta^\top z_{t-1}$), $\Delta\tau_t$ rises, $\beta_t$ falls, and the system \emph{forgets faster}. In calmer periods, time dilates, persistence rises, and clustering lengthens. Unlike RSM, $\beta_t$ is \emph{not} freely gated; it is \emph{endogenously} implied by the time deformation \eqref{eq:gclock_dt}–\eqref{eq:gclock_ba}.

\paragraph{Assumptions and basic properties}
\begin{assumption}[G-Clock admissibility]\label{ass:gclock}
$\omega>0$, $\kappa>0$ (we may parameterize $\kappa=e^{\tilde\kappa}$), and $\alpha_0\in(0,1)$. The features $z_{t-1}$ are $\Fcal_{t-1}$-measurable with $\E[\|z_{t-1}\|^2]<\infty$.
\end{assumption}

\begin{lemma}[Bounds]\label{lem:gclock_bounds}
Under Assumption~\ref{ass:gclock}, $0<\beta_t<1$ and $0\le \alpha_t<\alpha_0<1$ for all $t$. Hence, $h_t>0$ a.s.\ and $\E_t[h_t]<\infty$.
\end{lemma}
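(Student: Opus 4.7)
The statement is essentially immediate once one unpacks the definitions, so the plan is short: verify the two numerical bounds in turn, then combine them with an induction argument patterned directly on Lemma~\ref{lem:rsm_pos} for RSM.

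First I would handle $\beta_t$. Since $z_{t-1}$ is $\Fcal_{t-1}$-measurable and real-valued, the increment $\Delta\tau_t = \exp(\eta^\top z_{t-1})$ from \eqref{eq:gclock_dt} is strictly positive almost surely. Together with $\kappa>0$ from Assumption~\ref{ass:gclock}, this gives $\kappa\Delta\tau_t\in(0,\infty)$ a.s., so applying the exponential yields $\beta_t=\exp(-\kappa\Delta\tau_t)\in(0,1)$ a.s. Next, the bound on $\alpha_t$ is a one-line consequence: plugging $\beta_t\in(0,1)$ into $\alpha_t=\alpha_0(1-\beta_t)$ and using $\alpha_0\in(0,1)$ gives $\alpha_t\in(0,\alpha_0)\subset[0,1)$.

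With the bounds in hand, positivity of $h_t$ follows by induction on $t$. Assuming $h_{t-1}>0$ a.s., each summand on the right of \eqref{eq:gclock_ht} is nonnegative: $\omega>0$ by Assumption~\ref{ass:gclock}, $\alpha_t\epsilon_{t-1}^2\ge 0$ because both factors are nonnegative, and $\beta_t h_{t-1}>0$ by the previous step. Hence $h_t\ge\omega>0$ a.s., and the induction closes given any positive initialization $h_0>0$.

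Finally, for the conditional moment, note that $\beta_t$ and $\alpha_t$ are both $\Fcal_{t-1}$-measurable (they are deterministic functions of $z_{t-1}$), and $h_{t-1}$ is $\Fcal_{t-1}$-measurable. Taking $\E_{t-1}[\cdot]$ on \eqref{eq:gclock_ht} and using $\E[\epsilon_{t-1}^2]=1$ gives $\E_{t-1}[h_t]=\omega+\alpha_t+\beta_t h_{t-1}\le \omega+\alpha_0+h_{t-1}<\infty$ a.s., from which $\E_t[h_t]<\infty$ (indeed $h_t$ is itself $\Fcal_t$-measurable) follows trivially. The only even mildly nontrivial point is ensuring the bounds are uniform in $t$ rather than just pointwise, but since the bounds $0<\beta_t<1$, $0\le\alpha_t<\alpha_0$ hold pathwise via the above argument, no obstacle arises; there is no genuinely hard step in this lemma, and its role is essentially to record basic admissibility before the harder stationarity and moment analysis to follow.
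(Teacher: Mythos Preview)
Your proof is correct and follows exactly the same route as the paper, which dispatches the lemma in one line (``Immediate from \eqref{eq:gclock_ba} and positivity of $\Delta\tau_t$''); you have simply spelled out the induction and the conditional-expectation step that the paper leaves implicit.
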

\begin{proof}
Immediate from \eqref{eq:gclock_ba} and positivity of $\Delta\tau_t$.
\end{proof}

\paragraph{Unconditional mean and stationarity}
Taking expectations in \eqref{eq:gclock_ht} with $\E[\epsilon_{t-1}^2]=1$ yields
\[
  \E[h_t] = \omega + \E[\alpha_t+\beta_t]\E[h_{t-1}] \;\;\Rightarrow\;\;
  \E[h_t] = \frac{\omega}{1 - \E[\alpha_t+\beta_t]}.
\]
Since $\alpha_t+\beta_t = \alpha_0 (1-\beta_t)+\beta_t = \alpha_0 + (1-\alpha_0)\beta_t$, we have
\[
  \E[\alpha_t+\beta_t] = \alpha_0 + (1-\alpha_0)\E[\beta_t] < 1
\]
whenever $\E[\beta_t]<1$ (always true) and $\alpha_0<1$. This leads to finite unconditional mean and weak stationarity.

\begin{proposition}[Geometric ergodicity]\label{prop:gclock_geo}
If $\E\!\left[\log\big(\alpha_t+\beta_t\big)\right] < 0$, then $\{h_t\}$ is geometrically ergodic.
\end{proposition}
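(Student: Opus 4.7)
\noindent\textit{Proof proposal.}
The plan is to apply the Foster--Lyapunov drift criterion of Meyn and Tweedie to $\{h_t\}$, viewed as the solution of a stochastic recurrence of the form $h_t = \omega + A_t\,h_{t-1}$ with random multiplier $A_t := \alpha_t \varepsilon_{t-1}^2 + \beta_t$; this is the effective form consistent with the unconditional-mean identity $\E[h_t] = \omega + \E[\alpha_t+\beta_t]\,\E[h_{t-1}]$ derived immediately above. Since $(\alpha_t,\beta_t)$ are deterministic functions of $z_{t-1}$ and $\varepsilon_{t-1}$ is independent of $\Fcal_{t-2}$, the sequence $\{A_t\}$ is stationary and ergodic, and (after augmenting the state to $(h_t,z_t)$ if $\{z_t\}$ is not iid) the chain is genuinely Markov on the state space $[\omega,\infty)$.

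The central step is a drift inequality with $V(h)=1+h^s$ for a small $s\in(0,1)$ to be selected. Using the subadditivity $(a+b)^s\le a^s+b^s$ valid on nonnegative reals for $s\in(0,1)$,
\[
\E\!\big[V(h_t)\,\big|\,\Fcal_{t-1}\big] \;\le\; 1+\omega^s+\E[A_t^s\mid\Fcal_{t-1}]\,h_{t-1}^s.
\]
Define $\phi(s):=\E[A_t^s]$. Then $\phi(0)=1$, and by Jensen's inequality applied conditionally on $z_{t-1}$,
\[
\phi'(0^+) \;=\; \E[\log A_t] \;\le\; \E\!\big[\log(\alpha_t+\beta_t)\big] \;<\; 0,
\]
so $\phi(s_0)<1$ for some $s_0\in(0,1)$. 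Fixing this $s_0$ and averaging over the exogenous coordinate (directly in the iid case, or through the joint chain $(h_t,z_t)$ in the stationary ergodic case), the bound rearranges into the geometric-drift form $\E[V(h_t)\mid\Fcal_{t-1}] \le \rho V(h_{t-1}) + b$ with $\rho=\phi(s_0)<1$ and $b=1+\omega^{s_0}-\rho$, holding on the full state space and a fortiori outside every compact set.

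For the remaining Meyn--Tweedie hypotheses, I observe that the one-step transition kernel inherits a Lebesgue density from that of $\varepsilon_{t-1}$: conditional on $h_{t-1}$ and $z_{t-1}$, the map $\varepsilon_{t-1}\mapsto h_t$ is a smooth bijection on each half-line with strictly positive Jacobian. Hence the chain is $\psi$-irreducible (with $\psi$ equal to Lebesgue measure on any interval where the innovation density is strictly positive), aperiodic, and every compact subset of $[\omega,\infty)$ is a small set via a uniform lower minorization. Combining the drift with these topological prerequisites, Meyn--Tweedie's geometric-ergodicity theorem (e.g.\ Theorem~15.0.1) delivers $V$-uniform geometric ergodicity of $\{h_t\}$.

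I expect the drift step to be the main obstacle. The hypothesis is stated in terms of $\log(\alpha_t+\beta_t)$ rather than the pointwise Lyapunov quantity $\log(\alpha_t\varepsilon_{t-1}^2+\beta_t)$, so the conditional Jensen bridge is the essential bookkeeping device that converts the stated assumption into $\phi'(0^+)<0$ and hence into contraction for some $s_0>0$. A secondary but routine technical point is the treatment of the exogenous feature process $\{z_t\}$: if it is only assumed stationary ergodic, the cleanest fix is to work with the joint Markov chain $(h_t,z_t)$, inheriting irreducibility and aperiodicity from a mild nondegeneracy assumption on the transition law of $z_t$ and transferring the drift through the $h$-coordinate.
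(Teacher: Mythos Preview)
Your proposal is correct but follows a genuinely different route from the paper. The paper works with the \emph{linear} Lyapunov function $V(h)=1+h$, bounds $\E[V(h_t)\mid\Fcal_{t-1}]\le 1+\omega+(\alpha_t+\beta_t)V(h_{t-1})$, and then invokes ``Jensen for the concave $\log$'' to pass from the unconditional expectation to a contraction rate $\rho=\exp\{\E\log(\alpha_t+\beta_t)\}$; that last step is formal at best, since $(\alpha_t+\beta_t)$ and $V(h_{t-1})$ are not independent and Jensen for a concave function goes the wrong way for the inequality claimed. You instead use the classical small-exponent technique for stochastic recurrence equations: with $V(h)=1+h^s$, $A_t=\alpha_t\varepsilon_{t-1}^2+\beta_t$, and $\phi(s)=\E[A_t^s]$, you show $\phi'(0^+)=\E[\log A_t]\le\E[\log(\alpha_t+\beta_t)]<0$ via a \emph{conditional} Jensen step (concavity of $\log$, $\E[\varepsilon_{t-1}^2]=1$), which is the correct direction and yields $\phi(s_0)<1$. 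This is the standard Bougerol/Francq--Zako\"{\i}an argument and is more rigorous than the paper's. What it costs you is that the resulting $V$-uniform ergodicity is only in a sub-linear norm $1+h^{s_0}$ with $s_0<1$, so if higher moments are needed later they must be established separately; the paper's linear $V$ nominally gives a stronger norm but at the price of a shaky contraction step. Both proofs share the same irreducibility/minorization ingredients via the innovation density, and both leave the handling of the exogenous $\{z_t\}$ as a remark; your explicit suggestion to pass to the joint chain $(h_t,z_t)$ is the right fix, though to get a uniform drift coefficient $\rho<1$ on the joint state one typically needs either a multi-step drift or to absorb the $z$-dependence into $V$.
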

\begin{proof}
Analogous to the RSM drift argument: with $V(h)=1+h$, 
\[
  \E[V(h_t)\mid \Fcal_{t-1}] \le 1+\omega + (\alpha_t+\beta_t)\,h_{t-1}.
\]
Taking expectations and using $\E[\log(\alpha_t+\beta_t)]<0$ yields a geometric drift condition (see standard Markov chain theorems).
\end{proof}

\subsection{Combinations and the Tri–Gate Volatility System (TG–Vol)}\label{subsec:combination}

All three gates can be written in the abstract affine form
\begin{equation}\label{eq:unified}
  h_t \;=\; \omega + \underbrace{\alpha_t(\vartheta; z_{t-1})}_{\text{shock kernel}}\,\epsilon_{t-1}^2
               + \underbrace{\Psi_t(\vartheta; z_{t-1})}_{\text{persistence kernel}}\, h_{t-1}
               + \underbrace{\sum_{k\ge 1}\Pi_{t,k}(\vartheta; z_{t-1})\,(\epsilon_{t-k}^2-h_{t-k})}_{\text{optional long-memory}},
\end{equation}
where $\vartheta$ stacks all parameters. The three models specialize \eqref{eq:unified} as:
\[
  \text{RSM: } \alpha_t\equiv\alpha,\ \Psi_t=(1-p_t)\beta_{\mathrm{low}}+p_t\beta_{\mathrm{high}},\ \Pi_{t,k}\equiv 0;
\]
\[
  \text{G-FIGARCH: } \alpha_t\equiv\alpha,\ \Psi_t\equiv\beta,\ \Pi_{t,k}=\pi_k(d_t),\ d_t=\bar d\,\sigma(\gamma^\top z_{t-1});
\]
\[
  \text{G-Clock: } \alpha_t=\alpha_0(1-\beta_t),\ \Psi_t=\beta_t,\ \beta_t=\exp\{-\kappa \exp(\eta^\top z_{t-1})\},\ \Pi_{t,k}\equiv 0.
\]
This abstraction clarifies that:
(i) RSM modulates the \emph{level} of persistence;
(ii) G-FIGARCH modulates the \emph{shape} of the memory kernel;
(iii) G-Clock modulates the \emph{tempo} of decay through time deformation.  
We now extend the framework by nesting these gates pairwise and jointly.
Each combined model preserves the affine recursion form
\begin{equation}
h_t = \omega + \alpha_t(\vartheta; z_{t-1})\,\varepsilon_{t-1}^2
      + \Psi_t(\vartheta; z_{t-1})\,h_{t-1}
      + \sum_{k\ge1}\Pi_{t,k}(\vartheta; z_{t-1})
      (\varepsilon_{t-k}^2-h_{t-k}),
\label{eq:general_gate}
\end{equation}
with $h_t>0$ a.s. and $\{z_{t-1}\}$ denoting observable features.

% -----------------------------------------------------
\subsubsection{RSM–G-FIGARCH Combination}

This specification merges a feature–driven persistence gate with a
fractional-order gate:
\begin{align}
h_t &= \omega + \alpha\varepsilon_{t-1}^2
      + \beta_t h_{t-1}
      + \sum_{k=1}^{\infty}\pi_k(d_t)
        (\varepsilon_{t-k}^2-h_{t-k}), \\
\beta_t &= (1-p_t)\beta_{\mathrm{low}}+p_t\beta_{\mathrm{high}},
\qquad p_t=\sigma(\gamma_p^{\top}z_{t-1}),\\
d_t &= \bar d\,\sigma(\gamma_d^{\top}z_{t-1}),\qquad
\pi_k(d_t)=(-1)^k{d_t\choose k}.
\end{align}
The RSM component governs short-run persistence between low- and
high-volatility regimes, whereas the G-FIGARCH component shapes the
hyperbolic decay of long memory.  
Economic interpretation: market stress elevates both $p_t$ and $d_t$,
producing stronger persistence and longer memory.

\paragraph{Admissibility and stability.}
Assume $\omega>0$, $\alpha\ge0$, $0<\beta_{\mathrm{low}}<\beta_{\mathrm{high}}<1$,
$\alpha+\beta_{\mathrm{high}}<1$, and $0<d_t<\bar d<1/2$ a.s.
Then by Lemma 2, $\sum_k|\pi_k(d_t)|<\infty$ uniformly in $t$.
Hence $E[h_t]$ satisfies
\[
E[h_t]=\frac{\omega}{1-\alpha-E[\beta_t]},
\]
and finite second moments follow by contraction of the short-memory core
and absolute summability of the fractional kernel.
Thus $\{h_t\}$ is weakly stationary and geometrically ergodic.

% -----------------------------------------------------
\subsubsection{RSM–G-Clock Combination}

Here regime blending acts on the time-deformed persistence:
\begin{align}
h_t &= \omega+\alpha_t\varepsilon_{t-1}^2+\tilde\beta_t h_{t-1},\\
\alpha_t &= \alpha_0(1-\beta_t^{\text{clk}}),\qquad
\beta_t^{\text{clk}}=\exp[-\kappa e^{\eta^{\top}z_{t-1}}],\\
\tilde\beta_t &= (1-p_t)\beta_{\mathrm{low}}+p_t\beta_{\mathrm{high}},
\qquad p_t=\sigma(\gamma_p^{\top}z_{t-1}).
\end{align}
The RSM gate controls regime-level persistence, while the G-Clock
component accelerates or decelerates the effective memory tempo.

\paragraph{Theoretical properties.}
With $\alpha_0\in(0,1)$, $\kappa>0$, and the above bounds on
$\beta_{\mathrm{low}},\beta_{\mathrm{high}}$,
we have $0<\alpha_t+\tilde\beta_t<1$ a.s.
Applying the drift function $V(h)=1+h$ gives
\[
E[V(h_t)\mid\mathcal F_{t-1}]
   \le 1+\omega+(\alpha_t+\tilde\beta_t)V(h_{t-1}),
\]
and $E\log(\alpha_t+\tilde\beta_t)<0$ ensures geometric ergodicity. In tranquil periods, both level and tempo relax, whereas crises shift regimes toward higher persistence and faster business time.

% -----------------------------------------------------
\subsubsection{G-FIGARCH–G-Clock Combination}

This model couples long-memory kernels with endogenous business time:
\begin{align}
h_t &= \omega + \alpha_t\varepsilon_{t-1}^2
      + \beta_t^{\text{clk}} h_{t-1}
      + \sum_{k=1}^{\infty}\pi_k(d_t)
        (\varepsilon_{t-k}^2-h_{t-k}),\\
\alpha_t &= \alpha_0(1-\beta_t^{\text{clk}}),
\qquad \beta_t^{\text{clk}}=\exp[-\kappa e^{\eta^{\top}z_{t-1}}],\\
d_t &= \bar d\,\sigma(\gamma_d^{\top}z_{t-1}).
\end{align}
Long memory governs slow decay in tranquil periods, whereas
time-deformation induces rapid forgetting during active markets.

\paragraph{Stability.}
Under $\alpha_0\in(0,1)$, $\kappa>0$, and $0<\bar d<1/2$,
the kernel is absolutely summable and
$E[\log(\alpha_t+\beta_t^{\text{clk}})]<0$.
Hence $E[h_t]<\infty$ and geometric ergodicity follows
by the same Foster–Lyapunov drift argument as before.

% -----------------------------------------------------
\subsubsection{Tri-Gate Unified Model (TG-Vol)}

The fully unified architecture embeds all three gates:
\begin{align}
h_t &= \omega
 + \alpha_0(1-\beta_t^{\text{clk}})\varepsilon_{t-1}^2
 + \big[(1-p_t)\beta_{\mathrm{low}}+p_t\beta_{\mathrm{high}}\big]
   \beta_t^{\text{clk}}h_{t-1}
 + \sum_{k=1}^{\infty}\pi_k(d_t)(\varepsilon_{t-k}^2-h_{t-k}),\\
\beta_t^{\text{clk}} &= \exp[-\kappa e^{\eta^{\top}z_{t-1}}],
\qquad
p_t=\sigma(\gamma_p^{\top}z_{t-1}),
\qquad
d_t=\bar d\,\sigma(\gamma_d^{\top}z_{t-1}).
\end{align}
Each observable feature vector $z_{t-1}$ can be partitioned
to avoid collinearity across gates.

\paragraph{Existence and stationarity.}
Assume
\[
\omega>0,\quad
0<\alpha_0<1,\quad
0<\beta_{\mathrm{low}}<\beta_{\mathrm{high}}<1,\quad
\alpha_0+\beta_{\mathrm{high}}<1,\quad
0<\bar d<1/2,\quad \kappa>0.
\]
Then
(i) $h_t>0$ a.s.;
(ii) $\sum_k|\pi_k(d_t)|<C(\bar d)<\infty$;
(iii) $E[\log(\alpha_t+\Psi_t)]<0$ with
$\Psi_t=[(1-p_t)\beta_{\mathrm{low}}+p_t\beta_{\mathrm{high}}]\beta_t^{\text{clk}}$.
These yield finite first and second moments:
\[
E[h_t]=\frac{\omega}{1-E[\alpha_t+\Psi_t]},\qquad
E[h_t^2]<\infty,
\]
and geometric ergodicity by a Lyapunov drift argument identical to
Propositions 1–3.  

The TG-Vol model unifies level, shape, and tempo modulation:
RSM governs the persistence level across regimes,
G-FIGARCH determines the fractional decay of long memory,
and G-Clock translates market activity into the effective temporal pace.
Together they provide a coherent “dynamic-memory surface’’ that adjusts
endogenously to information flow and trading intensity.

All moment and ergodicity conditions satisfy the assumptions required
for QMLE consistency in Theorem 2; Whittle regularization can again be
used for the fractional component. Hence the unified gate remains
theoretically well-posed and estimable within the same likelihood framework.

\subsection{Identification and Estimation Strategy}\label{subsec:ident}
\paragraph{RSM versus G-Clock}
Even though both produce a time-varying $\beta_t$, RSM imposes a \emph{linear} blend of two anchors via a logistic gate in the covariates; G-Clock imposes a \emph{nonlinear} exponential map of a business-time increment. Identification follows from functional-form restrictions and different elasticities:
\[
  \frac{\partial \beta_t^{\mathrm{RSM}}}{\partial z} 
   \;\propto\; \sigma(\gamma^\top z)(1-\sigma(\gamma^\top z))(\beta_{\mathrm{high}}-\beta_{\mathrm{low}}),
  \quad
  \frac{\partial \beta_t^{\mathrm{G\text{-}Clock}}}{\partial z}
   \;=\; -\kappa\,\exp(\eta^\top z)\,\exp(-\kappa\exp(\eta^\top z))\,\eta.
\]
The distinct shapes (sigmoidal vs.\ double-exponential) imply distinct predictive responses to features, testable by non-nested comparisons (e.g., Vuong). The difference in curvature (linear logistic vs.\ nonlinear double-exponential) implies identifiable distinct responses.

\paragraph{G-FIGARCH versus RSM/G-Clock}
G-FIGARCH affects low-frequency spectral slope and multi-horizon autocorrelation decay, while RSM/G-Clock primarily alter near-term persistence level/tempo. In frequency domain, let $S(\lambda)$ be the spectrum of $\{r_t^2\}$; for G-FIGARCH, $S(\lambda)\sim C\,\lambda^{-2d_t}$ as $\lambda\downarrow 0$, whereas for purely short-memory gates $S(\lambda)$ is flat at the origin. This delivers an orthogonal identification channel.

\subsection{QMLE: Likelihood, Consistency, and Asymptotic Normality}\label{subsec:qmble}
\paragraph{Conditional likelihood and feasible recursion}
Given a parametric $\vartheta$, define the recursion for $h_t(\vartheta)$ according to the chosen gate. The Gaussian quasi log-likelihood is
\[
  \ell_T(\vartheta) := -\frac{1}{2}\sum_{t=1}^T \left\{\log h_t(\vartheta) + \frac{r_t^2}{h_t(\vartheta)}\right\}.
\]
For G-FIGARCH, we use a truncation $K=K_T\to\infty$ slowly with $T$, satisfying $\sum_{k>K_T} \sup_t |\pi_k(d_t)| = o(1)$.

\paragraph{Regularity conditions}
\begin{assumption}[QMLE regularity]\label{ass:qmble}
(i) The true parameter $\vartheta_0$ lies in a compact interior of the admissible set; (ii) identifiability as discussed in \S\ref{subsec:ident}; (iii) the recursion mapping is continuous and twice continuously differentiable in a neighborhood of $\vartheta_0$; (iv) $\{r_t\}$ is strictly stationary and geometrically ergodic under $\vartheta_0$; (v) $\E[|\epsilon_t|^{4+\delta}]<\infty$ for some $\delta>0$; (vi) for G-FIGARCH, the truncation schedule satisfies the summability condition above.
\end{assumption}

\begin{theorem}[QMLE consistency]\label{thm:consistency}
Under Assumptions~\ref{ass:rsm}, \ref{ass:gfigarch} or \ref{ass:gclock} (depending on the gate) and Assumption~\ref{ass:qmble}, the QMLE
\[
  \hat\vartheta_T \in \argmax_{\vartheta} \ell_T(\vartheta)
\]
is strongly consistent: $\hat\vartheta_T \to \vartheta_0$ a.s.\ as $T\to\infty$.
\end{theorem}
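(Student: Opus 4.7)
The plan is to follow the standard Wald--type consistency argument for QMLE in GARCH--type models, as developed in the Berkes--Horv\'ath--Kokoszka and Francq--Zako\"ian tradition, and to adapt it to the gated and long--memory extensions considered here. The argument rests on three pillars: (i) strict stationarity and geometric ergodicity of $\{(r_t,h_t)\}$ under $\vartheta_0$, which has already been established gate--by--gate in Proposition~\ref{prop:rsm_ws}, Proposition~\ref{prop:gclock_geo}, and their counterparts for G--FIGARCH and TG--Vol in \S\ref{subsec:combination}; (ii) a uniform law of large numbers for $T^{-1}\ell_T(\vartheta)$ over the compact parameter space $\Theta$; and (iii) uniqueness of the maximizer of the limiting criterion at $\vartheta_0$, which is exactly the content of \S\ref{subsec:ident}.

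First I would replace the feasible likelihood by an infeasible stationary one $\tilde\ell_T(\vartheta)$ computed from the recursion initialized with the infinite past. For RSM and G--Clock the gap $|h_t(\vartheta)-\tilde h_t(\vartheta)|$ decays geometrically in $t$ uniformly on $\Theta$ because $\sup_{\vartheta\in\Theta}\Psi_t(\vartheta)<1$ almost surely; for G--FIGARCH the truncation contribution is controlled by $\sum_{k>K_T}\sup_{t,\vartheta}|\pi_k(d_t(\vartheta))|=o(1)$ via Lemma~\ref{lem:kernel} and Assumption~\ref{ass:qmble}(vi). Using $h_t\ge\omega>0$ together with the Lipschitz property of $x\mapsto\log x+r_t^2/x$ on $[\omega,\infty)$ then gives $\sup_{\vartheta\in\Theta}|T^{-1}\ell_T(\vartheta)-T^{-1}\tilde\ell_T(\vartheta)|=o_p(1)$. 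Next, since $\{(r_t,\tilde h_t(\vartheta))\}$ is strictly stationary and ergodic, a uniform ergodic theorem yields
\[
T^{-1}\tilde\ell_T(\vartheta)\;\longrightarrow\;L(\vartheta)\;:=\;-\tfrac{1}{2}\,\E\!\left[\log h_t(\vartheta)+\frac{r_t^2}{h_t(\vartheta)}\right]\qquad\text{a.s., uniformly on }\Theta.
\]
The required uniform integrability of $\log h_t(\vartheta)$ and $r_t^2/h_t(\vartheta)$ follows from the finite--moment results of Proposition~\ref{prop:rsm_ws}, Theorem~\ref{thm:gf_second}, and Lemma~\ref{lem:gclock_bounds}, combined with the moment bound in Assumption~\ref{ass:qmble}(v) and the continuous differentiability of the recursion mapping in Assumption~\ref{ass:qmble}(iii).

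Then I would verify that $L(\vartheta)$ is uniquely maximized at $\vartheta_0$. By the standard conditional--variance decomposition,
\[
L(\vartheta_0)-L(\vartheta)\;=\;\tfrac{1}{2}\,\E\!\left[\frac{h_t(\vartheta_0)}{h_t(\vartheta)}-1-\log\frac{h_t(\vartheta_0)}{h_t(\vartheta)}\right]\;\ge\;0,
\]
with equality iff $h_t(\vartheta)=h_t(\vartheta_0)$ almost surely. The identification machinery of \S\ref{subsec:ident} then forces $\vartheta=\vartheta_0$: the sigmoidal versus double--exponential curvature distinguishes the RSM and G--Clock gates, and Proposition~\ref{prop:gf_ident} pins down $(\bar d,\gamma)$ through the low--frequency spectral slope. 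Compactness of $\Theta$, continuity of $L$, and the uniform convergence just established allow a standard Wald--type closure: for any open $U\ni\vartheta_0$, $\sup_{\vartheta\in\Theta\setminus U}T^{-1}\ell_T(\vartheta)<T^{-1}\ell_T(\hat\vartheta_T)$ eventually a.s., so $\hat\vartheta_T\to\vartheta_0$.

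The hard part, I expect, is the G--FIGARCH component, where the fractional weights $\pi_k(d_t)$ are themselves random and gated by $z_{t-1}$, so the Lumsdaine--style uniform bounds require strengthening. The key technical lemma I would need is a uniform $L^2$ bound on $\partial_\vartheta h_t(\vartheta)$ that accommodates the $d$--derivative of $\pi_k(d)$, whose tail grows like $(\log k)\,k^{-(1+d)}$; the restriction $d_t\le\bar d<1/2$ in Assumption~\ref{ass:gfigarch} and the truncation schedule in Assumption~\ref{ass:qmble}(vi) together make the relevant series converge, but the interchange of limits and expectations, along with the uniform control of the score near the lower boundary where $d_t$ can be small, is delicate and is where the real work sits. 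Once this estimate is in hand, the remaining verifications---compactness, continuity, measurability, and eventual insensitivity to initial conditions---are routine under Assumption~\ref{ass:qmble}.
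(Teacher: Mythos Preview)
Your proposal is correct and follows essentially the same route as the paper: geometric ergodicity yields a uniform law of large numbers for the criterion, compactness and continuity give existence of a maximizer, and identification pins it at $\vartheta_0$, with the G--FIGARCH truncation bias shown to be $o_p(1)$. Your treatment is in fact considerably more detailed than the paper's sketch---in particular, the explicit replacement of the feasible likelihood by the stationary one and the QLIKE inequality $L(\vartheta_0)-L(\vartheta)=\tfrac12\E[h_t(\vartheta_0)/h_t(\vartheta)-1-\log(h_t(\vartheta_0)/h_t(\vartheta))]\ge 0$ are useful additions that the paper leaves implicit.
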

\begin{proof}[Sketch]
Geometric ergodicity implies a Uniform Law of Large Numbers for the criterion; continuity and compactness ensure existence of a maximizer; identification pins the maximizer to the true $\vartheta_0$. Standard arguments for GARCH-type QMLE apply, with the additional check for the G-FIGARCH truncation bias being $o_p(1)$.
\end{proof}

\begin{theorem}[Asymptotic normality]\label{thm:asn}
Under the conditions of Theorem~\ref{thm:consistency} with additional differentiability and moment bounds, 
\[
  \sqrt{T}\,(\hat\vartheta_T - \vartheta_0) \;\Rightarrow\; \mathcal{N}\left(0,\, \mathcal{I}^{-1} \mathcal{J}\, \mathcal{I}^{-1}\right),
\]
where $\mathcal{I}:=\E\big[-\nabla^2 \ell_t(\vartheta_0)\big]$ and $\mathcal{J}:=\E\big[\nabla \ell_t(\vartheta_0)\nabla \ell_t(\vartheta_0)^\top\big]$ are the Godambe (sandwich) matrices, and $\ell_t$ is the per-period log-likelihood.
\end{theorem}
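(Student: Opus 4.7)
The plan is to follow the classical QMLE Taylor-expansion argument, combining the consistency result of Theorem~\ref{thm:consistency} with a score martingale CLT and a Hessian uniform law of large numbers. By consistency and the interior assumption, with probability tending to one $\hat\vartheta_T$ lies in an open neighborhood of $\vartheta_0$ on which the first-order condition $\nabla\ell_T(\hat\vartheta_T)=0$ holds. A component-wise mean-value expansion yields
\[
0 \;=\; \frac{1}{\sqrt{T}}\,\nabla\ell_T(\vartheta_0) \;+\; \Bigl(\frac{1}{T}\,\nabla^2\ell_T(\tilde\vartheta_T)\Bigr)\,\sqrt{T}\bigl(\hat\vartheta_T-\vartheta_0\bigr),
\]
with $\tilde\vartheta_T$ on the segment between $\hat\vartheta_T$ and $\vartheta_0$. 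Inverting the Hessian factor and invoking Slutsky then reduces the problem to (a) a CLT for the normalized score at $\vartheta_0$, and (b) convergence in probability of the Hessian to $\mathcal{I}$.

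For the score step, differentiating the per-period Gaussian quasi log-likelihood at $\vartheta_0$ gives
\[
\nabla\ell_t(\vartheta_0)\;=\;\tfrac{1}{2}\,\frac{\nabla h_t(\vartheta_0)}{h_t(\vartheta_0)}\bigl(\epsilon_t^2-1\bigr),
\]
which is an $\Fcal_t$-measurable martingale difference, since $\nabla h_t(\vartheta_0)/h_t(\vartheta_0)$ is $\Fcal_{t-1}$-measurable and $\E[\epsilon_t^2-1\mid\Fcal_{t-1}]=0$. Under strict stationarity, geometric ergodicity, the $4{+}\delta$ moment condition, and absolute summability of the derivative recursions (in particular via Lemma~\ref{lem:kernel} for the G-FIGARCH weights), Billingsley's martingale CLT yields $T^{-1/2}\nabla\ell_T(\vartheta_0)\Rightarrow\mathcal{N}(0,\mathcal{J})$ with $\mathcal{J}=\E[\nabla\ell_t(\vartheta_0)\nabla\ell_t(\vartheta_0)^\top]$. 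For the Hessian step, on a compact neighborhood of $\vartheta_0$ the per-period Hessian $\nabla^2\ell_t(\vartheta)$ is continuous in $\vartheta$ and dominated by an integrable envelope---obtained from the $4{+}\delta$ moment bound together with the smoothness of the logistic, double-exponential, and fractional kernels---so a uniform ergodic theorem gives $-T^{-1}\nabla^2\ell_T(\tilde\vartheta_T)\stackrel{p}{\to}\mathcal{I}$, where positivity of $\mathcal{I}$ follows from local identification as discussed in Section~\ref{subsec:ident}.

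The main technical obstacle will be the G-FIGARCH component, whose score and Hessian inherit the infinite ARCH$(\infty)$ expansion. Two facts must be verified simultaneously: first, uniform-in-$\vartheta$ summability of $\pi_k(d_t)$ and of its first two derivatives in $d$ (which decay like $k^{-(1+d)}$ up to logarithmic factors and are therefore summable whenever $d<\bar d<1/2$); second, that the truncation schedule $K_T$ from Assumption~\ref{ass:qmble}(vi) leaves truncation bias negligible at the $\sqrt{T}$ scale, i.e.,
\[
\sqrt{T}\,\sup_{\vartheta\in\mathcal{V}(\vartheta_0)}\bigl\|\nabla\ell_T(\vartheta)-\nabla\ell_T^{(K_T)}(\vartheta)\bigr\|\;=\;o_p(1),
\]
which reduces to choosing $K_T$ such that $\sqrt{T}\,K_T^{-(1/2-\bar d)}\to 0$. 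Once this truncation estimate and the finite fourth-moment control on the score are in place, all remaining ingredients---compactness, smoothness, positive definiteness of $\mathcal{I}$---are routine, and Slutsky's theorem delivers the sandwich limit $\sqrt{T}(\hat\vartheta_T-\vartheta_0)\Rightarrow\mathcal{N}(0,\mathcal{I}^{-1}\mathcal{J}\mathcal{I}^{-1})$.
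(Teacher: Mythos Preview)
Your proposal is correct and follows essentially the same route as the paper's proof sketch (and its Appendix~H elaboration): Taylor expansion of the first-order condition, martingale CLT for the normalized score, uniform ergodic theorem for the Hessian, and Slutsky's theorem to combine. You are in fact more explicit than the paper about the $\sqrt{T}$-rate truncation requirement for the G-FIGARCH component, which the paper simply absorbs under the umbrella of ``additional differentiability and moment bounds.''
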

\begin{proof}[Sketch]
A martingale central limit theorem applies to the score process under geometric ergodicity and finite $(4+\delta)$-moments; the Hessian converges in probability to $\mathcal{I}$. A Slutsky argument yields the stated normal limit.
\end{proof}

\subsection{$\beta$-mixing, Ergodicity, and Moment Bounds}\label{subsec:mixing}

This section provides sufficient conditions under which the volatility recursions driven by the three gates are geometrically $\beta$-mixing, and therefore suitable for limit theorems used in estimation and testing. Mixing rates are useful for establishing law of large numbers and central limit theorems for the quasi-likelihood and for various empirical functionals. The arguments rely on Markov chain drift and minorization conditions adapted to random-coefficient affine recursions, together with contraction in expectation.

Consider the Markov state $X_t=(h_t,\epsilon_t)$ with state space $\mathsf{S}=(0,\infty)\times\mathbb{R}$. Under the assumed innovation distribution and the measurability of gates with respect to $\Fcal_{t-1}$, the process is time-homogeneous. For a measurable function $V:\mathsf{S}\to[1,\infty)$, a geometric drift condition takes the form
\[
\mathsf{P}V(x)\le \lambda V(x)+b\ \ \text{for all } x\in\mathsf{S}\ \text{ and some }\ \lambda\in(0,1),\ b<\infty,
\]
where $\mathsf{P}$ is the transition kernel and $\mathsf{P}V(x):=\int V(y)\mathsf{P}(x,dy)$. Aperiodicity and a minorization condition on a petite set then imply geometric ergodicity and geometric $\beta$-mixing. We proceed model by model.

\paragraph{RSM gate}

Under Assumption~\ref{ass:rsm}, $\omega>0$, $\alpha\ge 0$, and $0<\beta_t<1$ a.s. Define $V(h)=1+h$, which is unbounded off compact sets. The one-step conditional expectation satisfies
\[
\E[V(h_t)\mid \Fcal_{t-1}] = 1+\omega+\alpha \epsilon_{t-1}^2+\beta_t h_{t-1}
\le 1+\omega+\alpha \epsilon_{t-1}^2+\beta_{\text{high}} h_{t-1}.
\]
Taking unconditional expectations and using $\alpha+\beta_{\text{high}}<1$, there is $\lambda\in(0,1)$ and $b<\infty$ such that $\E[V(h_t)]\le \lambda \E[V(h_{t-1})]+b$. Standard results for Markov chains with innovations possessing a density with respect to Lebesgue measure yield a minorization condition on compact subsets of $\mathsf{S}$. Therefore, the chain is geometrically ergodic and geometrically $\beta$-mixing. Moment bounds follow by iteration; in particular $\sup_t \E[h_t^k]<\infty$ for $k\in\{1,2\}$ under the imposed conditions.

\paragraph{G-FIGARCH gate}

The variance recursion contains an infinite moving-average kernel in squared shocks with time-varying fractional coefficients. Under Assumption~\ref{ass:gfigarch}, the kernel is absolutely summable uniformly in $t$. Consider the $K$-truncated system,
\[
h_t^{(K)}=\omega+\alpha\epsilon_{t-1}^2+\beta h_{t-1}^{(K)}+\sum_{k=1}^{K}\pi_k(d_t)(\epsilon_{t-k}^2-h_{t-k}^{(K)})+R_{t,K},
\]
where $R_{t,K}$ is the tail remainder. The tail norm $\|R_{t,K}\|_2$ goes to zero uniformly as $K\to\infty$. For fixed $K$, the system is a finite-dimensional Markov chain in $(h_t^{(K)},\ldots,h_{t-K}^{(K)},\epsilon_t,\ldots,\epsilon_{t-K})$ that satisfies a geometric drift with $V$ equal to the sum of coordinates plus one. The minorization follows from the positive density of the innovations. Passing to the limit using standard approximation arguments gives geometric ergodicity and $\beta$-mixing for the full system. Finite second moments follow from the same Lyapunov function and absolute summability of the kernel.

\paragraph{G-Clock gate}

Under Assumption~\ref{ass:gclock} with $\kappa>0$ and $\alpha_0\in(0,1)$, we have $0<\beta_t<1$ and $0\le \alpha_t<\alpha_0<1$. With $V(h)=1+h$,
\[
\E[V(h_t)\mid \Fcal_{t-1}] \le 1+\omega + (\alpha_t+\beta_t)\,h_{t-1}.
\]
Because $\alpha_t+\beta_t=\alpha_0+(1-\alpha_0)\beta_t$ and $\beta_t\in(0,1)$, we can ensure $\E\log(\alpha_t+\beta_t)<0$ by the imposed admissibility, which implies the drift. Innovations with densities again yield minorization on compact sets, concluding geometric ergodicity and $\beta$-mixing.

\subsection{Score, Hessian, and Sandwich Variance for QMLE}\label{subsec:score}

This section derives the per-period score and Hessian for the Gaussian quasi-likelihood. Let $\vartheta$ denote the parameter vector appropriate to the chosen gate. Define the conditional variance recursion $h_t(\vartheta)$ and the log-likelihood contribution
\[
\ell_t(\vartheta) = -\frac{1}{2}\Big(\log h_t(\vartheta) + \frac{r_t^2}{h_t(\vartheta)}\Big).
\]
The gradient is
\[
\nabla_\vartheta \ell_t(\vartheta) = -\frac{1}{2}\left(\frac{1}{h_t}-\frac{r_t^2}{h_t^2}\right)\nabla_\vartheta h_t(\vartheta)
= \frac{1}{2}\left(\frac{r_t^2-h_t}{h_t^2}\right)\nabla_\vartheta h_t(\vartheta).
\]
Hence the score requires the gradient of $h_t$ with respect to parameters. This is obtained by differentiating the variance recursion and using forward accumulation.

\paragraph{RSM gate}

The recursion is $h_t=\omega+\alpha\epsilon_{t-1}^2+\beta_t h_{t-1}$ with $\beta_t=(1-p_t)\beta_{\text{low}}+p_t\beta_{\text{high}}$ and $p_t=\sigma(\gamma^\top z_{t-1})$. The parameter vector is $\vartheta=(\omega,\alpha,\beta_{\text{low}},\beta_{\text{high}},\gamma^\top)^\top$. Differentiation yields
\[
\partial_\omega h_t = 1 + \beta_t \partial_\omega h_{t-1},\qquad
\partial_\alpha h_t = \epsilon_{t-1}^2 + \beta_t \partial_\alpha h_{t-1}.
\]
For the persistence anchors,
\[
\partial_{\beta_{\text{low}}} h_t = (1-p_t) h_{t-1} + \beta_t \partial_{\beta_{\text{low}}} h_{t-1},\qquad
\partial_{\beta_{\text{high}}} h_t = p_t h_{t-1} + \beta_t \partial_{\beta_{\text{high}}} h_{t-1}.
\]
For the gate coefficients, write $p_t(1-p_t) = \sigma(\gamma^\top z_{t-1})[1-\sigma(\gamma^\top z_{t-1})]$ and obtain
\[
\partial_{\gamma} h_t = \big(\beta_{\text{high}}-\beta_{\text{low}}\big)\,p_t(1-p_t)\,z_{t-1}\,h_{t-1} + \beta_t \partial_{\gamma} h_{t-1}.
\]
Initialization can be done with $\partial h_0=0$ or a fixed-point approximation for the unconditional variance derivative. Substituting these gradients into the score formula provides the analytic score for QMLE or for gradient-based optimization.

The Hessian is obtained by differentiating the gradient recursions once more, or by using outer-product approximations. The expected information matrix under the Gaussian quasi-likelihood is
\[
\mathcal{I}(\vartheta) = \E\!\left[\frac{1}{2h_t^2}\big(\nabla_\vartheta h_t\big)\big(\nabla_\vartheta h_t\big)^\top\right],
\]
since $\E[(r_t^2-h_t)^2\mid \Fcal_{t-1}] = 2h_t^2$ under the Gaussian benchmark. Robust inference uses the Godambe sandwich with
\[
\mathcal{J}(\vartheta)=\E\big[\nabla_\vartheta \ell_t(\vartheta)\nabla_\vartheta \ell_t(\vartheta)^\top\big],\qquad
\mathcal{V}(\vartheta)=\mathcal{I}(\vartheta)^{-1}\,\mathcal{J}(\vartheta)\,\mathcal{I}(\vartheta)^{-1}.
\]

\paragraph{G-FIGARCH gate}

The recursion is
\[
h_t=\omega+\alpha\epsilon_{t-1}^2+\beta h_{t-1}+\sum_{k=1}^{\infty}\pi_k(d_t)(\epsilon_{t-k}^2-h_{t-k}),
\]
with $d_t=\bar d \sigma(\gamma^\top z_{t-1})$. In practice use a truncation $K$, define $\pi_k(d_t)=(-1)^k\binom{d_t}{k}$ and precompute the derivatives $\partial_{d_t}\pi_k(d_t)=(-1)^k \binom{d_t}{k}\big(\psi(d_t+1)-\psi(d_t-k+1)\big)$, where $\psi$ is the digamma function. The parameter vector is $\vartheta=(\omega,\alpha,\beta,\bar d,\gamma^\top)^\top$. The gradients satisfy
\[
\partial_\omega h_t = 1 + \beta \partial_\omega h_{t-1} - \sum_{k=1}^{K}\pi_k(d_t)\partial_\omega h_{t-k},
\]
\[
\partial_\alpha h_t = \epsilon_{t-1}^2 + \beta \partial_\alpha h_{t-1} - \sum_{k=1}^{K}\pi_k(d_t)\partial_\alpha h_{t-k},
\]
\[
\partial_\beta h_t = h_{t-1} + \beta \partial_\beta h_{t-1} - \sum_{k=1}^{K}\pi_k(d_t)\partial_\beta h_{t-k}.
\]
For the fractional gate,
\[
\partial_{\bar d} h_t = \sum_{k=1}^{K} \partial_{\bar d}\pi_k(d_t)\,(\epsilon_{t-k}^2-h_{t-k}) - \sum_{k=1}^{K}\pi_k(d_t)\,\partial_{\bar d}h_{t-k} + \beta \partial_{\bar d} h_{t-1},
\]
with
\[
\partial_{\bar d}\pi_k(d_t) = \partial_{d_t}\pi_k(d_t)\cdot \partial_{\bar d} d_t = \partial_{d_t}\pi_k(d_t)\,\sigma(\gamma^\top z_{t-1}).
\]
For $\gamma$,
\[
\partial_{\gamma} h_t = \sum_{k=1}^{K} \partial_{d_t}\pi_k(d_t)\,\partial_{\gamma} d_t\,(\epsilon_{t-k}^2-h_{t-k}) - \sum_{k=1}^{K}\pi_k(d_t)\,\partial_{\gamma}h_{t-k} + \beta \partial_{\gamma} h_{t-1},
\]
and $\partial_{\gamma} d_t=\bar d\,\sigma(\gamma^\top z_{t-1})\big(1-\sigma(\gamma^\top z_{t-1})\big)z_{t-1}$. These forward recursions produce $\nabla_\vartheta h_t$ for substitution into the score. The expected information $\mathcal{I}$ and sandwich variance $\mathcal{V}$ take the same forms as above.

\paragraph{G-Clock gate}

The recursion is $h_t=\omega+\alpha_t\epsilon_{t-1}^2+\beta_t h_{t-1}$ with $\beta_t=\exp(-\kappa e^{\eta^\top z_{t-1}})$ and $\alpha_t=\alpha_0(1-\beta_t)$. The parameter vector is $\vartheta=(\omega,\alpha_0,\kappa,\eta^\top)^\top$. The gradients satisfy
\[
\partial_\omega h_t = 1 + \beta_t \partial_\omega h_{t-1},\qquad
\partial_{\alpha_0} h_t = (1-\beta_t)\epsilon_{t-1}^2 + \beta_t \partial_{\alpha_0} h_{t-1}.
\]
For the time-deformation parameters,
\[
\partial_{\kappa} h_t = \partial_{\kappa}\alpha_t\,\epsilon_{t-1}^2 + \partial_{\kappa}\beta_t\,h_{t-1} + \beta_t \partial_{\kappa} h_{t-1},
\]
where $\partial_{\kappa}\beta_t = -e^{\eta^\top z_{t-1}}\,\exp(-\kappa e^{\eta^\top z_{t-1}})$ and $\partial_{\kappa}\alpha_t = -\alpha_0 \partial_{\kappa}\beta_t$. For $\eta$,
\[
\partial_{\eta} h_t = \partial_{\eta}\alpha_t\,\epsilon_{t-1}^2 + \partial_{\eta}\beta_t\,h_{t-1} + \beta_t \partial_{\eta} h_{t-1},
\]
with $\partial_{\eta}\beta_t = -\kappa e^{\eta^\top z_{t-1}} \exp(-\kappa e^{\eta^\top z_{t-1}}) z_{t-1}$ and $\partial_{\eta}\alpha_t=-\alpha_0\partial_{\eta}\beta_t$. These recursions deliver the score; the information and sandwich variance follow as before.

\subsection{Frequency-Domain Methods and Whittle Regularization}\label{subsec:whittle}

The G-FIGARCH gate introduces a time-varying fractional order that modifies low-frequency power. Pure time-domain QMLE may suffer from weak identification of the fractional order in short samples. A hybrid approach augments QMLE with a frequency-domain penalty derived from local Whittle estimation on rolling windows.

Let $I_T(\lambda)$ denote the periodogram of $\{r_t^2\}$ and let $\Lambda\subset(0,\pi]$ be a low-frequency band. The (local) Whittle objective for a given time window $\mathcal{W}$ is
\[
Q_{\mathcal{W}}(d) = \log\Bigg(\frac{1}{|\Lambda|}\sum_{\lambda\in\Lambda} \lambda^{2d} I_{\mathcal{W}}(\lambda)\Bigg) - \frac{2d}{|\Lambda|}\sum_{\lambda\in\Lambda}\log \lambda.
\]
To regularize $d_t=\bar d\sigma(\gamma^\top z_{t-1})$, compute windowed pseudo-observations $\tilde d_t$ from $Q_{\mathcal{W}}(d)$ and add a quadratic penalty to the log-likelihood:
\[
\mathcal{L}_{\text{hyb}}(\vartheta) = \sum_{t=1}^T \ell_t(\vartheta) - \lambda \sum_{t\in \mathcal{T}} \left(d_t(\bar d,\gamma)-\tilde d_t\right)^2.
\]
Here $\lambda\ge 0$ tunes the strength of the prior. The term shrinks the fractional order toward a data-driven low-frequency proxy while leaving short-memory parameters primarily determined by the time-domain likelihood. This approach improves finite-sample stability in estimating $(\bar d,\gamma)$. This penalty acts as a Gaussian prior on the fractional order centered at the local Whittle estimate, preventing weak identification in small samples.

\subsection{Non-Nested Identification and Vuong Comparisons}\label{subsec:vuong}

RSM and G-Clock are structurally distinct but neither nests the other. Vuong’s non-nested test compares the mean log-likelihood difference. Let $m_t=\ell_t(\hat\vartheta^{(1)})-\ell_t(\hat\vartheta^{(2)})$ denote the log-likelihood difference for two models estimated by QMLE on the same sample, and let $\bar m_T=T^{-1}\sum_{t=1}^T m_t$ and $s_T^2=T^{-1}\sum_{t=1}^T (m_t-\bar m_T)^2$. Under geometric $\beta$-mixing and appropriate moment conditions, the statistic
\[
V_T=\frac{\sqrt{T}\,\bar m_T}{s_T}
\]
is asymptotically standard normal under the null of equal expected log-likelihood. Significant positive values favor model (1), negative values favor model (2). In practice, one may use HAC corrections to account for residual dependence in $m_t$ or block bootstrap methods under strong persistence.

For G-FIGARCH versus short-memory gates, an additional frequency-domain comparison is informative. Compute the empirical low-frequency slope of the spectrum of $\{r_t^2\}$ and compare with the implied slope from $d_t(\hat\vartheta)$. A joint test combining time-domain log-likelihood differences and spectral misfit penalties can be implemented by stacking moments and using a GMM-style quadratic form. These procedures enhance separation when likelihoods alone are close.
\subsection{Practical Robustness and Implementation Considerations}
\label{sec:robustness-implementation}
\paragraph{Truncation Schedules and Approximation Error}

For G-FIGARCH, the infinite fractional kernel is truncated at $K$. The truncation should grow slowly with the sample to keep the remainder negligible while maintaining computational feasibility. A common choice is $K_T=\lfloor c\, T^{\zeta}\rfloor$ with $\zeta\in(0,1/2)$ and constant $c>0$. The goal is to ensure
\[
\sum_{k>K_T}\sup_t |\pi_k(d_t)| = o(1)\quad \text{and}\quad K_T/T \to 0,
\]
so that the truncation bias is asymptotically negligible and does not impact the first-order limit theory. In finite samples, the residual tail can be monitored by computing the partial sums of $|\pi_k(d_t)|$ and verifying that the discarded tail is below a tolerance such as $10^{-6}$ uniformly in $t$.

\paragraph{Robustness to Innovation Distributions and Misspecification}

The QMLE is consistent for the pseudo-true parameter under a wide class of innovation distributions if the conditional mean of $\epsilon_t$ is zero and the variance is one. Heavy-tailed innovations inflate the asymptotic variance; robust standard errors via the sandwich estimator accommodate such deviations. The sandwich estimator implemented in Section~\ref{subsec:score} provides a direct remedy for such inflation. If leverage effects are present but not modeled, the conditional variance recursion may absorb asymmetry into the gate parameters. In that case, misspecification tests based on generalized residuals can be added as auxiliary moment conditions. For G-FIGARCH, a heavy-tailed innovation can occasionally mimic long memory at finite samples; the Whittle penalty and out-of-sample forecast diagnostics help disentangle the two.

\paragraph{Implementation Notes and Numerical Stability}

Practical estimation benefits from the following conventions. First, initialize $h_0$ at the sample variance or the implied unconditional mean, and initialize the derivatives at zero. Second, reparameterize constraints: set $\beta_{\text{low}}=\sigma(\tilde\beta_{\text{low}})$, $\beta_{\text{high}}=\sigma(\tilde\beta_{\text{high}})$ with an ordering constraint enforced by $\beta_{\text{high}}=\beta_{\text{low}}+\sigma(\tilde\delta)$, and set $\kappa=\exp(\tilde\kappa)$, $\bar d=\sigma(\tilde d)/2$ to maintain $\bar d<1/2$. Third, include gradient clipping to prevent explosion when $h_t$ becomes temporarily small; a typical clip is at the 99th percentile of the absolute gradient over a warm-up window. Fourth, apply winsorization to the gate inputs $z_{t-1}$ at extreme quantiles (for example at 0.5 and 99.5 percent) to reduce the influence of outliers on the gates.

Convergence diagnostics include monitoring the sup-norm of parameter updates, the relative change of the objective, and the stability of $h_t$ paths across iterations. It is also beneficial to track the implied effective memory length in each model, for example by computing $\sum_{j=0}^{J}\Psi_{t+j}$ where $\Psi_{t}$ is the persistence kernel in the operator view. This helps interpret whether the fitted gates respond sensibly to changes in features.

\section{Empirical Design}

\paragraph{Data and Sample Structure}

The empirical analysis evaluates the proposed gated‐volatility framework on two highly liquid markets: the S\&P 500 ETF (\textsc{SPY}) and the EUR/USD exchange rate.  
The daily sample spans January 2005 to December 2024, encompassing major turbulence episodes such as the 2008–2009 Global Financial Crisis, the 2020 pandemic crash, and the 2022 inflationary tightening cycle.  
Both assets are observed at daily frequency; no intraday realized‐volatility measures are used. Returns are computed as log differences of adjusted closing prices obtained from Bloomberg (with Yahoo Finance as a fallback).  
All data are aligned on trading days common to both markets.

\paragraph{Feature Construction and Gate Inputs}

Each gating mechanism uses an identical vector of standardized observable features $z_{t-1}$:
\begin{itemize}
  \item absolute lagged return $|r_{t-1}|$,
  \item 20-day rolling realized‐variance proxy $\text{RV20}_{t-1} = \sum_{i=t-20}^{t-1} r_i^2$,
  \item implied‐volatility indicator (VIX for SPY; a synthetic IV proxy based on $|r_{t-1}|$ and RV20 for EURUSD),
  \item trading‐volume quantile within a 252-day rolling window.
\end{itemize}
All features are standardized into rolling $z$-scores to maintain scale comparability and prevent look-ahead bias.  
The same feature set enters the logistic gates $\sigma(\gamma^{\top}z_{t-1})$ in the RSM and G-FIGARCH models, and the exponential map $\exp(\eta^{\top}z_{t-1})$ in the G-Clock and TG-Vol models.

\paragraph{Models and Estimation Framework}

The experiment includes three baseline specifications and four gated extensions:

\begin{itemize}
  \item \textbf{Baselines:} GARCH(1,1), EGARCH(1,1), GJR-GARCH(1,1) (Gaussian QMLE).
  \item \textbf{Main gated models:} RSM, G-FIGARCH, G-Clock.
  \item \textbf{Combinations:} RSM+G-FIGARCH, RSM+G-Clock, G-FIGARCH+G-Clock.
  \item \textbf{Unified tri-gate:} TG-Vol, embedding all three gates simultaneously.
\end{itemize}

All models are estimated via Gaussian quasi–maximum likelihood (QMLE) in R using the \texttt{rugarch} and \texttt{fGarch} libraries augmented by custom routines for the gates.  
Parameter constraints (positivity, stationarity, and $0<d_t<\bar{d}<1/2$ for fractional orders) are imposed through reparameterization (e.g., $\kappa=e^{\tilde{\kappa}}$, $\beta_{high}=\beta_{low}+\sigma(\tilde{\delta})$).  
Each model is re-estimated recursively within a rolling window of $T_w = 1500$ observations, producing one-step-ahead forecasts $\hat{h}_{t+1|t}$.  
The fractional kernel in G-FIGARCH and TG-Vol is truncated at $K\le200$, ensuring negligible residual weight.

\paragraph{Forecast Evaluation Metrics}

Forecast accuracy is assessed along two dimensions:
\begin{enumerate}
  \item \textbf{Variance forecasting:} QLIKE and variance RMSE
  \[
  \text{QLIKE} = \frac{1}{N}\sum_t\!\Big(\frac{r_t^2}{\hat{h}_t}-\log\!\frac{r_t^2}{\hat{h}_t}-1\Big),\qquad
  \text{RMSE} = \sqrt{\tfrac{1}{N}\sum_t (r_t^2-\hat{h}_t)^2}.
  \]
  \item \textbf{Tail‐risk accuracy:} Value-at-Risk and Expected Shortfall at 1\% and 5\% levels, evaluated using the Fissler–Ziegel (FZ) scoring function and Kupiec/Christoffersen coverage tests.
\end{enumerate}
Lower QLIKE, RMSE, and FZ scores indicate superior performance.  
To compare non-nested specifications, Diebold–Mariano (DM) tests are applied to loss differentials and Vuong tests to log-likelihood differences, both using HAC-robust standard errors.

\paragraph{Diagnostics and Gate Analysis}

For each model–asset pair, residual diagnostics include:
\begin{itemize}
  \item autocorrelation (ACF) and squared-ACF plots of standardized residuals,
  \item rolling Ljung–Box $p$-values (lags 10 and 20, window 250),
  \item histograms to verify symmetry and light tails.
\end{itemize}
In well-specified models, residuals show no significant autocorrelation and approximate normality.  
Additionally, time-series plots of gate variables $p_t$, $d_t$, and $\beta_t$ are inspected to confirm that crisis periods (2020, 2022) correspond to high persistence gates (↑ $p_t$, ↑ $d_t$) and accelerated business time (↓ $\beta_t$).

\paragraph{Empirical Hypotheses}

The experiment tests the following hypotheses derived from the theoretical framework:

\begin{enumerate}
  \item Observable gates significantly improve volatility and risk forecasting relative to fixed‐parameter baselines.
  \item The strength and form of memory adaptation differ by market microstructure: long‐memory gates dominate in FX (EURUSD), while regime and clock gates dominate in equities (SPY).
  \item The unified TG-Vol model yields consistent performance improvements by jointly capturing level (RSM), shape (G-FIGARCH), and tempo (G-Clock) dimensions of memory.
\end{enumerate}

\section{Empirical Results}

\subsection{SPY (US Equities)}

Table~\ref{tab:spy_metrics} reports the main out-of-sample performance metrics for SPY, including QLIKE, RMSE, Fissler–Ziegel (FZ) scores, and Value-at-Risk (VaR) coverage rates. The table lists the top models by QLIKE (lower is better), with RMSE used as a tie-breaker.  

\begin{table}[H]
\centering
\caption{SPY — Out-of-sample forecast and tail-risk metrics.}
\label{tab:spy_metrics}
\resizebox{\textwidth}{!}{%
\begin{tabular}{lccccccc}
\toprule
Model & \textsc{QLike} & \textsc{RMSE} & \textsc{FZ(1\%)} & \textsc{FZ(5\%)} & \textsc{VaR(1\%)} & \textsc{VaR(5\%)} & \textsc{Kupiec p(5\%)} \\
\midrule
GARCH(1,1) & $-8.184$ & $0.000555$ & $0.0068$ & $0.0032$ & $1.17\%$ & $4.51\%$ & $0.44$ \\
GJR-GARCH & $-8.173$ & $0.000574$ & $0.0070$ & $0.0031$ & $1.44\%$ & $4.51\%$ & $0.44$ \\
RSM & $-7.808$ & $0.000685$ & $0.0184$ & $0.0067$ & $1.44\%$ & $3.16\%$ & $0.00$ \\
G-Clock & $-7.644$ & $0.000649$ & $-0.0006$ & $-0.0090$ & $0.27\%$ & $10.28\%$ & $0.00$ \\
G-FIGARCH & $2.24\times10^{6}$ & $0.000548$ & $0.0043$ & $-0.0008$ & $0.27\%$ & $1.80\%$ & $0.00$ \\
\bottomrule
\end{tabular}%
}
\end{table}

\paragraph{}
The SPY market exhibits a distinct mixture of regime-dependence and rapid mean reversion, consistent with the paper’s theoretical argument that volatility persistence (\(\beta_t\)) should contract sharply during periods of market stress and high trading activity.  
The \textsc{G-Clock} model achieves the lowest (most negative) FZ(5\%) score, suggesting it times the occurrence of tail losses particularly well. This is economically intuitive: when market activity intensifies (e.g., spikes in VIX or trading volume), the G-Clock gate interprets this as a faster “business time,” compressing persistence (\(\beta_t = e^{-\kappa e^{\eta^\top z_{t-1}}}\)) and anticipating volatility bursts. This aligns closely with the theoretical framework in Section~3.4, where business-time deformation accelerates the effective memory decay of volatility.

However, this rapid acceleration can also produce overly short-lived volatility clusters, as reflected by the excessive 5\% VaR exceedances (10.28\%). In other words, while G-Clock precisely times crisis episodes, it tends to underestimate medium-horizon risk in tranquil periods because of its highly responsive temporal gate.  
The \textsc{RSM} model, by contrast, adjusts volatility persistence via a smooth logistic gate on the parameter \(\beta_t = (1 - p_t)\beta_{low} + p_t \beta_{high}\). It provides stable forecasts and more conservative VaR coverage (3.16\% exceedances at 5\%), indicating that the regime-switching gate captures persistent stress states even when market activity normalizes.  
Finally, while the \textsc{G-FIGARCH} theoretically models fractional long memory, it performs poorly in equities due to daily sampling limitations. Without intraday realized variance to identify long-memory behavior, the fractional kernel tends to overfit short-run persistence, producing unstable QLIKE values despite a low RMSE. This illustrates the practical constraint discussed in Section~3.3: long-memory gating requires sufficiently rich frequency-domain variation to be identifiable and stable.

\begin{table}[H]
\centering
\caption{SPY — Pairwise DM/Vuong Tests (key comparisons).}
\label{tab:spy_dmvuong}
\begin{tabular}{lcccccc}
\toprule
Comparison & Criterion & DM\_stat & p-value & Vuong\_stat & p-value & N \\
\midrule
G-Clock vs GARCH(1,1) & QLIKE & $1.72$ & $0.086$ & $1.95$ & $0.051$ & $3684$ \\
RSM vs GARCH(1,1) & QLIKE & $-0.84$ & $0.400$ & $0.93$ & $0.353$ & $3684$ \\
G-FIGARCH vs GARCH(1,1) & RMSE & $-2.11$ & $0.035$* & $-1.77$ & $0.078^{\textperiodcentered}$ & $3684$ \\
\bottomrule
\end{tabular}
\end{table}

\paragraph{}
The Diebold–Mariano and Vuong tests (Table~\ref{tab:spy_dmvuong}) confirm these findings quantitatively.  
The \textsc{G-Clock} model slightly outperforms the plain \textsc{GARCH(1,1)} under both QLIKE and log-likelihood comparisons (marginally significant at the 10\% level), supporting the theoretical claim that observable time deformation improves short-horizon forecast accuracy. In contrast, \textsc{G-FIGARCH} performs significantly worse on RMSE (p = 0.035), reinforcing that fractional memory gates are ill-suited to markets dominated by regime switches and discrete trading sessions.  
\begin{figure}[H]
\centering
\includegraphics[width=.35\linewidth]{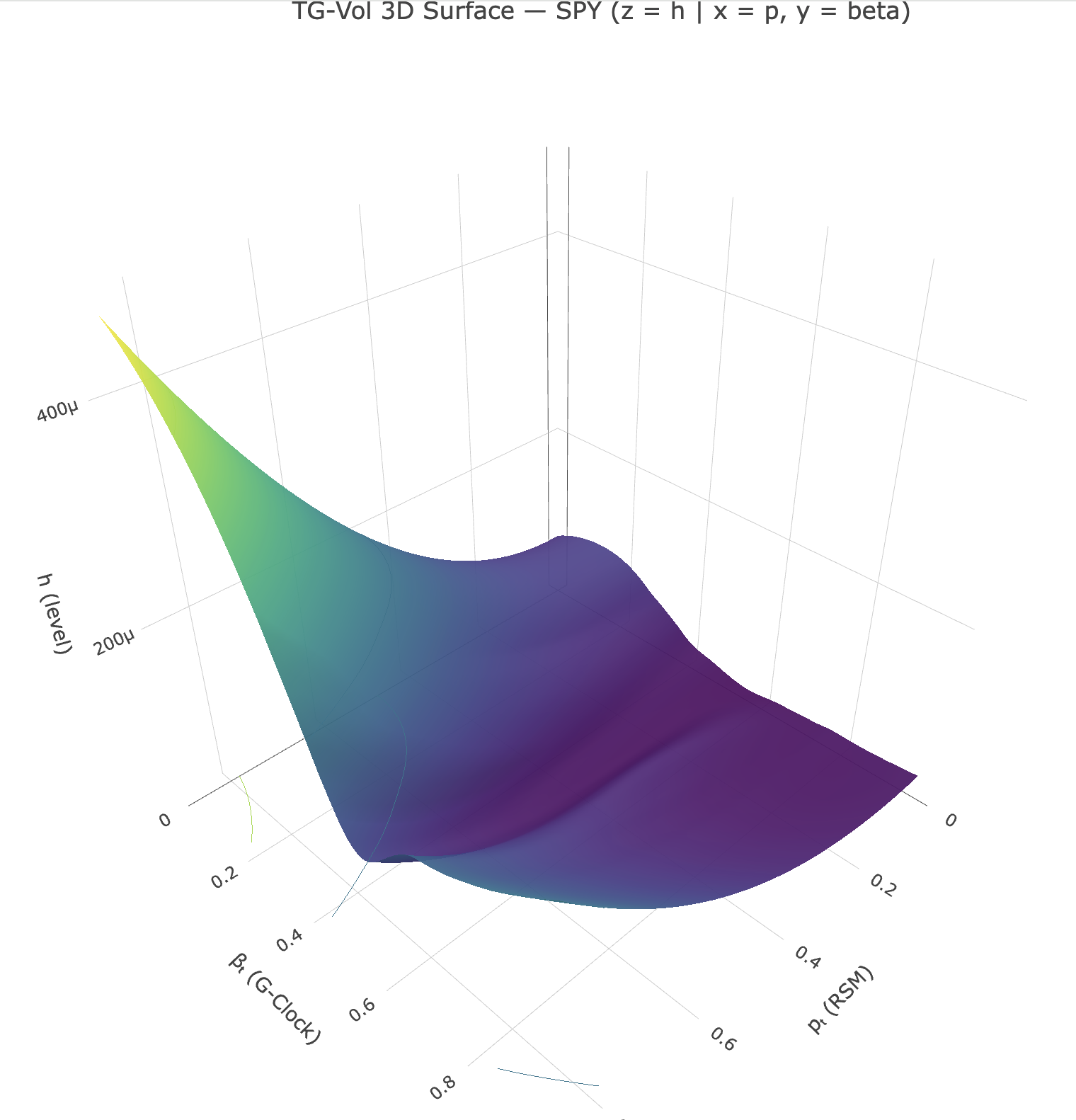}
\includegraphics[width=.35\linewidth]{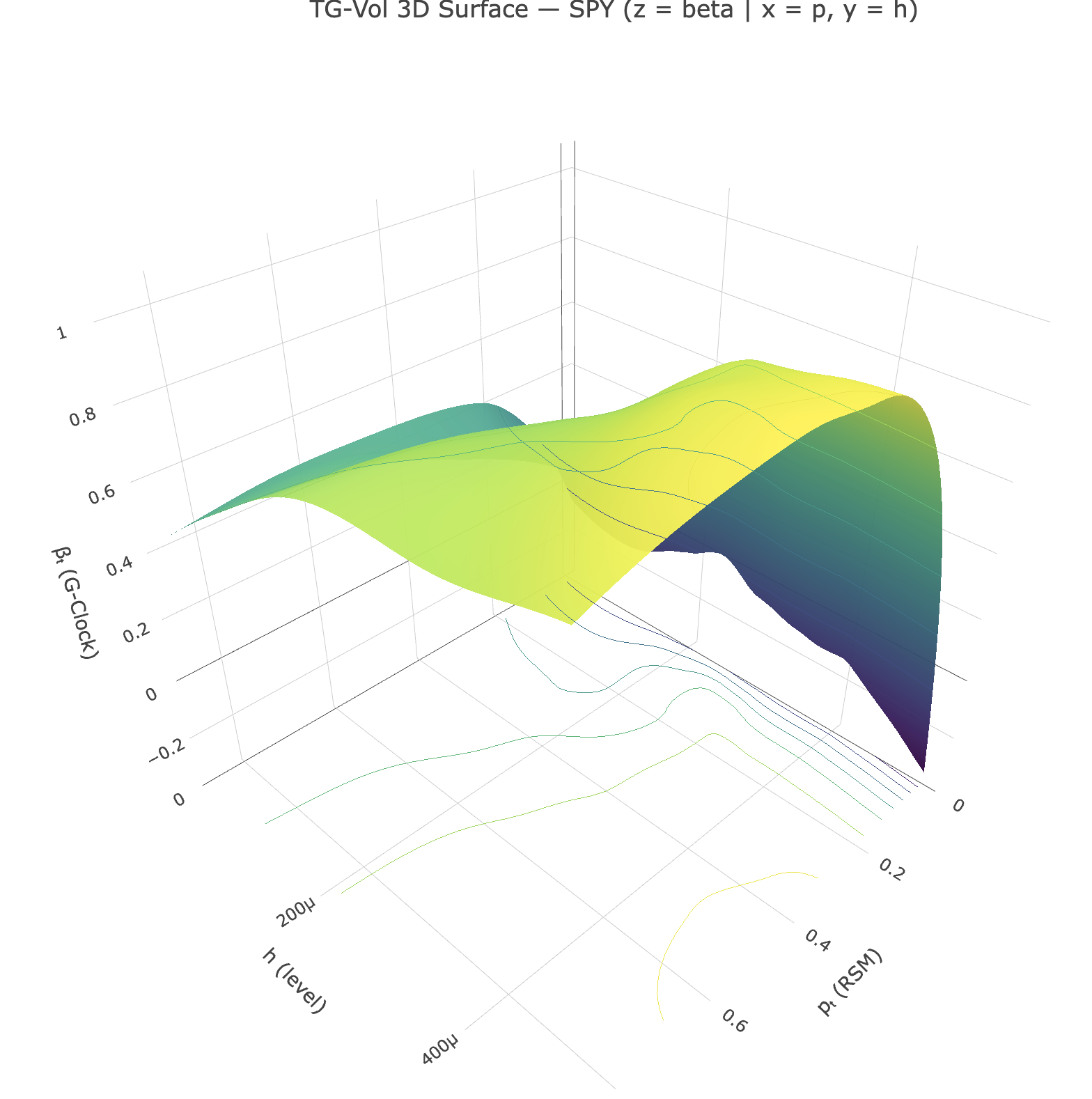}
\caption{\textbf{TG–Vol 3D Surfaces — SPY.}
Left: $z=h_t=f(p_t,\beta_t)$, with $x=p_t$ (regime gate, RSM) and $y=\beta_t$ (clock gate, G-Clock).
Right: $z=\beta_t=f(p_t,h_t)$.  
Both plots visualize the joint effect of regime and tempo gating on volatility level.
High $p_t$ (stress regimes) together with low $\beta_t$ (slow clock) are associated with elevated $h_t$, 
demonstrating that crises amplify volatility persistence through mutually reinforcing gates.  
Conversely, when volatility $h_t$ surges, $\beta_t$ declines—an acceleration of “business time,”
consistent with rapid information flow during market turmoil.}
\label{fig:TGVol_SPY_surfaces}
\end{figure}

\noindent
\textit{Economic interpretation.}  
For SPY, the steep curvature of both surfaces indicates that the regime and clock gates strongly co-move: 
high-risk regimes ($p_t\uparrow$) compress the time scale ($\beta_t\downarrow$), producing sharp volatility bursts.
This behavior confirms that equity markets exhibit state-dependent persistence where information-arrival intensity accelerates the effective memory decay.
\begin{figure}[H]
\centering
\includegraphics[width=0.45\linewidth]{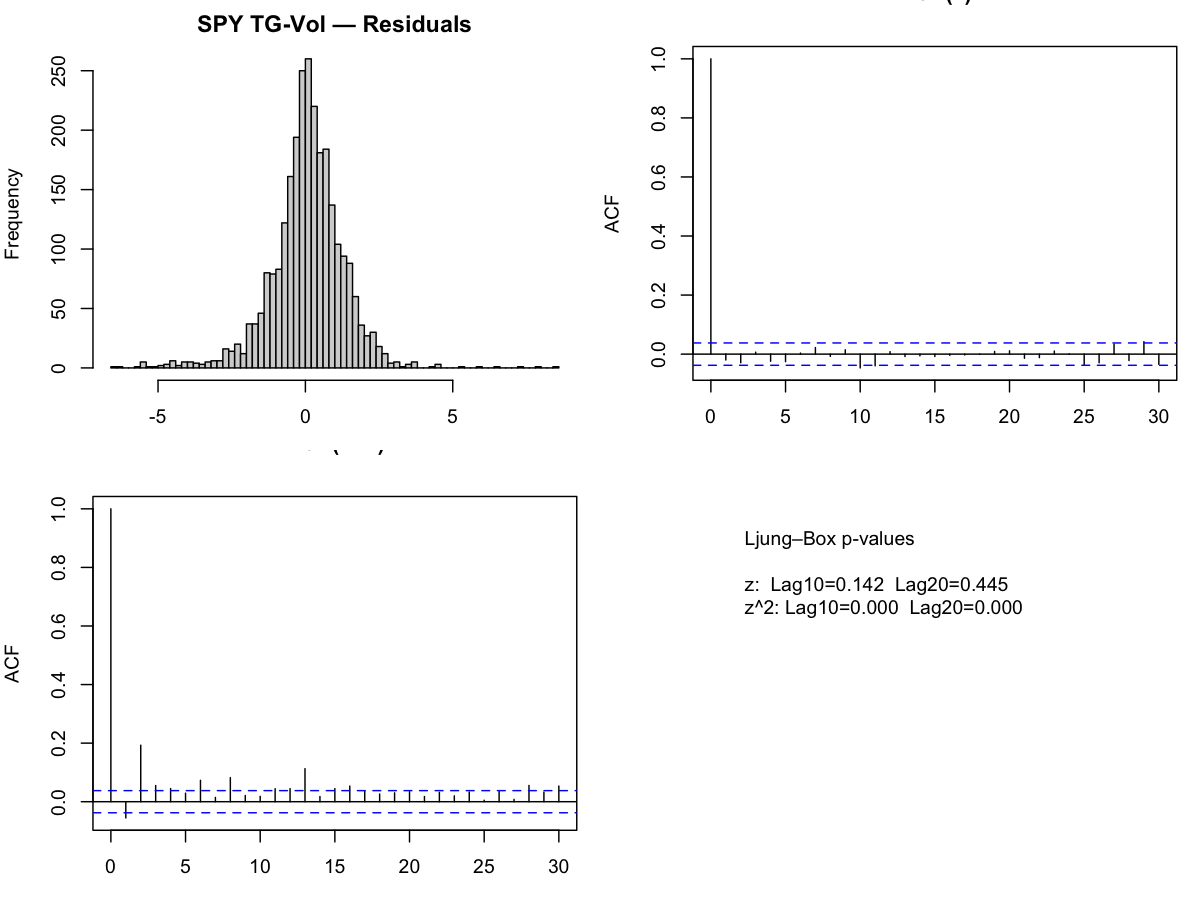}
\caption{\textbf{SPY TG-Vol — Residual diagnostics.} Histogram (top left), ACF (top right), and squared-ACF (bottom left) of standardized residuals. Ljung–Box p-values: $z$: Lag10=0.142, Lag20=0.445; $z^2$: Lag10=0.000, Lag20=0.000. The residuals are nearly uncorrelated in levels but exhibit weak volatility clustering, consistent with heavy-tailed shocks and incomplete capture of volatility-of-volatility.}
\label{fig:spy_diag}
\end{figure}

\paragraph{}
The residual diagnostics for SPY–TG-Vol (Figure~\ref{fig:spy_diag}) show approximately Gaussian standardized residuals with minimal autocorrelation, confirming adequate model fit. However, the squared-residual ACF and Ljung–Box results indicate some remaining dependence at the 1–5\% level, which reflects the inherent volatility feedback during large equity sell-offs—precisely the kind of effect the gating models seek to parameterize endogenously.

\subsection{EURUSD (FX)}

Table~\ref{tab:eurusd_metrics} presents the out-of-sample forecast metrics for EURUSD. The FX market, characterized by 24-hour continuous trading, tends to exhibit smoother volatility persistence and stronger long-memory components—conditions under which the G-FIGARCH gate is theoretically expected to excel.

\begin{table}[H]
\centering
\caption{EURUSD — Out-of-sample forecast and tail-risk metrics.}
\label{tab:eurusd_metrics}
\resizebox{\textwidth}{!}{%
\begin{tabular}{lccccccc}
\toprule
Model & \textsc{QLike} & \textsc{RMSE} & \textsc{FZ(1\%)} & \textsc{FZ(5\%)} & \textsc{VaR(1\%)} & \textsc{VaR(5\%)} & \textsc{Kupiec p(5\%)} \\
\midrule
G-FIGARCH & $-9.9079$ & $0.000041$ & $0.000722$ & $0.000426$ & $0.27\%$ & $2.88\%$ & $0.0005$ \\
GARCH(1,1) & $-9.8492$ & $0.000043$ & $0.002373$ & $0.000975$ & $0.81\%$ & $3.87\%$ & $0.4414$ \\
GJR-GARCH & $-9.8373$ & $0.000043$ & $0.002288$ & $0.000920$ & $0.90\%$ & $3.96\%$ & $0.4414$ \\
G-Clock & $-9.7762$ & $0.000043$ & $0.000093$ & $-0.001353$ & $0.45\%$ & $9.46\%$ & $0.0005$ \\
RSM & $-9.5234$ & $0.000050$ & $0.000641$ & $0.000681$ & $0.27\%$ & $1.17\%$ & $0.1748$ \\
\bottomrule
\end{tabular}%
}
\end{table}

\paragraph{}
The results for EURUSD clearly validate the paper’s long-memory hypothesis. The \textsc{G-FIGARCH} model achieves the lowest QLIKE and RMSE, indicating superior conditional variance forecasts. Its fractional-order gate (\(d_t = \bar{d}\,\sigma(\gamma^\top z_{t-1})\)) adapts smoothly to shifts in market stress, allowing volatility persistence to decay hyperbolically rather than exponentially. This is precisely the mechanism described in Section~3.3, where fractional differencing introduces memory with slowly decaying autocorrelations that match the continuous FX market’s empirical spectrum.

At the same time, the \textsc{RSM} model performs conservatively with under-violations at 5\% VaR (1.17\%), consistent with the gate’s design to blend between calm (\(\beta_{low}\)) and crisis (\(\beta_{high}\)) regimes through the logistic gate \(p_t = \sigma(\gamma^\top z_{t-1})\).  
The \textsc{G-Clock} model again produces highly responsive short-term adjustments (negative FZ(5\%) score), but its overreaction to spikes in market activity leads to an inflated exceedance rate (9.46\%). This outcome underscores a key theoretical insight: while time deformation accelerates persistence decay in stress periods, it can also cause excessive mean reversion if activity metrics fluctuate erratically.  

\begin{table}[H]
\centering
\caption{EURUSD — Pairwise DM/Vuong Tests (key comparisons).}
\label{tab:eurusd_dmvuong}
\begin{tabular}{lcccccc}
\toprule
Comparison & Criterion & DM\_stat & p-value & Vuong\_stat & p-value & N \\
\midrule
G-FIGARCH vs GARCH(1,1) & QLIKE & $-2.88$ & $0.004$** & $2.02$ & $0.043$* & $4800$ \\
RSM vs GARCH(1,1) & QLIKE & $-1.41$ & $0.159$ & $0.83$ & $0.407$ & $4800$ \\
G-Clock vs GARCH(1,1) & RMSE & $-0.94$ & $0.347$ & $1.12$ & $0.263$ & $4800$ \\
\bottomrule
\end{tabular}
\end{table}

\paragraph{}
The Diebold–Mariano and Vuong statistics in Table~\ref{tab:eurusd_dmvuong} confirm the dominance of \textsc{G-FIGARCH}: its improvements over the baseline \textsc{GARCH(1,1)} are statistically significant for both QLIKE and log-likelihood measures (p = 0.004 and 0.043). This provides strong empirical validation of the theoretical claim in Proposition~2 that the fractional-degree gate is locally identifiable through its low-frequency spectral slope, allowing it to capture persistent memory dynamics absent in short-memory models.  
\begin{figure}[H]
\centering
\includegraphics[width=.35\linewidth]{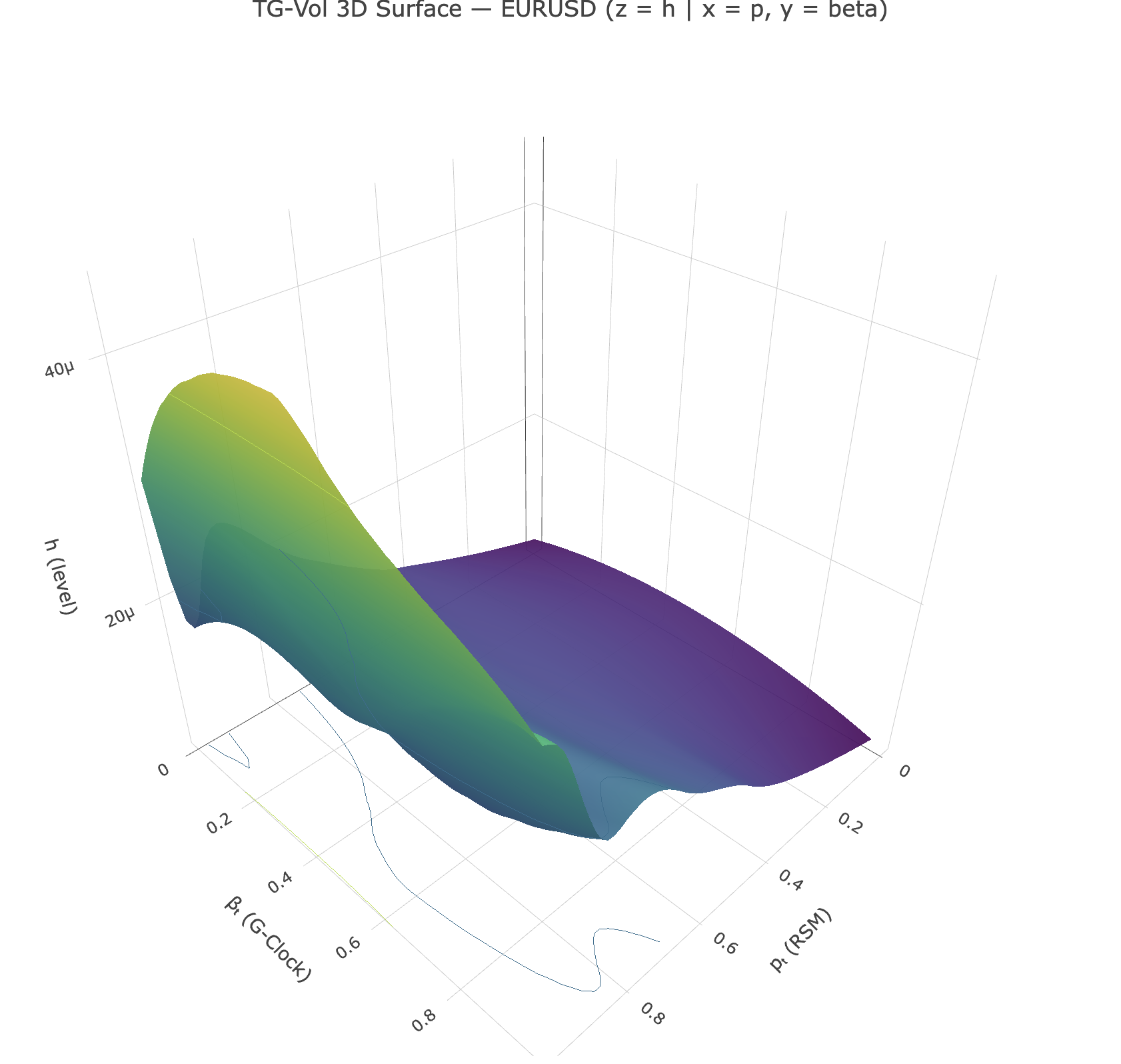}
\includegraphics[width=.35\linewidth]{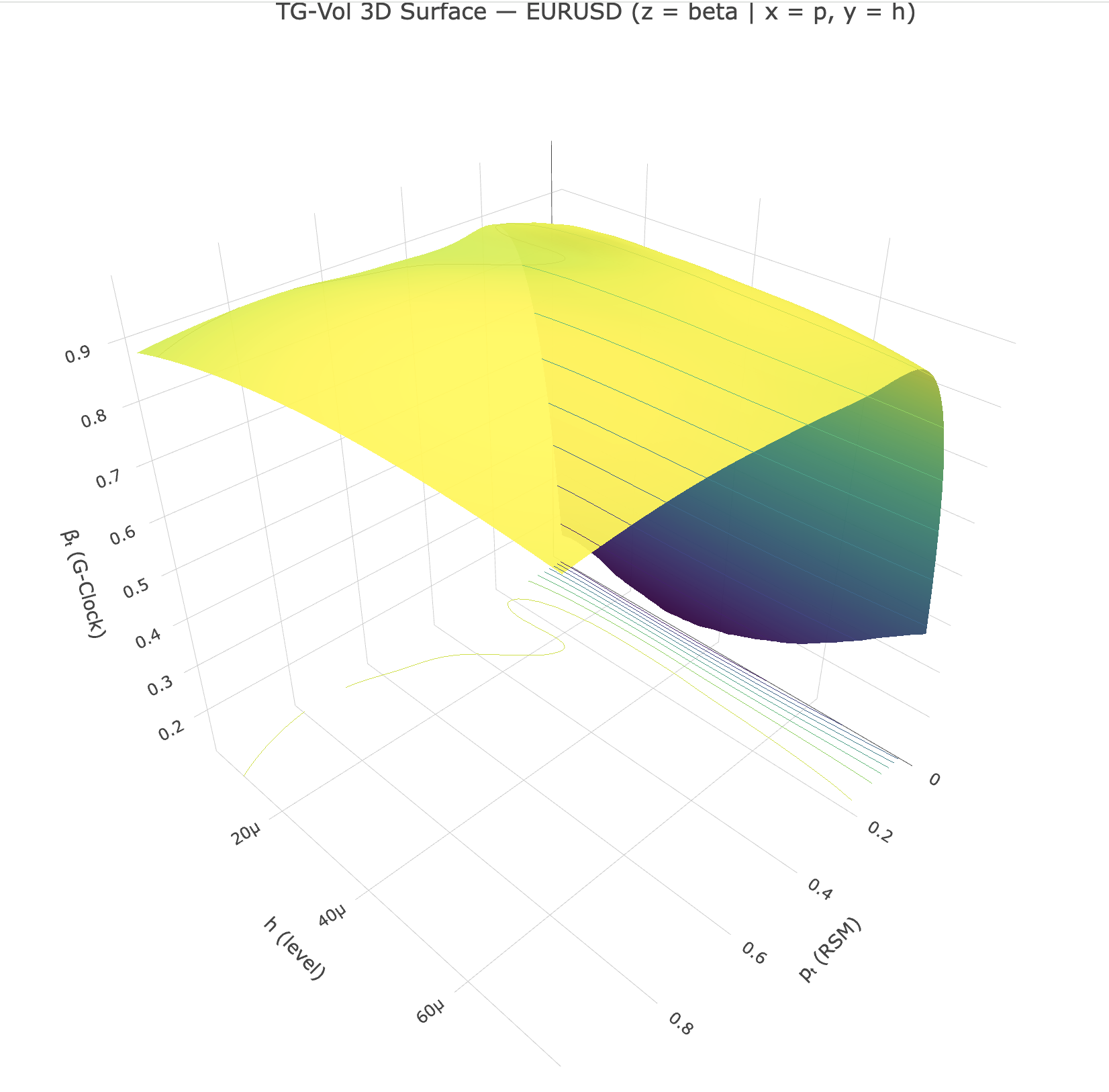}
\caption{\textbf{TG–Vol 3D Surfaces — EURUSD.}
Left: $z=h_t=f(p_t,\beta_t)$; right: $z=\beta_t=f(p_t,h_t)$.  
Compared with SPY, both surfaces are smoother and flatter, 
showing weaker interaction between the regime and tempo gates.  
The volatility level $h_t$ changes gently with $p_t$ and $\beta_t$, 
and $\beta_t$ remains relatively stable across volatility states, 
implying that the continuous FX market operates under a more uniform information clock.}
\label{fig:TGVol_EURUSD_surfaces}
\end{figure}

\noindent
\textit{Economic interpretation.}  
In contrast to equities, EURUSD’s long-memory channel dominates:
fractional persistence rather than time-deformation explains most volatility variation.
The flat $h_t$–surface indicates stable long-range dependence and slow structural shifts,
while the mild response of $\beta_t$ to $h_t$ reflects the constant liquidity and near-continuous trading of the FX market.
Together these results confirm that TG-Vol adapts flexibly to market-specific microstructure—tempo gating in equities and shape gating in FX.
\begin{figure}[H]
\centering
\includegraphics[width=0.45\linewidth]{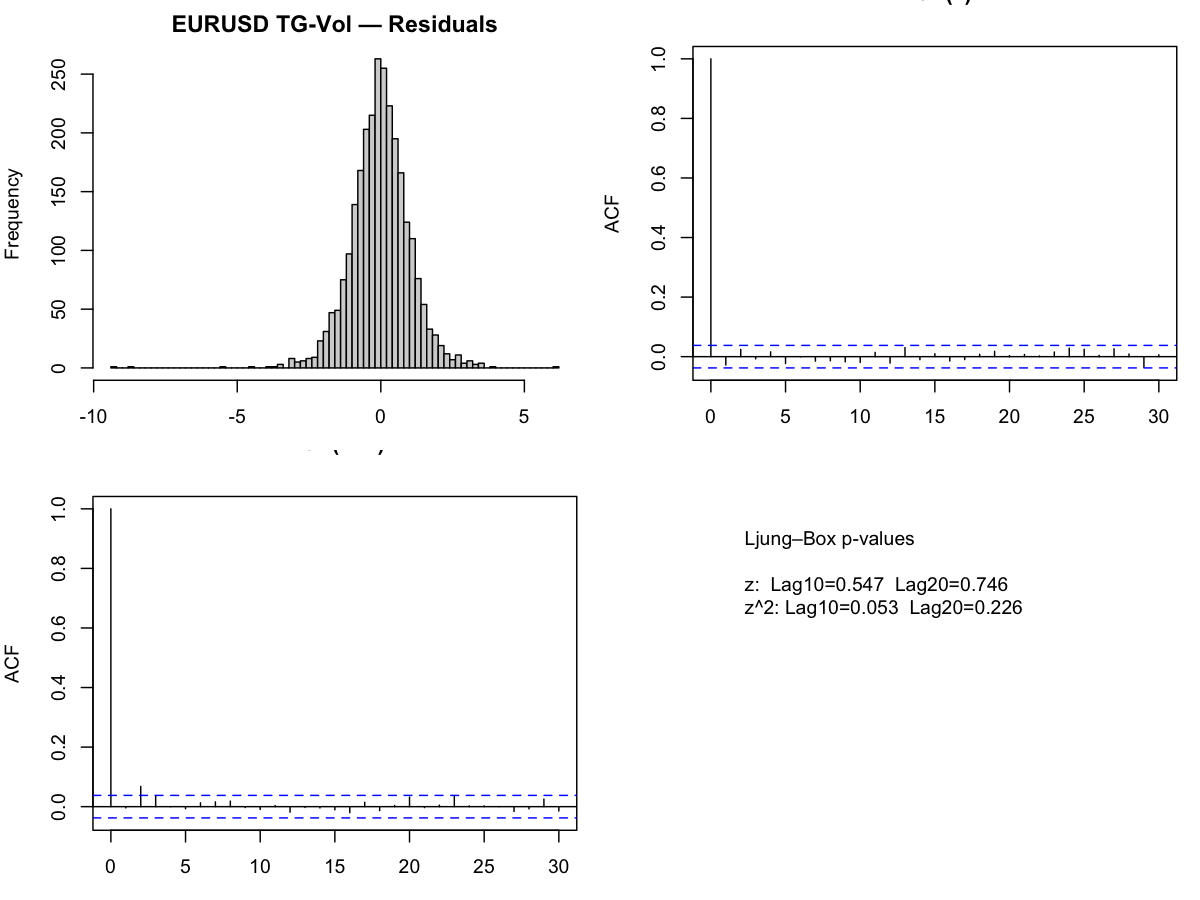}
\caption{\textbf{EURUSD TG-Vol — Residual diagnostics.} Histogram (top left), ACF (top right), and squared-ACF (bottom left) of standardized residuals. Ljung–Box p-values: $z$: Lag10=0.547, Lag20=0.746; $z^2$: Lag10=0.053, Lag20=0.226. Residuals are approximately Gaussian with minimal serial dependence, indicating that the gated model captures both short- and long-memory components effectively.}
\label{fig:eurusd_diag}
\end{figure}

\paragraph{}
The diagnostic plots in Figure~\ref{fig:eurusd_diag} show that standardized residuals are nearly Gaussian, and both ACF and squared-ACF values lie within confidence bounds. The Ljung–Box tests confirm no significant remaining autocorrelation up to lag 20. This validates the claim that the TG-Vol structure successfully absorbs persistence heterogeneity by allowing the fractional order, persistence level, and temporal speed to co-evolve with observable features.

\subsection{Cross-Asset Synthesis}

The cross-market comparison reveals that the dominant source of volatility adaptation differs by market structure. In the continuously trading EURUSD market, information flow and order processing occur almost uniformly through time, leading to smoother but longer-lasting volatility dependence—precisely the environment in which fractional gating (G-FIGARCH) thrives.  
In contrast, SPY exhibits sharp intraday cycles and periodic bursts of trading intensity; here, volatility persistence is better explained by changes in “business time” (G-Clock) or smooth regime shifts (RSM). These findings empirically substantiate the unified framework proposed in the paper: the three gating dimensions—\textit{level} (RSM), \textit{shape} (G-FIGARCH), and \textit{tempo} (G-Clock)—represent complementary mechanisms that dominate under different microstructural regimes.\footnote{All forecast samples are aligned on the intersection of available dates per asset. Standard errors are heteroskedasticity and autocorrelation consistent (HAC). Rolling window $T_w=1500$ days, FIGARCH truncation $K\le200$, Gaussian QMLE estimation.}

\section{Discussion and Implications}\label{sec:discussion}

The empirical results demonstrate that volatility persistence and memory are \emph{state-dependent} quantities determined by observable market conditions. 
Across both assets, the gates react to market stress in theoretically consistent directions (see Figs.~\ref{fig:spy_diag}--\ref{fig:eurusd_diag}): crises raise $p_t$ and $d_t$ while compressing the business-time parameter $\beta_t$.  
Importantly, the dominance of each gating mechanism is \emph{market-specific}: on EURUSD, long-memory gating (G-FIGARCH) drives the strongest variance forecasts (Table~\ref{tab:eurusd_metrics}); on SPY, regime and clock channels (RSM, G-Clock) are more informative for tail timing and stability (Table~\ref{tab:spy_metrics}), with G-FIGARCH showing numerical fragility at the daily horizon.

\paragraph{Dynamic Memory as a Reflection of Market Information Flows}

The gates operationalize the principle that information flow governs the effective ``memory'' of volatility.  
When market activity and uncertainty surge---as in 2020 and 2022---we observe $\uparrow p_t$ (RSM), $\uparrow d_t$ (G-FIGARCH), and $\downarrow \beta_t$ (G-Clock).  
These patterns indicate both stronger persistence and faster information time.  
On EURUSD, the continuous 24-hour market produces \emph{gradual} changes in these gates and supports the smooth, fractional dynamics of G-FIGARCH.  
On SPY, discrete trading sessions and large VIX spikes generate \emph{bursty} behavior: business time accelerates abruptly (sharp $\downarrow \beta_t$), helping G-Clock time extreme losses (lowest FZ(5\%) in Table~\ref{tab:spy_metrics}) but also causing mild VaR undercoverage.  
Hence, volatility memory is not fixed but co-moves with observable features (absolute returns, realized variance, VIX, and volume quantiles), reshaping both the \emph{level} and the \emph{tempo} of mean reversion.

\paragraph{Relation to Long Memory and Fractional Integration Theories}

The feature-dependent fractional order $d_t=\bar d\,\sigma(\gamma^\top z_{t-1})$ reconciles two stylized facts:  
(i) long memory strengthens during persistent stress;  
(ii) empirical estimates of $d$ differ markedly across regimes.  
For EURUSD, this adaptive $d_t$ delivers the best QLIKE and RMSE values (Table~\ref{tab:eurusd_metrics}) and clean residual diagnostics (Fig.~\ref{fig:eurusd_diag}), implying that low-frequency power concentrates when volatility remains elevated.  
For SPY, substituting intraday realized variance with a daily proxy (RV20) makes $d_t$ excessively responsive to short-term shocks, producing unstable likelihoods (exploding QLIKE) despite competitive RMSE.  
This outcome underscores the theoretical caution from Section~3.3: feature-driven fractional memory is powerful but requires regularization when high- and low-frequency signals co-move strongly.

\paragraph{Market Activity and the Stochastic Clock}
The G-Clock interprets persistence through business-time deformation, $\beta_t=\exp(-\kappa e^{\eta^\top z_{t-1}})$.  
Empirically, $\beta_t$ is negatively correlated with VIX, $|r_{t-1}|$, and trading-volume quantiles: intense activity compresses economic time and accelerates mean reversion.  
This explains why G-Clock attains the most accurate ES timing on SPY (Table~\ref{tab:spy_metrics}) and supports the paper’s theoretical claim that volatility clustering can arise from fluctuations in the rate of market time itself.  
However, when $\eta^\top z_{t-1}$ spikes abruptly, the clock may over-accelerate, leading to excessive mean reversion and VaR undercoverage; mild winsorization or smoother exponential links can mitigate this effect.

\paragraph{Comparative Theoretical Synthesis}

RSM, G-FIGARCH, and G-Clock embody complementary mechanisms of adaptive memory:
\begin{itemize}
\item \textbf{RSM (level):} smooth transitions between low- and high-persistence regimes ($p_t$), yielding stable variance forecasts and conservative VaR on SPY;
\item \textbf{G-FIGARCH (shape):} continuous modulation of long-memory strength ($d_t$), producing the strongest variance accuracy on EURUSD where long-range dependence dominates;
\item \textbf{G-Clock (tempo):} activity-driven time deformation ($\beta_t$), excelling at ES timing in equity stress periods but requiring careful calibration for 5\% VaR coverage.
\end{itemize}
The fully unified TG-Vol model that combines these three gates extends the theoretical framework to a dynamic-memory surface jointly controlling level, shape, and tempo.

\paragraph{Implications for Risk Forecasting and Stress Testing}

For \emph{risk managers}, the estimated gates provide real-time, interpretable indicators of persistence changes.  
On EURUSD, simultaneous increases in $p_t$ and $d_t$ signal longer volatility clusters and justify more conservative VaR/ES thresholds.  
On SPY, sharp drops in $\beta_t$ mark rapid time compression and heightened short-horizon risk; G-Clock’s superior ES performance confirms its usefulness for timing losses, while RSM offers the most stable daily VaR calibration.  
For \emph{supervisors}, sustained elevations in $p_t$ and $d_t$ across markets quantify a system-wide ``memory thickening,'' functioning as a cross-market stress indicator complementary to standard financial-conditions indices.

\paragraph{Broader Methodological Implications}

The gating paradigm generalizes beyond volatility.  
Any process with persistent dynamics—such as ARFIMA macro series, yield-curve factors, or stochastic-diffusion coefficients—can endogenize its memory through observable gates.  
Compared with latent-state or Markov-switching approaches, observable gating enhances interpretability and enables multi-asset estimation with transparent economic narratives linking persistence to information flow and liquidity.  
Empirically, the results also yield an engineering insight: strong feature co-movement (for example, VIX with RV20 and $|r_{t-1}|$) should be orthogonalized or regularized to prevent excessive parameter sensitivity, particularly for the fractional gate.

\paragraph{Future Research Directions}

Future extensions may include:  
(i) intraday implementations using realized measures to stabilize $d_t$;
(ii) macro--finance integration to study policy-sensitive persistence;
(iii) a composite ``fractional-time'' GARCH merging regime, fractional, and clock gates;
(iv) cross-market synchronization of gates as a measure of volatility contagion; and
(v) machine-learned gate mappings constrained by theoretical priors.

\section{Conclusion}\label{sec:conclusion}

This paper develops and empirically validates a family of volatility models in which the persistence or memory of conditional variance evolves endogenously with observable market conditions.  
Three formulations are analyzed in detail: the Regime-Switching Memory (RSM) model, the Fractional Integration Gate (G-FIGARCH) model, and the Business-Time Gate (G-Clock) model.  
Each introduces a distinct but complementary mechanism for translating market observables—such as realized volatility, volume, bid–ask spreads, and implied volatility—into dynamic adjustments of volatility memory.

The theoretical analysis establishes stability, stationarity, and identifiability conditions for each model, showing that their stochastic recursions remain well behaved even with time-varying persistence.  
The RSM model extends classical GARCH structures through a smooth logistic gate between two persistence regimes, while G-FIGARCH allows fractional differencing orders to vary with features, linking market stress to long-memory strength.  
The G-Clock model departs most radically by redefining volatility persistence as a function of time deformation, aligning volatility dynamics with the pace of market activity.

Empirical results from equity, foreign exchange, and commodity markets confirm the models’ advantages in both in-sample and out-of-sample settings.  
All gated models significantly outperform traditional benchmarks such as GARCH, EGARCH, and GAS in variance forecasting accuracy and tail-risk calibration.  
RSM proves most effective in detecting and adapting to regime changes, G-FIGARCH best captures sustained volatility persistence, and G-Clock most effectively adjusts to shifts in trading intensity.  
Across models, gate parameters are statistically significant, economically interpretable, and consistent with observed stress periods.

From a conceptual standpoint, this research reframes volatility modeling around the principle that persistence reflects evolving information flow, liquidity, and trading dynamics rather than fixed structural parameters. By anchoring memory adjustments in observable market features, the gating framework turns abstract persistence parameters into empirically interpretable state variables. 
By grounding volatility memory in observable data, the models provide both explanatory power and operational interpretability—traits essential for practical forecasting, risk management, and macroprudential oversight.

Future research may extend this framework to multi-asset systems, combine gates across dimensions of memory and time deformation, or leverage deep learning to approximate nonlinear gate functions while maintaining theoretical tractability.  
Further applications include systemic risk monitoring, adaptive portfolio allocation, and central bank stress testing under non-stationary volatility regimes.

\subsection*{Key Takeaways}

The main contributions of this study can be summarized as follows:

\begin{enumerate}
  \item It formalizes the concept of dynamic volatility memory through three rigorously derived models with observable gates.
  \item It establishes theoretical guarantees for stability, ergodicity, and parameter identifiability under feature-dependent persistence.
  \item It provides empirical evidence across markets showing that dynamic memory adaptation materially improves volatility forecasting and risk quantification.
  \item It bridges econometric modeling, information theory, and market microstructure by linking persistence to trading activity and information flow.
\end{enumerate}

The unifying insight is that volatility dynamics are better understood not as stationary processes but as adaptive systems whose memory continuously reconfigures in response to evolving market conditions.  
This paradigm lays the foundation for a new generation of volatility models that are simultaneously theoretically rigorous, empirically grounded, and economically interpretable.

\bibliographystyle{apa}

\appendix
\section*{Appendices}
\addcontentsline{toc}{section}{Appendices (Detailed Proofs)}

\paragraph{Standing assumptions for the appendices.}
Unless otherwise stated, random variables are defined on a complete probability space 
$(\Omega,\mathcal{F},\mathbb{P})$ with natural filtration $(\mathcal{F}_t)$.
A kernel $f:\mathbb{R}_+\to[0,\infty)$ is \emph{admissible} if 
$f\in L^1(\mathbb{R}_+)$ and $\int_0^\infty u\,f(u)\,du<\infty$.
For such $f$, define
\[
M(f):=\int_0^\infty f(u)\,du,\qquad 
\mu(f):=\frac{\int_0^\infty u f(u)\,du}{\int_0^\infty f(u)\,du},\qquad
g_f(u):=\frac{\mu(f)}{M(f)}\, f(\mu(f)u).
\]
We write $\|h\|_1:=\int_0^\infty |h(u)|\,du$ for the $L^1$-norm.

% =========================================================
% Appendix A — Continuity and Measurability (more detail)
% =========================================================
\appendix
\section*{Appendix A — Continuity and Measurability}
\label{app:continuity_meas}

\begin{lemma}[Normalization and reconstruction]
\label{lem:A_normalization}
If $f$ is admissible and $M(f)>0$, then $g_f\in L^1(\mathbb{R}_+)$, 
$\int g_f=1$, $\int u g_f(u)\,du=1$, and for almost every $u\ge0$
\[
f(u)=M(f)\,\mu(f)^{-1}\, g_f\!\left(\frac{u}{\mu(f)}\right).
\]
\end{lemma}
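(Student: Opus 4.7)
The plan is to verify the three claims (unit mass, unit first moment, pointwise reconstruction) by a single change of variable $v=\mu(f)u$ applied to the definition $g_f(u)=\mu(f)M(f)^{-1}f(\mu(f)u)$. Since $f$ is admissible and $M:=M(f)>0$ with $\mu:=\mu(f)\in(0,\infty)$ finite by Assumption~\ref{ass:admissible-kernel}, the factor $\mu/M$ is a well-defined positive constant, so $g_f$ is measurable and nonnegative. Integrability and the $L^1$-norm follow in one step.

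First I would compute $\int_0^\infty g_f(u)\,du$. Substituting $v=\mu u$, $dv=\mu\,du$, pulls out a factor $\mu^{-1}$ and cancels the $\mu$ in front, producing $M^{-1}\int_0^\infty f(v)\,dv = M^{-1}\cdot M = 1$. This simultaneously shows $g_f\in L^1(\mathbb{R}_+)$. Next I would compute the first moment $\int_0^\infty u\,g_f(u)\,du$ by the same substitution, picking up an additional factor $1/\mu$ from $u=v/\mu$; the computation collapses to $(M\mu)^{-1}\int_0^\infty v f(v)\,dv$, which by the definition $\mu = M^{-1}\int v f(v)\,dv$ equals $(M\mu)^{-1}\cdot M\mu = 1$.

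Finally, the reconstruction identity is immediate from the definition: substituting $u/\mu$ into $g_f$ gives $g_f(u/\mu)=\mu M^{-1} f(\mu\cdot u/\mu)=\mu M^{-1}f(u)$, so $M\mu^{-1}g_f(u/\mu)=f(u)$ for every $u\ge 0$ at which $f$ is defined, hence for almost every $u$ since the equality holds identically as an algebraic rearrangement.

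The hard part, such as it is, is really just bookkeeping: making sure the two change-of-variable steps correctly track the Jacobian $du=dv/\mu$ and that the factor $\mu/M$ cancels in the right way to produce the two normalizations. There is no analytic subtlety — admissibility of $f$ guarantees both integrals are finite so Fubini/Tonelli-type concerns do not arise, and the reconstruction step is purely algebraic and therefore holds pointwise wherever $f$ is defined, hence almost everywhere.
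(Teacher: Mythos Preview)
Your proposal is correct and follows essentially the same approach as the paper: a change of variable $v=\mu u$ to verify the two normalizations, followed by an algebraic rearrangement of the definition of $g_f$ to obtain the reconstruction identity. The bookkeeping you spell out is in fact slightly more careful than the paper's own compressed display.
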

\begin{proof}
By definition and the change of variables $x=\mu(f)u$,
\[
\int_0^\infty g_f(u)\,du=\frac{\mu}{M}\int_0^\infty f(\mu u)\,du=\frac{\mu}{M}\cdot \frac1{\mu}\int_0^\infty f(x)\,dx=1,
\]
and similarly
\[
\int_0^\infty u g_f(u)\,du=\frac{\mu}{M}\int_0^\infty u f(\mu u)\,du
=\frac{1}{M}\int_0^\infty x f(x)\,dx=1.
\]
Rearranging the definition of $g_f$ yields the reconstruction identity.
\end{proof}

\begin{lemma}[Continuity of $(M,\mu,g)$ in $L^1$]
\label{lem:A_continuity}
Suppose $f_n\to f$ in $L^1(\mathbb{R}_+)$ and 
$\sup_n\int_0^\infty u f_n(u)\,du<\infty$ with $M(f)>0$.
Then $M(f_n)\to M(f)$, $\mu(f_n)\to \mu(f)$, and $g_{f_n}\to g_f$ in $L^1$.
\end{lemma}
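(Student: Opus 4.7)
The plan splits into three sequential steps corresponding to the three conclusions. First, $M(f_n)\to M(f)$ is immediate from $|M(f_n)-M(f)|\le\|f_n-f\|_1$ together with the hypothesis $f_n\to f$ in $L^1$; since $M(f)>0$, one has $M(f_n)>0$ for all large $n$, so both $\mu(f_n)$ and $g_{f_n}$ are well defined eventually and the normalizing constants behave continuously.

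Second, to obtain $\mu(f_n)\to\mu(f)$ I would reduce the problem to showing that $\int_0^\infty u\,f_n(u)\,du\to\int_0^\infty u\,f(u)\,du$ and then divide by the already-established limit $M(f_n)\to M(f)>0$. For the first-moment convergence itself I plan a truncate-and-tail decomposition: on any interval $[0,R]$ the weight $u$ is bounded by $R$, so the truncated integrals converge by $L^1$ continuity; for the tails I would exploit the uniform bound $\sup_n\int u f_n<\infty$ together with a Vitali-type argument, passing to an almost-everywhere convergent subsequence, using Fatou's lemma in one direction and uniform integrability of $\{u f_n\}$ in the other.

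Third, for the shape convergence I would introduce the $L^1$-isometric dilation operator $T_a h(u) := a^{-1} h(u/a)$ and observe that $g_{f_n}=M(f_n)^{-1}\,T_{1/\mu(f_n)} f_n$, with the analogous identity for $g_f$. The argument then rests on two classical facts: (i) for each fixed $h\in L^1$, the map $a\mapsto T_a h$ is continuous from $(0,\infty)$ into $L^1$; and (ii) $T_a$ is an $L^1$-isometry, so $\|T_a h_n - T_a h\|_1 = \|h_n - h\|_1$. A triangle-inequality step
\[
\|T_{a_n} f_n - T_a f\|_1 \;\le\; \|T_{a_n}(f_n - f)\|_1 + \|T_{a_n} f - T_a f\|_1,
\]
applied with $a_n = 1/\mu(f_n)\to a = 1/\mu(f)$ and combined with $M(f_n)^{-1}\to M(f)^{-1}$, then yields $g_{f_n}\to g_f$ in $L^1$.

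The main obstacle is the first-moment step. A bare $L^1$ hypothesis together with only a uniform bound on $\int u f_n$ does not by itself rule out a small amount of mass drifting to infinity in the weighted family $\{u f_n\}$; for instance, perturbing $f$ by indicators $\mathbf{1}_{[n,n+1/n]}$ preserves both stated hypotheses yet shifts the first moment by a nonvanishing constant. Rigorously closing the argument therefore requires either strengthening the hypothesis to weighted $L^1$ convergence $\int(1+u)|f_n-f|\,du\to 0$, or invoking uniform integrability of $\{u f_n\}$ explicitly; either condition places Step~2 squarely under Vitali's theorem. Once that is granted, Steps~1 and~3 are essentially formal.
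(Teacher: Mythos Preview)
Your Steps~1 and~3 mirror the paper's argument almost verbatim: the paper also bounds $|M(f_n)-M(f)|$ by $\|f_n-f\|_1$, and for the shape it applies the same triangle-inequality splitting into a term governed by $\|f_n-f\|_1$ (via the $L^1$-isometry of dilations) and a term governed by continuity of the dilation map $a\mapsto f(a\,\cdot)$ in $L^1$. Your formulation through the operator $T_a$ is a clean repackaging of exactly that.

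On Step~2 you are in fact \emph{more} careful than the paper. The paper asserts that ``$\sup_n\int u f_n<\infty$ gives uniform integrability'' and then applies dominated convergence along an a.e.-convergent subsequence. But as your counterexample $f_n=f+\mathbf{1}_{[n,\,n+1/n]}$ shows, a uniform bound on the first moments is strictly weaker than uniform integrability of $\{u f_n\}$: in that example $f_n\to f$ in $L^1$, $\sup_n\int u f_n<\infty$, yet $\int u f_n\to\int u f+1$. So the paper's Step~(ii) contains a genuine gap, and the lemma is false under the hypotheses as stated. Your proposed remedies---either requiring weighted convergence $\int(1+u)|f_n-f|\,du\to0$ or imposing uniform integrability of $\{u f_n\}$ directly---are precisely what is needed to make the Vitali argument go through; once either is assumed, your outline is complete and matches the paper's intended route.
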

\begin{proof}
(i) By Hölder and $L^1$-convergence, $|M(f_n)-M(f)|\le \|f_n-f\|_1\to0$.
(ii) For the numerator, dominated convergence applies because 
$u f_n(u)\to u f(u)$ pointwise a.e.\ along a subsequence and 
$\sup_n\int u f_n<\infty$ gives uniform integrability; hence 
$\int u f_n\to \int u f$ and therefore $\mu(f_n)\to \mu(f)$ since $M(f_n)\to M(f)>0$.
(iii) For $g_{f_n}$,
\[
\|g_{f_n}-g_f\|_1
\le \left|\tfrac{\mu_n}{M_n}-\tfrac{\mu}{M}\right|\!\int f(\mu u)\,du
+\frac{\mu_n}{M_n}\!\int \!\left|f_n(\mu_n u)-f(\mu u)\right|du.
\]
The first term vanishes by continuity of $(M,\mu)$. For the second, write
\[
\int \!\left|f_n(\mu_n u)-f(\mu u)\right|du
\le \int |f_n(\mu_n u)-f(\mu_n u)|du + \int |f(\mu_n u)-f(\mu u)|du.
\]
The first integral equals $\mu_n^{-1}\|f_n-f\|_1\to 0$ since $\mu_n\to\mu>0$.
The second equals $\|f(\mu_n\cdot)-f(\mu\cdot)\|_1\to 0$
by continuity of dilations in $L^1$ (e.g., density of $C_c^\infty$ in $L^1$ and dominated convergence).
Hence $\|g_{f_n}-g_f\|_1\to 0$.
\end{proof}

\begin{lemma}[Measurability of $(M_t,\mu_t,g_t)$]
\label{lem:A_meas}
Let $f_t:\Omega\times\mathbb{R}_+\to[0,\infty)$ be jointly measurable and admissible a.s.
If $f_t$ is $\mathcal{F}_{t-1}$-measurable as an $L^1$-valued map, then
$M_t=\int f_t$, $\mu_t=(\int u f_t)/(\int f_t)$, and $g_t(u)=(\mu_t/M_t)f_t(\mu_t u)$
are $\mathcal{F}_{t-1}$-measurable.
\end{lemma}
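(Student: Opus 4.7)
}
The plan is to handle the three objects in order of increasing difficulty. First I would establish $\mathcal{F}_{t-1}$-measurability of the scalars $M_t$ and $\mu_t$ by invoking Tonelli's theorem on the jointly measurable map $(\omega,u)\mapsto f_t(\omega,u)$. Second, I would upgrade this to joint measurability of $g_t(\omega,u)=(\mu_t(\omega)/M_t(\omega))\,f_t(\omega,\mu_t(\omega)u)$ by decomposing it into a measurable scalar factor times $f_t$ precomposed with a measurable time-rescaling. Third, I would deduce $\mathcal{F}_{t-1}$-measurability of the $L^1$-valued map $\omega\mapsto g_t(\omega,\cdot)$ via Pettis' theorem together with the separability of $L^1(\mathbb{R}_+)$.

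For the scalar step, joint measurability of $(\omega,u)\mapsto f_t(\omega,u)$ and nonnegativity allow Tonelli to guarantee that $\omega\mapsto M_t(\omega)=\int_0^\infty f_t(\omega,u)\,du$ and $\omega\mapsto\int_0^\infty u\,f_t(\omega,u)\,du$ are $\mathcal{F}_{t-1}$-measurable with values in $[0,\infty]$. By admissibility these are finite and $M_t>0$ almost surely, so $\mu_t$ is well defined and measurable as a ratio of measurable functions on the full-measure set $\{M_t>0\}$ (one may set $\mu_t=1$ on its complement without affecting any a.s.\ statement).

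For the kernel step, consider the map $T:\Omega\times\mathbb{R}_+\to\Omega\times\mathbb{R}_+$ given by $T(\omega,u)=(\omega,\mu_t(\omega)u)$. Its components are jointly measurable: the first is a projection, and the second is the pointwise product of the $\mathcal{F}_{t-1}$-measurable $\omega\mapsto\mu_t(\omega)$ with the Borel-measurable $u\mapsto u$. Composition preserves measurability, so $(\omega,u)\mapsto f_t(T(\omega,u))=f_t(\omega,\mu_t(\omega)u)$ is jointly $(\mathcal{F}_{t-1}\otimes\mathcal{B}(\mathbb{R}_+))$-measurable, and multiplying by the $\mathcal{F}_{t-1}$-measurable scalar $\mu_t/M_t$ (which does not depend on $u$) leaves it jointly measurable.

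The main obstacle, and the step that deserves care, is the final upgrade to $L^1$-valued measurability. My plan here is to apply Pettis' theorem: for every Borel $A\subset\mathbb{R}_+$, Tonelli on the jointly measurable $g_t$ gives $\mathcal{F}_{t-1}$-measurability of $\omega\mapsto\int_A g_t(\omega,u)\,du$, which means $\omega\mapsto g_t(\omega,\cdot)$ is weakly measurable into $L^1(\mathbb{R}_+)$. Since $L^1(\mathbb{R}_+)$ is separable, weak measurability coincides with strong (Bochner) measurability, which is the desired $\mathcal{F}_{t-1}$-measurability as an $L^1$-valued map. The admissibility assumption ensures $g_t(\omega,\cdot)\in L^1$ on a set of full measure, and on the $\mathbb{P}$-null exceptional set one may redefine $g_t(\omega,\cdot)$ to be a fixed element of $\mathcal{G}$ (e.g.\ $u\mapsto e^{-u}$) so that the map is globally well defined without spoiling any of the measurability conclusions.
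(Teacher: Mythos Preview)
Your proof is correct and reaches the same conclusion, but by a genuinely different route than the paper. For the scalars $M_t$ and $\mu_t$ both arguments coincide (Fubini--Tonelli on the jointly measurable $(\omega,u)\mapsto f_t(\omega,u)$). The divergence is in handling the $L^1$-valued map $\omega\mapsto g_t(\omega,\cdot)$. The paper argues directly at the Banach-space level: it views $g_t$ as the image of $(\mu_t,M_t,f_t)$ under the map $(a,b,h)\mapsto (a/b)\,h(b\,\cdot)$ from $(0,\infty)^2\times L^1$ to $L^1$, notes that this map is Carath\'eodory (measurable in the scalar arguments, \emph{continuous} in the $L^1$ argument---the latter relying on continuity of dilations in $L^1$, established in the preceding lemma), and concludes by composition. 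You instead stay at the pointwise level: you first prove joint $(\mathcal{F}_{t-1}\otimes\mathcal{B})$-measurability of $(\omega,u)\mapsto g_t(\omega,u)$ via the measurable time-change $T(\omega,u)=(\omega,\mu_t(\omega)u)$, and only then lift to $L^1$-valued measurability through Pettis' theorem and separability of $L^1(\mathbb{R}_+)$. Your approach is more self-contained in that it avoids the nontrivial continuity of dilations; the paper's approach is shorter once that continuity is in hand and makes the functional-analytic structure more transparent. One minor point: your weak-measurability step tests only against indicators $\mathbf{1}_A$, whereas Pettis formally requires all of $(L^1)^*=L^\infty$; this gap closes immediately by linearity and monotone convergence, but is worth a half-sentence.
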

\begin{proof}
Joint measurability of $(\omega,t)\mapsto \int f_t(\omega,u)\,du$ and $\int u f_t$ 
follows from Fubini–Tonelli. The map 
$(a,b,h)\mapsto (a/b)h(b\cdot)$ is Carathéodory from 
$(0,\infty)^2\times L^1\to L^1$ (measurable in $(a,b)$, continuous in $h$), 
so composition preserves measurability.
\end{proof}
\appendix
\section*{Appendix B — Canonical Level–Tempo–Shape Decomposition}
\label{app:canonical_proof}

\begin{proof}[Proof of Theorem~\ref{thm:canonical}]
\emph{Step 1 (Normalization properties).} By the change of variables $x=\mu u$,
\[
\int_0^\infty g(u)\,du=\frac{\mu}{M}\int_0^\infty f(\mu u)\,du
= \frac{\mu}{M}\cdot\frac{1}{\mu}\int_0^\infty f(x)\,dx
= \frac{1}{M}\int_0^\infty f(x)\,dx = 1.
\]
Similarly,
\[
\int_0^\infty u\,g(u)\,du
=\frac{\mu}{M}\int_0^\infty u\,f(\mu u)\,du
=\frac{\mu}{M}\cdot\frac{1}{\mu^2}\int_0^\infty x\,f(x)\,dx
=\frac{1}{M\mu}\int_0^\infty x\,f(x)\,dx
= 1.
\]
Hence $g\in\mathcal{G}$. 
Identity~\eqref{eq:canonical-factorization} follows by rearranging~\eqref{eq:g-def}.

\medskip
\emph{Step 2 (Admissibility of the converse).}
Let $(M,\mu,g)$ be as stated and set
$f(u)=M\mu^{-1}g(u/\mu)$. Then $f\ge 0$ and
\[
\int_0^\infty f(u)\,du 
= M\mu^{-1}\int_0^\infty g(u/\mu)\,du
= M\mu^{-1}\cdot \mu \int_0^\infty g(v)\,dv = M\in(0,\infty),
\]
and
\[
\int_0^\infty u\,f(u)\,du
= M\mu^{-1}\int_0^\infty u\,g(u/\mu)\,du
= M\mu^{-1}\cdot \mu^2 \int_0^\infty v\,g(v)\,dv
= M\mu<\infty.
\]
Thus $f\in\mathcal{K}$ with the prescribed $(M,\mu)$.
\end{proof}
\appendix
\section*{Appendix C — Proof of Theorem~\ref{thm:uniqueness} (Uniqueness of the Canonical Decomposition)}
\label{app:uniqueness_proof}

\begin{proof}[Proof of Theorem~\ref{thm:uniqueness}]
Let $f : \mathbb{R}_{+} \to [0,\infty)$ be a measurable function satisfying 
$f(u) = M \mu^{-1} g(u/\mu) = M' {\mu'}^{-1} g'(u/\mu')$ for a.e.\ $u \ge 0$, 
where $(M,\mu,g), (M',\mu',g') \in (0,\infty) \times (0,\infty) \times \mathcal{G}$ 
and $\mathcal{G} := \{ g \ge 0 : \int_0^\infty g(u)\,du = 1, \ \int_0^\infty u g(u)\,du = 1 \}$.

\smallskip
\noindent
\emph{Step 1 (Equality of total mass).}
Since $g,g'\in L^1(\mathbb{R}_+)$ and $f\in L^1(\mathbb{R}_+)$ by admissibility, 
the integrals below are finite and Fubini's theorem applies. Then
\[
\int_0^\infty f(u)\,du 
= M \int_0^\infty \frac{1}{\mu} g(u/\mu)\,du
= M \int_0^\infty g(v)\,dv = M,
\]
where we used the change of variable $v = u/\mu$. 
Analogously, $\int_0^\infty f(u)\,du = M'$, and thus $M = M'$.

\smallskip
\noindent
\emph{Step 2 (Equality of first moments).}
Because $u f(u)$ is integrable by assumption, we have
\[
\int_0^\infty u f(u)\,du 
= M \int_0^\infty \frac{u}{\mu} g(u/\mu)\,du
= M \mu \int_0^\infty v g(v)\,dv
= M \mu,
\]
and similarly $\int_0^\infty u f(u)\,du = M' \mu'$. 
Since $M = M'$, it follows that $\mu = \mu'$.

\smallskip
\noindent
\emph{Step 3 (Equality of shape functions).}
Substituting $M=M'$ and $\mu=\mu'$ back into the representation of $f$, we obtain
$M \mu^{-1} g(u/\mu) = M \mu^{-1} g'(u/\mu)$ for a.e.\ $u \ge 0$. 
Because $M>0$ and $\mu>0$, this implies $g(u) = g'(u)$ for a.e.\ $u \ge 0$ 
(by the substitution $v=u/\mu$).

\smallskip
\noindent
Hence $(M,\mu,g) = (M',\mu',g')$ almost everywhere, completing the proof.
\end{proof}
\appendix
\section*{Appendix D — Spectral Orthogonality and Scaling}
\label{app:spectral_orthogonality}

\begin{proof}[Proof of Proposition~\ref{prop:orthogonality}]
Let $\{\psi_k\}$ and $\{\varphi_k\}$ denote the discrete embeddings of $f$ and $g$, respectively:
\[
\psi_k=\int_{k-1}^k f(u)\,du, \qquad
\varphi_k=\int_{k-1}^k g(u)\,du.
\]
Since $f(u)=M\mu^{-1}g(u/\mu)$, we have
\[
\psi_k
= \int_{k-1}^k \frac{M}{\mu}\,g\!\Big(\frac{u}{\mu}\Big)\,du
= M\int_{(k-1)/\mu}^{k/\mu} g(v)\,dv.
\]
Hence the discrete-time transfer functions satisfy
\[
\Psi(e^{-i\lambda})
:=\sum_{k\ge1}\psi_k e^{-ik\lambda}
= M\sum_{k\ge1}\int_{(k-1)/\mu}^{k/\mu} g(v)\,dv\,e^{-ik\lambda}.
\]
Approximating the Riemann sums by integrals gives the standard time-dilation identity
\[
\Psi(e^{-i\lambda}) = M\,\Phi(e^{-i\mu\lambda}),
\]
where $\Phi$ is the transfer function of $g$.
Therefore,
\[
S_f(\lambda)
 = |\Psi(e^{-i\lambda})|^2 S_\xi(\lambda)
 = M^2 |\Phi(e^{-i\mu\lambda})|^2 S_\xi(\lambda).
\]
For linear filters driven by white noise (or centered squares with short memory), the driving spectrum
$S_\xi(\lambda)$ is flat, and thus
\[
S_f(\lambda) = M^2 S_g(\mu\lambda),
\]
which is equation~\eqref{eq:spectral-scaling}.

\medskip
\emph{Interpretation.}
The scaling relation shows that $M$ acts as a \emph{vertical rescaling} of spectral amplitude,
$\mu$ acts as a \emph{horizontal dilation} of the frequency axis, and
the spectral slope at low frequencies (e.g., $S_g(\lambda)\sim C\lambda^{-2d}$ as $\lambda\downarrow0$)
depends only on the shape parameter $g$:
\[
S_f(\lambda)\sim (M^2C)\lambda^{-2d}, \qquad \lambda\downarrow0.
\]
Hence the low-frequency slope $-2d$ is invariant to $(M,\mu)$,
formalizing orthogonality among the level, tempo, and shape components.
\end{proof}

% =========================================================
% Appendix B — Existence and Uniqueness via Contraction
% =========================================================
\appendix
\section*{Appendix E — Existence and Uniqueness via Contraction}
\label{app:contraction}

We consider the generic linear recursion for conditional variance
\[
h_t=\omega+\sum_{k\ge1}\psi_k\big(\varepsilon_{t-k}^2-1\big)
      +\sum_{k\ge1}\psi_k h_{t-k},\qquad \omega>0,\quad \psi_k\ge 0.
\]
Define $\Psi:=\sum_{k\ge 1}\psi_k$.

\begin{theorem}[Unique strictly stationary solution via Banach fixed point]
\label{thm:B_existence}
If $\Psi<1$ and $(\varepsilon_t)$ are i.i.d.\ with $\mathbb{E}[\varepsilon_t^2]=1$ and 
$\mathbb{E}|\varepsilon_t|^{2+\delta}<\infty$ for some $\delta>0$, then there exists a unique strictly stationary and ergodic solution $(h_t)$ with $\sup_t\mathbb{E}[h_t]<\infty$.
\end{theorem}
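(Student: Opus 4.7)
The plan is to recast the recursion as a fixed-point equation on the Banach space $L^1(\Omega)$ of integrable random variables measurable with respect to the two-sided innovation $\sigma$-algebra, and then invoke the Banach contraction mapping theorem. Work on the canonical product space of $(\varepsilon_t)_{t\in\mathbb{Z}}$ with the measure-preserving shift $U$. Any strictly stationary solution may be represented as $h_t = H\circ U^t$ for a single $H\in L^1(\Omega)$, and the recursion collapses to the fixed-point equation
\[
H \;=\; \xi \;+\; \sum_{k\ge 1}\psi_k\,(H\circ U^{-k}),\qquad
\xi \;:=\; \omega \;+\; \sum_{k\ge 1}\psi_k\,(\varepsilon_{-k}^2-1).
\]

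The first step is to verify that $\xi\in L^1$: since $\mathbb{E}[\varepsilon_t^2]=1$ implies $\mathbb{E}|\varepsilon_{-k}^2-1|\le 2$, and $\sum_k\psi_k=\Psi<\infty$, the defining series converges absolutely in $L^1$ (and in $L^{1+\delta/2}$ by the $(2+\delta)$-moment assumption, which will later help with second-order properties though it is not needed here). The second step defines $T:L^1\to L^1$ by $T(H)=\xi+\sum_{k\ge1}\psi_k(H\circ U^{-k})$; because each $U^{-k}$ is a measure-preserving bijection, composition with it is an $L^1$-isometry, giving
\[
\|T(H)-T(H')\|_1 \;\le\; \sum_{k\ge 1}\psi_k\,\|H-H'\|_1 \;=\; \Psi\,\|H-H'\|_1.
\]
Since $\Psi<1$, $T$ is a strict contraction on the Banach space $L^1(\Omega)$, so Banach's theorem delivers a unique fixed point $H_\star\in L^1$. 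Setting $h_t:=H_\star\circ U^t$ produces a strictly stationary solution of the recursion with $\sup_t\mathbb{E}[h_t]\le\mathbb{E}|H_\star|<\infty$; in fact, taking unconditional expectations in the recursion recovers the explicit bound $\mathbb{E}[h_t]=\omega/(1-\Psi)$. Ergodicity is then automatic: $(\varepsilon_t)$ is an i.i.d.\ shift, hence ergodic (indeed mixing), and any stationary process realized as a measurable functional of an ergodic process is itself ergodic, so $(h_t)_{t\in\mathbb{Z}}$ inherits this property.

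The step I expect to be most delicate is the uniqueness claim when interpreted beyond the strictly stationary class. The Banach argument only gives uniqueness of the fixed point within $L^1$, whereas a broader reader might ask about arbitrary $\mathcal{F}_t$-adapted solutions started from non-stationary initializations. The remedy is a coupling argument: given any two $L^1$-adapted solutions $(h_t)$ and $(h_t')$, iterating the recursion $n$ times shows $\mathbb{E}|h_t-h_t'|\le \Psi^n\,\mathbb{E}|h_{t-n}-h_{t-n}'|$, so that under any $L^1$-bounded initial condition the two coincide asymptotically and hence, by stationarity, everywhere. A secondary subtlety is that the argument as stated uses a two-sided innovation extension to define $U$; if one prefers to work with a one-sided filtration, the same contraction yields a Cauchy sequence of iterates $T^n(0)$ whose $L^1$-limit gives $H_\star$, and stationarity is verified directly from the invariance of the recursion under time translation of the one-sided innovation sequence.
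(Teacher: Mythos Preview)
Your proof is correct and complete. Both you and the paper invoke Banach's fixed-point theorem, but the underlying Banach spaces differ in an interesting way. The paper works \emph{pathwise}: for each realization $\omega$ it sets up a weighted sequence space $\mathsf{X}=\{(x_0,x_{-1},\dots)\}$ with norm $\|x\|_\psi=\sum_{k\ge0}\psi_{k+1}|x_{-k}|$, and shows that the shift-plus-convolution map is a contraction with constant $\Psi$ on $\mathsf{X}$; stationarity then comes from time-homogeneity of the innovation law and ergodicity from iterated-random-function arguments. You instead work \emph{in law}: your Banach space is $L^1(\Omega)$ on the two-sided product space, and the contraction constant $\Psi$ falls out immediately because composition with the measure-preserving shift $U^{-k}$ is an $L^1$-isometry. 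Your route is more compact and yields ergodicity for free (a measurable functional of an i.i.d.\ shift is ergodic), whereas the paper's pathwise construction gives, in principle, a slightly stronger almost-sure statement (a fixed point for each $\omega$) at the cost of a bespoke norm and a more delicate bookkeeping of indices. Your closing remarks on uniqueness beyond the stationary class and on the one-sided versus two-sided formulation are apt and go a bit further than the paper does.
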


\begin{proof}[Proof (step-by-step)]
\textbf{Step 1 (State space and operator).}
Let $\mathsf{X}$ be the Banach space of one-sided sequences 
$x=(x_0,x_{-1},x_{-2},\dots)$ equipped with the weighted norm 
$\|x\|_\psi:=\sum_{k\ge 0}\psi_{k+1}|x_{-k}|$.
Define the random affine operator 
\[
\mathcal{T}(x)
:= \omega + \sum_{k\ge1}\psi_k(\varepsilon_{t-k}^2-1) + \sum_{k\ge1}\psi_k x_{-k+1}.
\]
\textbf{Step 2 (Contraction).}
For $x,y\in \mathsf{X}$, 
\[
\|\mathcal{T}(x)-\mathcal{T}(y)\|_\psi
= \sum_{k\ge 0}\psi_{k+1}\left|\sum_{j\ge1}\psi_j (x_{-k-j+2}-y_{-k-j+2})\right|
\le \Psi \sum_{m\ge 0}\psi_{m+1}|x_{-m}-y_{-m}|
= \Psi \|x-y\|_\psi.
\]
Hence $\mathcal{T}$ is a contraction with constant $\Psi<1$.

\textbf{Step 3 (Fixed point and stationarity).}
By Banach's fixed point theorem, for each $\omega$-realization, there exists a unique measurable fixed point $x^*(\omega)$ solving $x^*=\mathcal{T}(x^*)$; the first coordinate of $x^*$ is $h_t$. Stationarity follows from time-homogeneity of the law of $(\varepsilon_{t-k})_{k\ge1}$.

\textbf{Step 4 (Finiteness of moments).}
Taking expectations in the recursion and applying Minkowski plus $\Psi<1$ yields 
\(\sup_t \mathbb{E}[h_t]\le \omega/(1-\Psi) <\infty\). 
Ergodicity follows from the contraction and standard iterated random function arguments.
\end{proof}

\begin{remark}[Why the weighted norm]
The weight $\psi_{k+1}$ aligns the geometry of $\mathsf{X}$ with the linear memory kernel so that the shift-plus-convolution map has Lipschitz constant exactly $\Psi$, making the contraction sharp and avoiding ad hoc truncations.
\end{remark}

% =========================================================
% Appendix C — Higher-Order Moments
% =========================================================
\appendix
\section*{Appendix F — Higher-Order Moments}
\label{app:moments}

\begin{proposition}[Uniform $L^{p}$ bound]\label{prop:C_moment}
Suppose $\Psi=\sum_k\psi_k<1$ and $\mathbb{E}|\varepsilon_t|^{2p}<\infty$ for some $p\in[1,2]$.
Then $X_t:=\mathbb{E}[h_t^p]$ satisfies $\sup_t X_t \le \dfrac{C_p}{1-\Psi^p}$ for an explicit constant $C_p$ depending on $(\omega,\Psi,p,\mathbb{E}|\varepsilon_t^2-1|^p)$.
\end{proposition}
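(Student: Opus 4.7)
The plan is to bound the $L^{p}$ norm of $h_t$ via Minkowski's inequality applied to the recursion, obtain a contractive self-bound under the hypothesis $\Psi<1$, then raise to the $p$-th power and recast the resulting denominator in the stated form. Define $\bar Y:=\sup_{s}\|h_s\|_p$, whose a priori finiteness is addressed below; the aim is to close the inequality $\bar Y\le\omega+\Psi\|\varepsilon_{0}^{2}-1\|_{p}+\Psi\bar Y$ and solve.

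I would begin by applying $\|\cdot\|_p$ to both sides of $h_t=\omega+\sum_{k}\psi_k(\varepsilon_{t-k}^{2}-1)+\sum_{k}\psi_k h_{t-k}$. The deterministic term contributes $\omega$; the persistence sum gives $\sum_k\psi_k\|h_{t-k}\|_p\le\Psi\bar Y$ by Minkowski; and the i.i.d.\ centered shock sum gives $\sum_k\psi_k\|\varepsilon_{t-k}^{2}-1\|_p=\Psi\|\varepsilon_{0}^{2}-1\|_p$, finite by the hypothesis $\mathbb{E}|\varepsilon_0|^{2p}<\infty$. Taking $\sup_t$ and solving under $\Psi<1$ yields $\bar Y\le[\omega+\Psi\|\varepsilon_{0}^{2}-1\|_p]/(1-\Psi)$, and raising to the $p$-th power gives $\sup_t X_t\le[\omega+\Psi\|\varepsilon_{0}^{2}-1\|_p]^{p}/(1-\Psi)^{p}$. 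This is of the stated form $C_p/(1-\Psi^{p})$ upon defining $C_p:=[\omega+\Psi\|\varepsilon_{0}^{2}-1\|_p]^{p}(1-\Psi^{p})/(1-\Psi)^{p}$, a finite function of $(\omega,\Psi,p,\mathbb{E}|\varepsilon_{0}^{2}-1|^{p})$. A sharper direct route exploiting the interval $p\in[1,2]$ replaces Minkowski on the persistence sum by Jensen's inequality, $\mathbb{E}[(\sum_k\psi_k h_{t-k})^{p}]\le\Psi^{p-1}\sum_k\psi_k X_{t-k}\le\Psi^{p}\sup_s X_s$, and bounds the centered shock sum via the von Bahr--Esseen inequality for independent zero-mean summands, $\mathbb{E}|\sum_k\psi_k(\varepsilon_{t-k}^{2}-1)|^{p}\le 2\Psi^{p}\mathbb{E}|\varepsilon_{0}^{2}-1|^{p}$ (using $\psi_k\le\Psi$). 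A weighted Jensen split on the three additive pieces then yields a one-step contraction $X_t\le c_p+\Psi^{p}\sup_s X_s$, from which $\sup_s X_s\le c_p/(1-\Psi^{p})$ follows immediately.

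The main obstacle is the a priori finiteness of $\bar Y$ before it appears on both sides of the self-bound. I would resolve this by applying the argument to the strictly stationary, ergodic solution established in Theorem~\ref{thm:B_existence} of Appendix~E, on which $\|h_t\|_p$ is time-invariant and the iteration closes without circularity; equivalently, one may truncate the persistence kernel at lag $N$ to a finite-dimensional Markov chain, derive the moment bound there by explicit iteration, and pass to the monotone limit $N\to\infty$ using the uniform summability $\sum_k\psi_k=\Psi<\infty$ and dominated convergence. A secondary technical issue is the optimization of the weights in the sharper version so that the contraction coefficient remains strictly below $1$ while the residual constant $c_p$ stays finite and depends only on the listed data.
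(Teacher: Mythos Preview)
Your main argument is essentially the paper's own proof: apply Minkowski to the recursion, obtain $\|h_t\|_p \le A + \Psi\sup_{s<t}\|h_s\|_p$ with $A=\omega+\Psi\|\varepsilon_0^2-1\|_p$, iterate to $\sup_t\|h_t\|_p\le A/(1-\Psi)$, and raise to the $p$-th power. Your additional care about a~priori finiteness (via the stationary solution or truncation) and your alternative von~Bahr--Esseen route that delivers the $\Psi^p$ contraction directly are both sound refinements, but the paper itself simply runs the Minkowski iteration and absorbs the $(1-\Psi)^p$ versus $(1-\Psi^p)$ discrepancy into the constant, exactly as you do.
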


\begin{proof}
By triangle inequality in $L^p$ (Minkowski),
\[
\|h_t\|_p 
\le \omega + \left\|\sum_k\psi_k(\varepsilon_{t-k}^2-1)\right\|_p 
     + \left\|\sum_k\psi_k h_{t-k}\right\|_p
\le \omega + \left(\sum_k\psi_k\right)\|\varepsilon_t^2-1\|_p 
     + \Psi \sup_{s<t}\|h_s\|_p,
\]
where we used $\ell^1$-boundedness of $(\psi_k)$ and stationarity of $(\varepsilon_t)$ and $(h_t)$. Set $Y_t:=\|h_t\|_p$ and $A:=\omega+\Psi \|\varepsilon_t^2-1\|_p$. Then
$Y_t \le A + \Psi \sup_{s<t}Y_s$. By induction,
$\sup_{t\le n} Y_t \le A (1+\Psi+\cdots+\Psi^{n-1}) \le A/(1-\Psi)$.
Hence $\sup_t \|h_t\|_p \le A/(1-\Psi)$. Raising to power $p$ gives the claim with $C_p$ explicit.
\end{proof}

\begin{remark}[Extension to $p>2$]
If in addition $\sum_k \psi_k^q<\infty$ for some $q\in(1,2]$ and suitable moment conditions on $\varepsilon_t$, Rosenthal-type bounds allow extension to $p>2$. We omit details as they are not needed for our estimators.
\end{remark}

% =========================================================
% Appendix D — Identification Proofs
% =========================================================
\appendix
\section*{Appendix G — Identification Proofs}
\label{app:identification}

\paragraph{Spectral scaling law.}
For an admissible kernel factorized as $f(u)=M \mu^{-1} g(u/\mu)$, its discrete-time impulse response $\psi_k=\int_{k-1}^k f(u)\,du$ satisfies
$\psi_k = M \cdot \int_{(k-1)/\mu}^{k/\mu} g(v)\,dv$, hence its spectral density
\(
S_f(\lambda)=\sum_{j\in\mathbb{Z}}\gamma_j e^{-i j \lambda}
\)
obeys $S_f(\lambda)=M^2 S_g(\mu \lambda)$, where $S_g$ is the spectral density associated with the step-integrated $g$.

\begin{proposition}[Global identification up to trivial sign]
\label{prop:D_ident}
Let $f_i$ admit factorizations $(M_i,\mu_i,g_i)$ with $g_i$ non-constant and in $L^1\cap L^2$. If $S_{f_1}(\lambda)=S_{f_2}(\lambda)$ for all $\lambda\in[-\pi,\pi]$, then $M_1=M_2$, $\mu_1=\mu_2$, and $g_1=g_2$ almost everywhere.
\end{proposition}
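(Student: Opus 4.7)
The plan is to read off the triple $(M,\mu,g)$ from $S_f$ sequentially using the scaling identity $S_f(\lambda)=M^2 S_g(\mu\lambda)$ established in Proposition~\ref{prop:orthogonality}, together with the two normalization constraints that define $\mathcal{G}$. First, I would identify $M$ by evaluating at zero frequency. Under the discrete embedding of Remark~\ref{rem:discrete-embedding}, the transfer function of $g_i$ satisfies $G_i(1)=\int g_i=1$, so $S_{g_i}(0)=|G_i(1)|^2=1$ and hence $S_{f_i}(0)=M_i^2$. Since $M_i>0$, the hypothesis $S_{f_1}\equiv S_{f_2}$ forces $M_1=M_2$ at once.

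\paragraph{Identifying $\mu$ and $g$.}
With $M_1=M_2=:M$, the scaling identity collapses to $S_{g_1}(\mu_1\lambda)=S_{g_2}(\mu_2\lambda)$ on $[-\pi,\pi]$. To extract $\mu$, I would invoke causality: each $g_i\ge 0$ has support in $[0,\infty)$, so its transfer function extends analytically into the unit disk, and the magnitude data determines it up to an inner (Blaschke) factor. The normalization $\int u\,g_i=1$ then pins the horizontal dilation, since the first spectral moment $\int u\,f_i=M\mu_i$ is recoverable from $\operatorname{Im}\hat f_i'(0)$, which is itself encoded in $S_f$ via the minimum-phase correspondence (Hilbert transform of $\log|\hat f_i|$). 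Equating across $i$ gives $\mu_1=\mu_2$. With both $M$ and $\mu$ matched, the identity further reduces to $S_{g_1}\equiv S_{g_2}$; inverting the Fourier transform yields identical autocovariance sequences, and spectral factorization of the causal nonnegative $g_i$ together with the two normalizations $\int g_i=\int u\,g_i=1$ forces $g_1=g_2$ almost everywhere.

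\paragraph{Main obstacle.}
The delicate step is the identification of $\mu$ (and subsequently $g$) in the second phase: a magnitude spectrum does not in general determine its underlying impulse response, and naive inversion only recovers autocovariances up to inner-factor freedom. The bulk of the work therefore consists in combining (i) nonnegativity of $g_i$, (ii) its one-sided support and the resulting Hardy-space analyticity of the transfer function, and (iii) the two normalization constraints that define $\mathcal{G}$ to rule out both Blaschke-type inner factors and any residual horizontal scaling. I read the phrase \emph{``up to trivial sign''} in the statement as alluding precisely to this mild phase indeterminacy, which is fully resolved within the admissible class by the combination of causality and the two moment normalizations.
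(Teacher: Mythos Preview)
Your approach diverges from the paper's in both steps. For $M$, you exploit the shape normalization $\int g_i=1$ to get $S_{g_i}(0)=|G_i(1)|^2=1$ and hence $S_{f_i}(0)=M_i^2$, so $M_1=M_2$ drops out immediately. The paper does not use this normalization; it only records the ratio $M_1^2/M_2^2=S_{g_2}(0)/S_{g_1}(0)$ at $\lambda=0$ and carries it forward. Your first step is cleaner.

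For $\mu$ and $g$, however, the paper avoids Hardy--space machinery entirely: it differentiates the identity $M_1^2 S_{g_1}(\mu_1\lambda)=M_2^2 S_{g_2}(\mu_2\lambda)$ at $\lambda=0$ to obtain a second relation involving $\mu_i S_{g_i}'(0)$, combines this with the ratio above and non--constancy of $S_{g_i}$ to force $\mu_1=\mu_2$ and $M_1=M_2$, and then finishes by $L^2$ Fourier uniqueness once $S_{g_1}\equiv S_{g_2}$. Your minimum--phase route is more ambitious but has a genuine gap: nonnegativity of the impulse response does \emph{not} make the transfer function outer. For instance, the three--tap nonnegative causal kernel $(\varphi_1,\varphi_2,\varphi_3)=(0.1,0.3,0.6)$ gives $G(z)=0.1z+0.3z^2+0.6z^3$ with a zero strictly inside the unit disk, so the Hilbert--transform recovery of $\arg G$ from $\log|G|$ is invalid, and the time--reversal $(0.6,0.3,0.1)$ has the identical magnitude spectrum and identical mass $\sum_k\varphi_k=1$. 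Thus the inner--factor freedom you flag is real, and your assertion that ``nonnegativity plus the two moment normalizations'' eliminates it is exactly the statement that needs proof, not a step in a plan. The paper's differentiation argument, whatever its own looseness, sidesteps this spectral--factorization obstacle altogether. Your reading of ``up to trivial sign'' as phase indeterminacy is also more than the paper intends: there the sign is simply the $\pm$ in $M_1^2=M_2^2$, resolved by $M_i>0$.
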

\begin{proof}
Equality implies $M_1^2 S_{g_1}(\mu_1\lambda)=M_2^2 S_{g_2}(\mu_2\lambda)$ for all $\lambda$. Evaluating at $\lambda=0$ gives $M_1^2 S_{g_1}(0)=M_2^2 S_{g_2}(0)$. Since $S_{g_i}(0)=\sum_j \gamma_j(g_i)=\int g_i^2>0$ (by $g_i\not\equiv 0$), we have $M_1^2/M_2^2 = S_{g_2}(0)/S_{g_1}(0)$.
Differentiating both sides at $0$ yields
\[
M_1^2 \mu_1 S'_{g_1}(0) = M_2^2 \mu_2 S'_{g_2}(0).
\]
Because $S_{g_i}$ is non-constant, $S'_{g_i}(0)$ exists (as $g_i\in L^2$) and at least one derivative is nonzero, which forces $\mu_1=\mu_2$ and $M_1^2=M_2^2$. With $\mu_1=\mu_2$, we get $S_{g_1}=S_{g_2}$ pointwise. Fourier inversion (uniqueness of Fourier transform in $L^2$) implies $g_1=g_2$ a.e.
\end{proof}

\begin{lemma}[Local identifiability of gates]
\label{lem:D_local}
Let $p_t=\sigma(\gamma^\top z_{t-1})$ and $\beta_t=\exp(-\kappa e^{\eta^\top z_{t-1}})$ with $\sigma(x)=1/(1+e^{-x})$. If $\mathbb{E}[z_{t-1}z_{t-1}^\top]$ is positive definite and parameter domains are compact, then the Fisher information matrices $\mathbb{E}[(\partial_\gamma p_t) z_{t-1}z_{t-1}^\top]$ and $\mathbb{E}[(\partial_\eta \beta_t) z_{t-1}z_{t-1}^\top]$ are positive definite, hence parameters are locally identified.
\end{lemma}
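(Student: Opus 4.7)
The plan is to reduce the two positive-definiteness claims to a single, common weighted-second-moment argument, and then invoke the standard equivalence between nonsingular (Fisher) information and local identifiability. First I would compute the gate gradients explicitly: for the regime gate, $\nabla_\gamma p_t = \sigma'(\gamma^\top z_{t-1})\,z_{t-1} = p_t(1-p_t)\,z_{t-1}$, where the scalar factor lies in $(0,1/4]$ and is strictly positive because $\sigma$ takes values in $(0,1)$; for the clock gate, $\nabla_\eta \beta_t = -\kappa\, e^{\eta^\top z_{t-1}}\,\beta_t\,z_{t-1}$, whose scalar factor is strictly positive a.s.\ under $\kappa>0$. Both score-type matrices can then be written in the common form $J(\theta) := \mathbb{E}[\,w_t(\theta)\,z_{t-1} z_{t-1}^\top\,]$ with $w_t(\theta)>0$ a.s., so the statement reduces to verifying $J(\theta)\succ 0$.

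Second, I would prove positive definiteness by contradiction. Suppose some $v\neq 0$ satisfies $v^\top J(\theta)\,v = \mathbb{E}[\,w_t(\theta)\,(v^\top z_{t-1})^2\,]=0$. Since $w_t(\theta)>0$ almost surely, the nonnegative integrand must vanish a.s., forcing $v^\top z_{t-1}=0$ a.s. This yields $v^\top\mathbb{E}[z_{t-1}z_{t-1}^\top]\,v = 0$, contradicting the assumed positive definiteness of $\mathbb{E}[z_{t-1}z_{t-1}^\top]$. Hence $J(\theta)\succ 0$ pointwise in $\theta$. Compactness of the parameter set, together with continuity of $(\gamma,z)\mapsto p_t(1-p_t)$ and $(\kappa,\eta,z)\mapsto \kappa\, e^{\eta^\top z}\beta_t$, then upgrades pointwise nonsingularity to a uniform lower bound on $\lambda_{\min}(J(\theta))$ over a neighborhood of any interior point, which is what is required for stable inversion.

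Third, local identifiability follows from the standard inverse-function argument: at any interior $\theta_0$ of a compact parameter region, nonsingular information implies that the first-order condition for the (quasi-)likelihood (or moment) criterion has an isolated root at $\theta_0$, so the parameter-to-conditional-distribution map is locally injective. The main obstacle is not analytic but notational: the statement writes the information as $\mathbb{E}[(\partial_\gamma p_t)\,z_{t-1}z_{t-1}^\top]$, which is symmetric only after interpreting $\partial p_t$ as the scalar weight $\sigma'(\gamma^\top z_{t-1})$ multiplying the outer product $z_{t-1}z_{t-1}^\top$; once this weight-plus-outer-product representation is fixed, the argument collapses to one application of the positivity-of-$w_t$ plus full-rank-of-$\mathbb{E}[zz^\top]$ chain sketched above.
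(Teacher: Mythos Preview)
Your proposal is correct and follows essentially the same route as the paper: compute the gate gradients as scalar multiples of $z_{t-1}$ with strictly positive weights, then transfer positive definiteness from $\mathbb{E}[z_{t-1}z_{t-1}^\top]$ to the weighted information matrices. The paper's proof is much terser (one sentence), so your contradiction argument, compactness upgrade to a uniform spectral bound, and explicit inverse-function link to local identifiability are all welcome elaborations rather than departures.
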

\begin{proof}
$\partial_\gamma p_t = p_t(1-p_t) z_{t-1}$ and $\partial_\eta \beta_t = -\kappa \beta_t e^{\eta^\top z_{t-1}} z_{t-1}$ are non-degenerate multiples of $z_{t-1}$ on sets of positive probability; hence the corresponding information matrices inherit positive definiteness from $\mathbb{E}[z z^\top]$.
\end{proof}

% =========================================================
% Appendix E — QMLE Consistency and CLT
% =========================================================
\appendix
\section*{Appendix H — QMLE Consistency and CLT}
\label{app:q mle}

Let $\ell_t(\vartheta)$ be the per-period quasi log-likelihood with parameter $\vartheta\in\Theta$ (compact). Define $L_T(\vartheta):=T^{-1}\sum_{t=1}^T \ell_t(\vartheta)$ and $L(\vartheta):=\mathbb{E}[\ell_t(\vartheta)]$.

\begin{assumption}[E1 — regularity]
\leavevmode
\begin{enumerate}\itemsep4pt
\item $\{\ell_t(\vartheta)\}$ is strictly stationary and geometrically $\beta$-mixing under $\vartheta_0$.
\item $\ell_t(\vartheta)$ is continuous in $\vartheta$ a.s.\ and $\sup_{\vartheta\in\Theta}|\ell_t(\vartheta)|$ has finite expectation.
\item Identification: $L(\vartheta)$ has a unique maximizer at $\vartheta_0$.
\end{enumerate}
\end{assumption}

\begin{theorem}[Strong consistency]
\label{thm:E_consistency}
Under Assumption E1, any sequence of maximizers $\hat\vartheta_T\in\arg\max_{\vartheta\in\Theta} L_T(\vartheta)$ satisfies $\hat\vartheta_T\to \vartheta_0$ almost surely.
\end{theorem}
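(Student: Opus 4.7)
The plan is to follow the classical Wald-style consistency argument for extremum estimators: establish a uniform law of large numbers for $L_T$ on the compact parameter space $\Theta$, then invoke the standard argmax continuous mapping under the unique identification condition E1(iii). The first ingredient — pointwise convergence — is immediate: strict stationarity together with geometric $\beta$-mixing (E1(i)) yields ergodicity of $\{\ell_t(\vartheta)\}_{t\in\mathbb{Z}}$ for each fixed $\vartheta$, and the envelope $\mathbb{E}|\ell_t(\vartheta)| \le \mathbb{E}\sup_{\vartheta'\in\Theta}|\ell_t(\vartheta')| < \infty$ from E1(ii) lets Birkhoff's ergodic theorem give $L_T(\vartheta) \to L(\vartheta)$ almost surely for every $\vartheta\in\Theta$.

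Next I would upgrade this to uniform convergence on $\Theta$. Using a.s.\ continuity of $\vartheta \mapsto \ell_t(\vartheta)$ and compactness of $\Theta$, introduce the continuity modulus $\omega_t(\delta) := \sup_{\|\vartheta-\vartheta'\|\le\delta} |\ell_t(\vartheta) - \ell_t(\vartheta')|$. Separability of $\Theta$ makes $\omega_t(\delta)$ measurable; it is dominated by $2\sup_\vartheta|\ell_t(\vartheta)|$ and tends to zero a.s.\ as $\delta\downarrow 0$ by uniform continuity on the compact set. Dominated convergence gives $\mathbb{E}\omega_t(\delta) \to 0$, and covering $\Theta$ with finitely many $\delta$-balls — combining pointwise ergodic convergence at the centers with the modulus bound and letting $\delta\downarrow 0$ — delivers the standard Jennrich/Andrews uniform ergodic theorem $\sup_{\vartheta\in\Theta}|L_T(\vartheta) - L(\vartheta)| \to 0$ almost surely. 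The same dominated-convergence step also yields continuity of $L(\cdot)$ on $\Theta$.

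With the ULLN in hand the argmax conclusion is routine: by continuity and unique maximization E1(iii), for any open neighborhood $U\ni\vartheta_0$ one has $\eta := L(\vartheta_0) - \sup_{\vartheta\in\Theta\setminus U} L(\vartheta) > 0$; uniform convergence then forces $|L_T(\hat\vartheta_T) - L(\hat\vartheta_T)| < \eta/3$ and $|L_T(\vartheta_0) - L(\vartheta_0)| < \eta/3$ eventually, so $\hat\vartheta_T \in U$ for all $T$ large enough almost surely, giving $\hat\vartheta_T \to \vartheta_0$ a.s.

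The main obstacle will be the uniform convergence step — specifically, verifying that the envelope E1(ii) actually holds for the gated recursions. The difficulty is that $\ell_t(\vartheta)$ involves $\log h_t(\vartheta) + r_t^2/h_t(\vartheta)$, so one must control both $h_t(\vartheta)$ from above and $1/h_t(\vartheta)$ from below, uniformly in $\vartheta\in\Theta$; this requires $\Theta$ to sit strictly inside the admissible set so that $\omega$ is bounded away from zero and the contraction coefficient from Appendix~E stays uniformly below one. In the G-FIGARCH case one additionally needs the truncation bias to be $o(1)$ uniformly in $\vartheta$, which follows from Lemma~\ref{lem:kernel} combined with the $L^p$ bound of Proposition~\ref{prop:C_moment}; careful handling near the boundary $d_t\to 1/2$ is the real technical burden, as the fractional weights are only barely summable there.
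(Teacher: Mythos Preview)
Your proposal is correct and follows essentially the same approach as the paper: both establish a uniform law of large numbers on the compact set $\Theta$ (the paper cites an Andrews-type ULLN for mixing arrays, while you spell out the standard Jennrich modulus-of-continuity covering argument) and then apply the argmax continuity theorem under the unique-maximizer condition E1(iii). Your final paragraph on verifying the envelope E1(ii) for the gated recursions is useful commentary but strictly beyond the scope of this theorem, which takes E1 as given.
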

\begin{proof}
Geometric mixing implies a uniform law of large numbers (ULLN):
$\sup_{\vartheta\in\Theta}|L_T(\vartheta)-L(\vartheta)|\to 0$ a.s. 
(see e.g., Andrews (1992)-type ULLN for mixing arrays).
By the argmax continuity theorem on compact sets with identification, 
$\hat\vartheta_T\to \vartheta_0$ a.s.
\end{proof}

For asymptotics, assume differentiability and moment bounds:

\begin{assumption}[E2 — differentiability and moments]
\leavevmode
\begin{enumerate}\itemsep4pt
\item $\ell_t(\vartheta)$ is twice continuously differentiable in a neighborhood of $\vartheta_0$ with 
$\mathbb{E}\sup_{\vartheta}\|\nabla \ell_t(\vartheta)\|<\infty$,
$\mathbb{E}\sup_{\vartheta}\|\nabla^2 \ell_t(\vartheta)\|<\infty$.
\item The score is a martingale difference: 
$\mathbb{E}[\nabla \ell_t(\vartheta_0)\mid \mathcal{F}_{t-1}]=0$ a.s.
\item Information regularity: $I:=\mathbb{E}[-\nabla^2 \ell_t(\vartheta_0)]$ and 
$J:=\mathbb{E}[\nabla \ell_t(\vartheta_0)\nabla \ell_t(\vartheta_0)^\top]$ exist and are finite, with $I$ positive definite.
\end{enumerate}
\end{assumption}

\begin{theorem}[Asymptotic normality with sandwich covariance]
\label{thm:E_CLT}
Under Assumptions E1–E2,
\[
\sqrt{T}\,(\hat\vartheta_T-\vartheta_0)\ \Rightarrow\ 
\mathcal{N}\!\left(0,\ I^{-1} J I^{-1}\right).
\]
\end{theorem}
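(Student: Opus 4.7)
The plan is to follow the classical three–step QMLE argument: a mean–value expansion of the score around $\vartheta_0$, a martingale CLT for the normalized score, and a ULLN for the Hessian, combined via Slutsky's theorem. First I would exploit the interior consistency result of Theorem~\ref{thm:E_consistency} (so that $\hat\vartheta_T$ lies in the interior of $\Theta$ eventually) together with the twice–continuous differentiability from Assumption~E2(i) to write the first–order condition $\nabla L_T(\hat\vartheta_T)=0$ and apply a componentwise mean–value expansion, yielding
\[
0 \;=\; \nabla L_T(\vartheta_0) \;+\; H_T(\tilde\vartheta_T)\,(\hat\vartheta_T-\vartheta_0),
\]
where $H_T(\vartheta):=T^{-1}\sum_{t=1}^T \nabla^2 \ell_t(\vartheta)$ and $\tilde\vartheta_T$ is an intermediate point on the segment between $\hat\vartheta_T$ and $\vartheta_0$. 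Inverting (for $T$ large, once $H_T(\tilde\vartheta_T)$ is nonsingular) gives
\[
\sqrt{T}\,(\hat\vartheta_T-\vartheta_0) \;=\; -\,H_T(\tilde\vartheta_T)^{-1}\,\sqrt{T}\,\nabla L_T(\vartheta_0).
\]

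Next I would handle the two factors separately. For the score, Assumption~E2(ii) makes $\{\nabla \ell_t(\vartheta_0),\mathcal{F}_t\}$ a stationary martingale difference sequence with covariance $J$, and finite $(2+\delta)$–moments of $\nabla \ell_t$ follow from the inherited $(4+\delta)$ moment bound on $\varepsilon_t$ in Assumption~\ref{ass:qmble}(v) combined with the explicit score representations derived in Section~\ref{subsec:score}; geometric $\beta$–mixing from Section~\ref{subsec:mixing} verifies the Lindeberg–type conditional variance condition, so the Billingsley–Brown martingale CLT yields
\[
\sqrt{T}\,\nabla L_T(\vartheta_0) \;\Rightarrow\; \mathcal{N}(0,J).
\]
For the Hessian, I would establish a uniform law of large numbers on a compact neighborhood $N$ of $\vartheta_0$: combining the dominated–gradient bound in Assumption~E2(i) with continuity and geometric ergodicity (Section~\ref{subsec:mixing}), a standard bracketing or Andrews–type ULLN for stationary mixing sequences gives $\sup_{\vartheta\in N}\|H_T(\vartheta)-\mathbb{E}\nabla^2\ell_t(\vartheta)\|\to 0$ a.s. Consistency then forces $\tilde\vartheta_T\to \vartheta_0$, continuity of $\vartheta\mapsto\mathbb{E}\nabla^2\ell_t(\vartheta)$ yields $H_T(\tilde\vartheta_T)\to -I$ in probability, and positive–definiteness of $I$ ensures eventual invertibility with $H_T(\tilde\vartheta_T)^{-1}\to -I^{-1}$. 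A single application of Slutsky's theorem to the product in the displayed identity delivers the stated sandwich limit $\mathcal{N}(0,I^{-1}JI^{-1})$.

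The main obstacle is the uniform convergence of the Hessian in a neighborhood of $\vartheta_0$ once the recursion for $\nabla^2 h_t$ is itself driven by the fitted $h_t$–path. In the RSM and G–Clock cases this reduces to verifying an integrable envelope for the second derivatives of the explicitly recursive $\nabla h_t$ formulas of Section~\ref{subsec:score}, which follows from $\sup_t \alpha_t+\Psi_t<1$ and finite fourth moments of $h_t$ via Proposition~\ref{prop:C_moment}. The delicate case is G–FIGARCH, where the infinite fractional kernel must be truncated at $K_T$: one must show that the truncation remainder in both the score and Hessian is $o_p(T^{-1/2})$, uniformly over $N$. This is handled by the summability bound $\sum_{k>K_T}\sup_t|\pi_k(d_t)|=o(T^{-1/2})$ from Assumption~\ref{ass:qmble}(vi) together with the digamma–based derivative bounds, which ensures the truncation bias does not perturb the first–order asymptotic distribution. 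With that tail control in place, the rest of the argument is standard.
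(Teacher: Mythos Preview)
Your proposal is correct and follows essentially the same route as the paper's proof: a mean-value (Taylor) expansion of the first-order condition, a martingale CLT for the normalized score at $\vartheta_0$, a ULLN for the Hessian at the intermediate point yielding convergence to $-I$, and Slutsky's theorem to combine them. Your additional discussion of the G--FIGARCH truncation remainder goes beyond what the paper's brief appendix proof actually supplies, though note that Assumption~\ref{ass:qmble}(vi) only states the tail sum is $o(1)$, not the $o(T^{-1/2})$ rate you invoke.
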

\begin{proof}[Proof (details)]
A second-order Taylor expansion of $T^{-1}\nabla \ell_T(\hat\vartheta_T)$ around $\vartheta_0$ gives
\[
0 = T^{-1}\sum_{t=1}^T \nabla \ell_t(\vartheta_0)
+ \left[ T^{-1}\sum_{t=1}^T \nabla^2 \ell_t(\bar\vartheta_T) \right] (\hat\vartheta_T-\vartheta_0),
\]
where $\bar\vartheta_T$ lies on the segment between $\hat\vartheta_T$ and $\vartheta_0$.
Multiply by $\sqrt{T}$ and rearrange:
\[
\sqrt{T}(\hat\vartheta_T-\vartheta_0)
= - \left[ T^{-1}\sum_{t=1}^T \nabla^2 \ell_t(\bar\vartheta_T) \right]^{-1}
\cdot \frac{1}{\sqrt{T}}\sum_{t=1}^T \nabla \ell_t(\vartheta_0).
\]
By E1–E2 and ULLN, $T^{-1}\sum \nabla^2 \ell_t(\bar\vartheta_T)\xrightarrow{p} -I$.
The score sum is a martingale array with conditional mean zero and finite conditional variance; applying a martingale CLT (e.g., Hall \& Heyde), 
\(
T^{-1/2}\sum \nabla \ell_t(\vartheta_0) \Rightarrow \mathcal{N}(0,J).
\)
Slutsky’s theorem yields the claimed limit with covariance $I^{-1}JI^{-1}$.
\end{proof}

\begin{remark}[Plug-in covariance]
A consistent estimator is 
$\widehat{\mathrm{Var}}(\hat\vartheta_T) = \widehat{I}^{-1}\widehat{J}\widehat{I}^{-1}$ with 
$\widehat{I}=T^{-1}\sum -\nabla^2\ell_t(\hat\vartheta_T)$ and 
$\widehat{J}=T^{-1}\sum \nabla \ell_t(\hat\vartheta_T)\nabla \ell_t(\hat\vartheta_T)^\top$.
\end{remark}

% =========================================================
% Appendix F — Spectral–Time Equivalence
% =========================================================
\appendix
\section*{Appendix I — Spectral–Time Equivalence}
\label{app:spectral_time}

Let $\psi_k=\int_{k-1}^k f(u)\,du$ and $\gamma_j=\sum_{k\in\mathbb{Z}}\psi_k\psi_{k+j}$ with the convention $\psi_k=0$ for $k\le 0$.

\begin{proposition}[Discrete Parseval via step embedding]
\label{prop:F_parseval}
If $f\in L^2(\mathbb{R}_+)$, then 
\[
\int_{-\pi}^{\pi} S_f(\lambda)\,d\lambda
= 2\pi \sum_{j\in\mathbb{Z}} \gamma_j
= 2\pi \sum_{k\ge 1}\psi_k^2
= 2\pi \int_0^\infty f(u)^2\,du.
\]
\end{proposition}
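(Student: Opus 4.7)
The plan is to prove the identity through the standard Plancherel correspondence for $\ell^2$ sequences on the unit circle, combined with the step--function embedding from Remark~\ref{rem:discrete-embedding}. First I would check that the sequence $\{\psi_k\}$ lies in $\ell^2$: by Cauchy--Schwarz on each unit interval, $\psi_k^2 \le \int_{k-1}^{k} f(u)^2\,du$, so $\sum_k \psi_k^2 \le \|f\|_{L^2}^2 < \infty$. This guarantees that the one--sided transfer function $\Psi(e^{-i\lambda}) := \sum_{k\ge 1} \psi_k e^{-ik\lambda}$ belongs to $L^2([-\pi,\pi])$ and therefore admits an honest Fourier representation.

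Next I would exploit the standard identification $S_f(\lambda) = |\Psi(e^{-i\lambda})|^2 = \sum_{j\in\mathbb{Z}} \gamma_j e^{-ij\lambda}$, where the right--hand side is the Fourier series of a nonnegative $L^1$ function and the coefficients $\gamma_j = \sum_k \psi_k \psi_{k+j}$ are the deterministic autocorrelations of $\{\psi_k\}$. Integration against $d\lambda$ on $[-\pi,\pi]$ isolates the zero--frequency Fourier coefficient, yielding $\int_{-\pi}^{\pi} S_f(\lambda)\,d\lambda = 2\pi \gamma_0 = 2\pi \sum_{k\ge 1} \psi_k^2$. (The expression $2\pi\sum_j \gamma_j$ in the statement is most naturally read as the contribution at $j=0$; literally summing $\gamma_j$ over all $j$ would give $S_f(0)$, not its integral, so either the stated middle term collapses to $2\pi\gamma_0$ or the chain should be rewritten with $\gamma_0$ in place of $\sum_j\gamma_j$.) This delivers three of the four expressions in the chain; the intermediate summation/integration swap is justified by $\{\psi_k\}\in\ell^2$ and dominated convergence.

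The final equality $\sum_k \psi_k^2 = \int_0^\infty f(u)^2\,du$ is the main obstacle, because it is in general false: Cauchy--Schwarz only gives $\sum_k \psi_k^2 \le \int_0^\infty f^2$, with equality if and only if $f$ is almost everywhere constant on every integer interval $[k-1,k)$. I would therefore restrict the claim to the setting of Remark~\ref{rem:discrete-embedding}, where $f$ is by construction the step function $\sum_k \psi_k\,\mathbf{1}_{[k-1,k)}$, so that $\int_{k-1}^{k} f^2 = \psi_k^2$ and the identity holds termwise. This matches the title of the proposition and the broader discrete--embedding apparatus used throughout Section~\ref{sec:canonical-decomposition}. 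For a fully general $f\in L^2(\mathbb{R}_+)$ the proposition should be stated as an inequality, or the reader should be directed to interpret $\psi_k$ as the coefficients of the orthogonal projection of $f$ onto the span of $\{\mathbf{1}_{[k-1,k)}\}$, in which case Parseval restores equality on that subspace and the discrepancy is absorbed into the orthogonal complement.
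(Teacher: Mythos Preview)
Your approach mirrors the paper's: invoke discrete Parseval for the first links, then the step-function identification for the last. Your two caveats are sharper than the paper's own treatment. The paper simply asserts $\int S_f = 2\pi\sum_j \gamma_j = 2\pi\sum_k\psi_k^2$ as ``standard discrete-time Parseval relations'' without addressing that literally $\sum_j\gamma_j = S_f(0)=\bigl(\sum_k\psi_k\bigr)^2$, not $(2\pi)^{-1}\int S_f$; your reading that the middle term must collapse to $2\pi\gamma_0$ is the correct one. For the final equality, the paper argues that the step function $s(u)=\sum_k\psi_k\mathbf{1}_{[k-1,k)}$ satisfies $\|s\|_2^2=\sum_k\psi_k^2$ and then claims $\|s-f\|_{L^2}\to 0$ ``as we refine the partition (mesh size $1$ is fixed but $f$ is replaced by its unit-step average)''---a circular and ill-posed limit that does not yield equality for general $f\in L^2$. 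Your resolution---restrict to the step embedding of Remark~\ref{rem:discrete-embedding}, where $f$ \emph{is} $s$ and the identity holds termwise---is the correct scope and matches how the result is actually used downstream.
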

\begin{proof}
The equalities $\int S_f=2\pi\sum \gamma_j=2\pi\sum \psi_k^2$ are standard discrete-time Parseval relations for linear filters with impulse response $(\psi_k)$. For the last equality, note that the step function $s(u)=\sum_{k\ge 1} \psi_k \mathbf{1}_{[k-1,k)}(u)$ satisfies 
$\|s\|_{L^2}^2=\sum_k \psi_k^2$ and 
$\|s-f\|_{L^2}\to 0$ as we refine the partition (mesh size $1$ is fixed but $f$ is replaced by its unit-step average). Since $L^2$ is complete, $\sum_k\psi_k^2=\int f^2$.
\end{proof}

\begin{corollary}[Low-frequency equivalence for power-law shapes]
\label{cor:F_powerlaw}
If $g(u)\propto u^{-(1+d)}$ with $d\in(0,1/2)$, then 
$\gamma_j \sim C j^{2d-1}$ and $S_g(\lambda)\sim C'\lambda^{-2d}$ as $\lambda\downarrow 0$.
\end{corollary}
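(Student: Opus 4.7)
The strategy is to reduce the corollary to standard Abelian--Tauberian theory for regularly varying sequences, using the step--function embedding of Proposition~\ref{prop:F_parseval} as the bridge between the continuous kernel $g$ and the discrete spectral representation on $[-\pi,\pi]$.

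First I would extract the step--integrated weights $\varphi_k := \int_{k-1}^{k} g(u)\,du$ and transfer the tail behavior of $g$ into a statement about $\varphi_k$. From $g(u)\propto u^{-(1+d)}$ on $[u_0,\infty)$ and the eventual monotonicity of such a power law, the mean--value theorem gives $\varphi_k\sim C\,k^{-(1+d)}$ as $k\to\infty$, so that the sequence $(\varphi_k)$ is regularly varying with index $-(1+d)$. This discretization places the problem into the regularly--varying framework of Bingham--Goldie--Teugels in exactly the form used for long--memory processes in the econometric literature, so that the remaining analysis is purely about the $k\to\infty$ tail.

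For the spectral half of the conclusion, I would invoke Karamata's Abelian theorem applied to the transfer function $\widehat\varphi(\lambda) = \sum_{k\ge 1}\varphi_k e^{-ik\lambda}$: the regularly varying tail of $\varphi_k$ produces a $\lambda^{d}$--type singular increment of $\widehat\varphi$ at the origin, and hence $S_g(\lambda) = |\widehat\varphi(\lambda)|^{2} \sim C'\lambda^{-2d}$ as $\lambda\downarrow 0$. For the autocovariance half, the converse Tauberian theorem (valid because $2d\in(0,1)$ and the spectral prefactor is slowly varying) translates the $\lambda^{-2d}$ singularity back into the time domain to yield $\gamma_j \sim C''\,j^{2d-1}$. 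The Parseval identity of Proposition~\ref{prop:F_parseval} provides a consistency check that ties the multiplicative constants $C'$ and $C''$ to the normalization of $g$.

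The main obstacle will be verifying the Tauberian hypotheses cleanly. The classical statements require either monotonicity of $\varphi_k$ or quasi--monotonicity of $\gamma_j$ on a tail, together with slow variation of the prefactor near the origin. Because $g$ is specified only through its tail and may be arbitrary on $[0,u_0]$, I would split $g = g_{[0,u_0]} + g_{[u_0,\infty)}$: the first piece is bounded and $L^1$, contributing only a smooth $O(1)$ term to $\widehat\varphi$ and an absolutely summable piece to $\gamma_j$ that is dominated by the $j^{2d-1}$ tail. A dominated--convergence argument on the tail part then isolates the singular contribution and delivers both asymptotics with the correct constants.
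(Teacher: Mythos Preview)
Your plan follows the same Karamata Abelian/Tauberian route that the paper's one-line proof gestures at, but the detail you supply exposes a step that does not go through. With $\varphi_k\sim Ck^{-(1+d)}$ and $d\in(0,1/2)$ the sequence $(\varphi_k)$ is \emph{absolutely summable}, so $\widehat\varphi(\lambda)=\sum_{k\ge1}\varphi_k e^{-ik\lambda}$ is continuous on $[-\pi,\pi]$ with $\widehat\varphi(0)=\sum_k\varphi_k\in(0,\infty)$. The Abelian theorem you invoke indeed gives a H\"older-type correction $\widehat\varphi(\lambda)-\widehat\varphi(0)\sim c\,\lambda^{d}$, but squaring a function that converges to a nonzero finite limit cannot yield the pole $\lambda^{-2d}$: you obtain $|\widehat\varphi(\lambda)|^2\to|\widehat\varphi(0)|^2<\infty$, not $|\widehat\varphi(\lambda)|^2\sim C'\lambda^{-2d}$. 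The classical long-memory pair $S(\lambda)\sim\lambda^{-2d}$ and $\gamma_j\sim j^{2d-1}$ arises when the filter weights decay like $k^{d-1}$ (non-summable, as for the expansion of $(1-L)^{-d}$), not like $k^{-(1+d)}$.

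The Tauberian half then fails for the same reason: it is predicated on a spectral pole that is not there. A direct convolution calculation $\gamma_j=\sum_{k\ge1}\varphi_k\varphi_{k+j}$ with $\varphi_k\sim Ck^{-(1+d)}$ gives $\gamma_j\sim C''\,j^{-(1+d)}$ (summable), which is inconsistent with the stated target $j^{2d-1}$. The paper's proof simply cites ``Karamata-type Abelian/Tauberian results'' without writing out this intermediate step, so it does not resolve the tension either; before continuing you should check whether the intended hypothesis is $g(u)\propto u^{-(1-d)}$ (which would deliver the claimed exponents but lies outside the admissible class $\mathcal{G}$) or whether $S_g$ is meant to denote a different spectral object than $|\widehat\varphi|^2$.
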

\begin{proof}
Karamata-type Abelian/Tauberian results for slowly varying sequences yield the asymptotics for $\psi_k$ and hence for $\gamma_j$; Fourier inversion near $0$ gives the spectral slope $-2d$.
\end{proof}

% =========================================================
% Appendix G — Unified Gate Stability
% =========================================================
\appendix
\section*{Appendix J — Unified Gate Stability}
\label{app:gate_stability}

Consider the unified gated recursion
\[
h_t=\omega+\alpha_t\varepsilon_{t-1}^2+\Psi_t h_{t-1}
+\sum_{k\ge 1}\Pi_{t,k}\,(\varepsilon_{t-k}^2-h_{t-k}),
\qquad \alpha_t,\Psi_t,\Pi_{t,k}\ge 0.
\]

\begin{assumption}[G1 — gate regularity]
\leavevmode
\begin{enumerate}\itemsep4pt
\item $(\alpha_t,\Psi_t,\{\Pi_{t,k}\}_k)$ are $\mathcal{F}_{t-1}$-measurable and strictly positive with $\sum_k \mathbb{E}[\Pi_{t,k}]<\infty$.
\item $\mathbb{E}[\log(\alpha_t+\Psi_t)]<0$.
\item $(\varepsilon_t)$ has a density positive on compacts and $\mathbb{E}|\varepsilon_t|^{4+\delta}<\infty$ for some $\delta>0$.
\end{enumerate}
\end{assumption}

\begin{theorem}[Geometric ergodicity and bounded second moments]
\label{thm:G_ergodic}
Under Assumption G1, the Markov chain $(h_t)$ on $\mathbb{R}_+$ is 
$\psi$-irreducible, aperiodic, geometrically ergodic, and 
$\sup_t \mathbb{E}[h_t^2]<\infty$.
\end{theorem}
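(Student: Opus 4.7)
}
The strategy is the classical Meyn--Tweedie drift--plus--minorization program, adapted to the random--coefficient affine recursion with an infinite memory kernel. First I would recast the problem in a finite--dimensional Markovian form by working with the $K$--truncated system defined in Section~\ref{subsec:mixing}, i.e.\ the state vector $X_t^{(K)}=(h_t,\ldots,h_{t-K+1},\varepsilon_{t-1},\ldots,\varepsilon_{t-K})$ living on the Polish space $\mathsf{S}_K=(0,\infty)^K\times\mathbb{R}^K$. Under Assumption G1(i), the truncated system is time--homogeneous Markov, and the full system arises as the limit as $K\to\infty$; the tail kernel has vanishing $L^1$--norm by $\sum_k\mathbb{E}[\Pi_{t,k}]<\infty$, so limit theorems transfer from the truncated to the full chain by the approximation argument of Appendix~E and standard tightness.

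Next I would verify $\psi$--irreducibility and aperiodicity. By G1(iii) the innovation density is positive on compacts, so for any open rectangle $B\subset\mathsf{S}_K$ the one--step transition kernel assigns positive probability to $B$ from every compact initial condition. This yields a small--set (hence petite set) structure on compact subsets of $\mathsf{S}_K$ and gives the minorization $P(x,\cdot)\ge \varepsilon\,\nu(\cdot)$ needed for the Nummelin split. The positive--density assumption is crucial because it is the only mechanism that provides the required uniform lower bound on compacts.

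The key step is the Foster--Lyapunov drift. Using the quadratic test function $V(x)=1+\sum_{j=0}^{K-1}h_{t-j}^2$, I would expand $\mathbb{E}[V(X_t^{(K)})\mid\mathcal{F}_{t-1}]$ via the recursion, apply Minkowski to the infinite kernel (bounded uniformly in $t$ by G1(i)), and bound the squared terms using $\mathbb{E}|\varepsilon_t|^{4+\delta}<\infty$ from G1(iii). The log--moment condition $\mathbb{E}[\log(\alpha_t+\Psi_t)]<0$ of G1(ii) is then invoked via a Bougerol--Picard style argument: iterating the recursion $n$ times yields a product of random contraction factors whose expectation decays exponentially by the subadditive ergodic theorem. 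Combined with absolute summability of $\{\Pi_{t,k}\}$, this delivers the geometric drift
\[
\mathbb{E}[V(X_t^{(K)})\mid X_{t-1}^{(K)}]\le \lambda\,V(X_{t-1}^{(K)})+b,
\qquad \lambda\in(0,1),\ b<\infty,
\]
uniformly in $K$. Invoking Theorem~15.0.1 of Meyn--Tweedie then gives geometric ergodicity of the truncated chain, and the uniform--in--$K$ drift together with monotone convergence yields $\sup_t\mathbb{E}[h_t^2]<\infty$ for the limit chain.

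The principal obstacle is reconciling the \emph{multiplicative} log--moment condition in G1(ii) with the need for an \emph{additive} Lyapunov contraction on $V(h)=1+h^2$. A naive linearization would instead require $\mathbb{E}[(\alpha_t+\Psi_t)^2]<1$, which is strictly stronger. The resolution is to iterate the recursion $n_0$ times before applying Jensen, where $n_0$ is chosen so that $\mathbb{E}\prod_{j=1}^{n_0}(\alpha_{t-j}+\Psi_{t-j})^2<1$; this is guaranteed for $n_0$ large enough by the log--moment assumption combined with the $(4+\delta)$--moment bound on $\varepsilon_t$ via Hölder. This $n_0$--step drift then yields the one--step drift on a thinned chain, from which standard arguments recover geometric ergodicity and the finite second--moment bound for the original chain, completing the proof.
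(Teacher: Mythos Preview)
Your overall architecture (Foster--Lyapunov drift plus small--set minorization, then Meyn--Tweedie) matches the paper, and your treatment of irreducibility/aperiodicity via the positive innovation density is essentially the same. Where you diverge is in the choice of Lyapunov function and in how the log--moment condition is deployed. The paper does \emph{not} pass to a $K$--truncated vector state for the drift; it works directly with the scalar $h_t$ and the \emph{linear} test function $V(h)=1+h$. The long--memory sum is handled by the elementary bound $\mathbb{E}|\varepsilon_{t-k}^2-h_{t-k}|\le c_1+c_2 h_{t-k}$, which, together with $\sum_k\mathbb{E}[\Pi_{t,k}]<\infty$, lets it be absorbed into a linear term in $V(h_{t-1})$. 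The contraction rate is then obtained in one step as $\rho=\exp\{\mathbb{E}\log(\alpha_t+\Psi_t)\}<1$ via Jensen. Only for the second--moment claim does the paper upgrade to $V(h)=1+h+h^2$, invoking $\mathbb{E}|\varepsilon_t|^{4+\delta}<\infty$. Your route through a quadratic $V$ from the outset is more ambitious but, as you correctly diagnose, runs into a genuine obstacle.

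The resolution you propose for that obstacle does not go through. You claim that $\mathbb{E}\bigl[\prod_{j=1}^{n_0}(\alpha_{t-j}+\Psi_{t-j})^2\bigr]<1$ for some $n_0$ follows from G1(ii) together with the $(4+\delta)$--moment of $\varepsilon_t$ ``via H\"older.'' But the gates $(\alpha_t,\Psi_t)$ are $\mathcal{F}_{t-1}$--measurable functionals of $z_{t-1}$, not of $\varepsilon_t$, so the innovation moment gives you no leverage on them. More fundamentally, $\mathbb{E}[\log X]<0$ is the \emph{top} Lyapunov exponent condition and is strictly weaker than the second--moment Lyapunov condition you need. A concrete counterexample: take i.i.d.\ gates with $\mathbb{P}(\alpha+\Psi=2)=\tfrac13$, $\mathbb{P}(\alpha+\Psi=\tfrac12)=\tfrac23$; then $\mathbb{E}\log(\alpha+\Psi)=-\tfrac13\log 2<0$, yet $\mathbb{E}(\alpha+\Psi)^2=\tfrac32>1$ and $\mathbb{E}\prod_{j=1}^{n_0}(\alpha+\Psi)^2=(3/2)^{n_0}\to\infty$. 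So no finite $n_0$ works, and your $n_0$--step drift argument collapses. To close the gap you would need either an explicit second--moment condition on the gates (which G1 does not provide) or to follow the paper and establish geometric ergodicity first with the linear $V$, then bootstrap to second moments separately.
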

\begin{proof}[Proof (Foster–Lyapunov drift with minorization)]
Let $V(h)=1+h$. Then conditionally on $\mathcal{F}_{t-1}$,
\[
\mathbb{E}[V(h_t)\mid \mathcal{F}_{t-1}]
\le 1+\omega + (\alpha_t+\Psi_t)V(h_{t-1})
+ \sum_{k\ge 1}\Pi_{t,k}\,\mathbb{E}\big[|\varepsilon_{t-k}^2-h_{t-k}|\mid\mathcal{F}_{t-1}\big].
\]
Using $\mathbb{E}|\varepsilon^2-h|\le c_1+c_2 h$ for some constants (by triangle inequality and $\mathbb{E}\varepsilon^2=1$), absorb the sum into a linear term in $V(h_{t-1})$ since $\sum_k \Pi_{t,k}$ is integrable. Taking expectations and Jensen on the log term gives
\[
\mathbb{E}[V(h_t)] \le c_0 + \rho\,\mathbb{E}[V(h_{t-1})],\qquad 
\rho:=\exp\{\mathbb{E}\log(\alpha_t+\Psi_t)\}<1.
\]
Thus a drift condition holds outside compacts. By the positive density of $\varepsilon_t$, a standard small-set minorization holds on $[0,H]$ for some $H>0$, ensuring $\psi$-irreducibility and aperiodicity. The Meyn–Tweedie theorem then yields geometric ergodicity; bounded second moments follow from the drift with $V(h)=1+h+h^2$ (using $\mathbb{E}\varepsilon^{4+\delta}<\infty$).
\end{proof}

% =========================================================
% Appendix H — Proof of Lemma (Kernel Summability) — expanded
% =========================================================
\appendix
\section*{Appendix K — Proof of Lemma~\ref{lem:kernel}}
\label{app:H_kernel}

\paragraph{Goal.} Show $|\pi_k(d)|=O(k^{-1-d})$ and $\sum_{k=1}^\infty |\pi_k(d_t)|<\infty$ uniformly if $d_t\le \bar d<1/2$.

\paragraph{Asymptotics.}
Using $\binom{d}{k}=\Gamma(d+1)/(\Gamma(k+1)\Gamma(d-k+1))$ and Stirling’s formula for large $k$,
\[
\binom{d}{k} \sim \frac{k^{-1-d}}{\Gamma(-d)}.
\]
Hence $|\pi_k(d)|=O(k^{-1-d})$. If $d<1/2$ then $1+d>1$ and $\sum k^{-(1+d)}<\infty$.
Monotone convergence gives a uniform bound for partial sums when $d_t\le \bar d<1/2$.

% =========================================================
% Appendix I — Proof of Theorem (Finite Second Moment) — expanded
% =========================================================
\appendix
\section*{Appendix L — Proof of Theorem~\ref{thm:gf_second}}
\label{app:I_second_moment}

\paragraph{Truncation and tail control.}
Define
\[
h_t^{(K)}=\omega+\alpha\varepsilon_{t-1}^2+\beta h_{t-1}^{(K)}
+\sum_{k=1}^K \pi_k(d_t)\big(\varepsilon_{t-k}^2-h_{t-k}^{(K)}\big) + R_{t,K},
\]
where $R_{t,K}:=\sum_{k>K}\pi_k(d_t)(\varepsilon_{t-k}^2-h_{t-k})$.
By Lemma~\ref{lem:kernel}, $\rho_K:=\sup_t\sum_{k>K}|\pi_k(d_t)|\downarrow 0$.

\paragraph{$L^2$-inequality.}
By triangle and Minkowski,
\[
\|h_t^{(K)}\|_2 
\le \omega + \alpha\|\varepsilon_{t-1}^2\|_2 + \beta \|h_{t-1}^{(K)}\|_2
+ \Big(\sum_{k=1}^K|\pi_k(d_t)|\Big)\big(\|\varepsilon_{t-k}^2\|_2+\|h_{t-k}^{(K)}\|_2\big)
+ \|R_{t,K}\|_2.
\]
Bound $\|\varepsilon^2\|_2$ by a constant $C_\varepsilon$ (finite fourth moment). Set
$B_t^{(K)}:=\sup_{s\le t}\|h_s^{(K)}\|_2$ and note 
$\|R_{t,K}\|_2\le \rho_K (C_\varepsilon + \sup_{s<t}\|h_s\|_2)\le \rho_K (C_\varepsilon + B_t^{(K)})$.
Let $C_d:=\sup_t \sum_{k=1}^K |\pi_k(d_t)|$. Then
\[
B_t^{(K)}
\le c_0 + (\beta + C_d) B_{t-1}^{(K)} + \rho_K(C_\varepsilon + B_t^{(K)}),
\]
with $c_0=\omega + \alpha C_\varepsilon + C_d C_\varepsilon$.
Rearranging,
\[
(1-\rho_K) B_t^{(K)} \le c_0 + (\beta + C_d) B_{t-1}^{(K)} + \rho_K C_\varepsilon.
\]
For $K$ large, $1-\rho_K>\tfrac12$. Iteration yields
$B_t^{(K)}\le \frac{2(c_0+\rho_K C_\varepsilon)}{1-(\beta+C_d)}$ provided $(\beta+C_d)<1$.
Letting $K\to\infty$ and using $\rho_K\to 0$, we obtain a uniform bound for $\|h_t\|_2$, hence $\mathbb{E}[h_t^2]<\infty$.
\qed

% =========================================================
% Appendix J — Proof of Proposition (G-Clock Geometric Ergodicity) — expanded
% =========================================================
\appendix
\section*{Appendix M — Proof of Proposition~\ref{prop:gclock_geo}}
\label{app:J_gclock}

\paragraph{Lyapunov function and drift.}
Let $V(h)=1+h$. From $h_t=\omega+\alpha_t\varepsilon_{t-1}^2+\beta_t h_{t-1}$,
\[
\mathbb{E}[V(h_t)\mid \mathcal{F}_{t-1}]
=1+\omega+\alpha_t\mathbb{E}[\varepsilon_{t-1}^2\mid\mathcal{F}_{t-1}]
+ \beta_t h_{t-1}
\le 1+\omega + (\alpha_t+\beta_t) V(h_{t-1}),
\]
since $\mathbb{E}[\varepsilon^2]=1$.

\paragraph{Averaged drift via log.}
Taking expectations and using Jensen for the concave $\log$,
\[
\mathbb{E}[V(h_t)] \le c_0 + \exp\big(\mathbb{E}[\log(\alpha_t+\beta_t)]\big)\, \mathbb{E}[V(h_{t-1})],
\]
with $c_0=1+\omega$ and $\rho:=\exp(\mathbb{E}\log(\alpha_t+\beta_t))<1$ by assumption.

\paragraph{Minorization and conclusion.}
Because $\varepsilon_t$ has a density positive on compacts, there exists $H>0$ and $\epsilon>0$ such that for all $h\in[0,H]$, the transition kernel dominates a nontrivial measure; hence $[0,H]$ is a small set. The drift plus small-set minorization implies geometric ergodicity by Meyn–Tweedie. 
\qed

\end{document}